\documentclass{article}

\usepackage[preprint]{neurips_2026}

\usepackage[utf8]{inputenc}
\usepackage[T1]{fontenc}
\usepackage{hyperref}
\usepackage{url}
\usepackage{booktabs}
\usepackage{amsfonts}
\usepackage{amsmath, amssymb, amsthm}
\usepackage{mathtools}
\usepackage{nicefrac}
\usepackage{xcolor}
\usepackage{graphicx}
\usepackage{subcaption}
\usepackage{tikz}
\usepackage{algorithm}
\usepackage{algorithmic}
\usetikzlibrary{arrows.meta, positioning, shapes.geometric, fit, backgrounds}

\graphicspath{{figures/}}

\newtheorem{theorem}{Theorem}
\newtheorem{proposition}[theorem]{Proposition}
\newtheorem{lemma}[theorem]{Lemma}
\newtheorem{corollary}[theorem]{Corollary}
\newtheorem{remark}[theorem]{Remark}
\theoremstyle{definition}
\newtheorem{definition}[theorem]{Definition}
\newtheorem{assumption}[theorem]{Assumption}

\newcommand{\asrr}{\text{ASR-R}}
\newcommand{\asra}{\text{ASR-A}}
\newcommand{\asrt}{\text{ASR-T}}
\newcommand{\isr}{\text{ISR}}
\newcommand{\tpr}{\text{TPR}}
\newcommand{\fpr}{\text{FPR}}
\newcommand{\EE}{\mathbb{E}}
\newcommand{\RR}{\mathbb{R}}
\newcommand{\PP}{\mathbb{P}}
\newcommand{\NN}{\mathbb{N}}
\newcommand{\cM}{\mathcal{M}}
\newcommand{\cH}{\mathcal{H}}
\newcommand{\cP}{\mathcal{P}}
\newcommand{\cQ}{\mathcal{Q}}
\newcommand{\cV}{\mathcal{V}}
\newcommand{\cA}{\mathcal{A}}
\newcommand{\cR}{\mathcal{R}}
\newcommand{\cS}{\mathcal{S}}
\newcommand{\cD}{\mathcal{D}}
\newcommand{\cF}{\mathcal{F}}
\newcommand{\cG}{\mathcal{G}}
\newcommand{\cB}{\mathcal{B}}
\newcommand{\cL}{\mathcal{L}}

\newcommand{\cN}{\mathcal{N}}

\newcommand{\cU}{\mathcal{U}}
\newcommand{\cE}{\mathcal{E}}

\newcommand{\memsad}{\textnormal{\textsc{MemSAD}}}
\newcommand{\agentpoison}{\textsc{AgentPoison}}
\newcommand{\minja}{\textsc{MINJA}}
\newcommand{\injecmem}{\textsc{InjecMEM}}
\newcommand{\trigger}{T}
\newcommand{\passage}{p_{\mathrm{adv}}}
\newcommand{\enc}{E}
\newcommand{\KL}{\mathrm{KL}}
\newcommand{\TV}{\mathrm{TV}}
\DeclareMathOperator*{\argmax}{arg\,max}
\DeclareMathOperator*{\argmin}{arg\,min}

\DeclareMathOperator{\rank}{rank}
\DeclareMathOperator{\Ret}{Ret}

\DeclareMathOperator{\cosim}{cos}

\title{\texorpdfstring{\memsad{}}{MemSAD}: Gradient-Coupled Anomaly Detection for Memory Poisoning in Retrieval-Augmented Agents}

\author{%
  Ishrith Gowda\thanks{Work conducted at the Song Lab, Berkeley AI Research, University of California, Berkeley.}\\
  Department of Electrical Engineering and Computer Sciences\\
  University of California, Berkeley\\
  \texttt{ishrithgowda@berkeley.edu}
}

\begin{document}

\maketitle

\begin{abstract}
Persistent external memory enables LLM agents to maintain context across sessions, yet its security properties remain formally uncharacterized. We formalize memory poisoning attacks on retrieval-augmented agents as a Stackelberg game and present a unified evaluation framework spanning three attack classes with escalating access assumptions. Correcting an evaluation protocol inconsistency relative to the triggered-query specification of \citet{chen2024agentpoison}, we show that faithful evaluation increases measured attack success by $4\times$ (from $\asrr = 0.25$ to $1.00$). Our primary contribution is \memsad{} (\emph{Semantic Anomaly Detection}), a calibration-based defense grounded in a \emph{gradient coupling theorem}: under encoder regularity, the max-cosine anomaly score gradient is provably identical to the retrieval objective gradient (the deployed combined score retains directional coupling, Corollary~\ref{cor:combined_coupling}), so any continuous perturbation that reduces detection risk necessarily degrades retrieval rank. This coupling yields a certified detection radius that guarantees correct classification regardless of adversary strategy. We prove minimax optimality via Le Cam's method, showing that any threshold detector requires $\Omega(1/\rho^2)$ calibration samples and \memsad{} achieves this up to $\log(1/\delta)$ factors. We further derive online regret bounds for rolling calibration with optimal window selection at rate $O(\sigma^{2/3}\Delta^{1/3})$, and formally characterize a discrete synonym-invariance loophole that marks the boundary of what continuous-space defenses can guarantee. Experiments on a $3 \times 5$ attack-defense matrix with bootstrap confidence intervals, Bonferroni-corrected hypothesis tests, and Clopper-Pearson validation (20 trials, $n=1{,}000$) confirm the theory: composite defenses achieve $\tpr = 1.00$, $\fpr = 0.00$ across all attacks, while synonym substitution evades detection at $\Delta\asrr \approx 0$, exposing a concrete gap that existing embedding-based defenses cannot close.
\end{abstract}

\section{Introduction}
\label{sec:intro}

Modern LLM agents persist context across sessions through external memory systems such as Mem0~\citep{mem0}, A-MEM~\citep{amem}, and MemGPT~\citep{memgpt2023}, storing user preferences and factual knowledge as dense vector embeddings.
The practical utility of persistent memory is well-established; its security implications are not.
A single adversarial entry injected into an agent's memory may persist indefinitely, triggering on every semantically related query across future sessions.
Unlike prompt injection, which requires per-interaction access, memory poisoning \emph{persists and scales}: the adversary acts once, and the attack surface compounds with each new query.

Three concurrent works introduce distinct threat models with escalating access requirements.
\agentpoison{}~\citep{chen2024agentpoison} formulates trigger optimization as a constrained gradient problem over DPR~\citep{karpukhin2020dense} embeddings ($\alpha = \textsc{write}$).
\minja{}~\citep{dong2025minja} extends the setting to query-only access via progressive-shortening indication ($\alpha = \textsc{query}$).
\injecmem{}~\citep{injecmem2026} achieves single-interaction injection via retriever-agnostic anchors ($\alpha = \textsc{single}$).
Despite the severity of these threats, no prior work provides a unified evaluation across attack classes or proposes defenses with formal detection guarantees for the memory-agent setting.
Existing RAG defenses~\citep{zou2024poisonedrag, chaudhari2024phantom, xiang2024robustrag} assume corpus-level access and offline cleaning, neither of which applies to the streaming ingestion model of agent memory.

\paragraph{Contributions.}
\textbf{(1)}~\emph{Formal threat model and \memsad{} defense.} Stackelberg game formulation with a gradient coupling theorem establishing that monotone retrieval-score detectors are necessary and sufficient for continuous-evasion resistance, a certified detection radius providing checkable per-entry guarantees, and minimax optimality via Le Cam's method.
\textbf{(2)}~\emph{Synonym-invariance analysis.} Formal characterization of the discrete loophole where gradient coupling breaks down, and \memsad{}+: a combined semantic-lexical detector that partially closes it via character n-gram features.
\textbf{(3)}~\emph{Persistent threat analysis and real-system validation.} Compound exposure analysis reframing attack severity for persistent memory, tool-use agent evaluation with GPT-4o-mini ($\asra = 0.48$), and Mem0 production validation.
\textbf{(4)}~\emph{Rigorous evaluation.} A $3 \times 5$ attack-defense matrix with Bonferroni-corrected hypothesis testing, bootstrap CIs, and Clopper-Pearson validation demonstrating composite portfolios achieve $\tpr = 1.00$, $\fpr = 0.00$ across all attacks.

\section{Related work}
\label{sec:related}

\paragraph{Memory poisoning attacks.}
\citet{chen2024agentpoison} report $\asrr \geq 0.80$ via trigger optimization over DPR~\citep{karpukhin2020dense} embeddings under the triggered-query protocol, requiring write access to the memory store.
\citet{dong2025minja} exploit auto-storage via progressive-shortening indication ($\isr = 98.2\%$) under query-only access.
\citet{injecmem2026} achieve single-interaction injection via retriever-agnostic anchors, the weakest attacker assumption.
\citet{memorygraft2025} demonstrate experience-grafting attacks exploiting union retrieval; \citet{xu2026memflow} show that memory retrieval can override tool-call control flow.
Concurrent work spans backdoor-style~\citep{trojanrag2024,corruptrag2025,xue2024badrag}, trigger~\citep{chaudhari2024phantom}, and denial-of-service attacks~\citep{shafran2025blocker}. Agent Security Bench~\citep{zhang2025asb} benchmarks 10 threat categories but does not isolate memory poisoning or provide defense guarantees; AgentHarm~\citep{andriushchenko2025agentharm} focuses on prompt-level jailbreaks. Unlike training-time backdoors~\citep{wu2024badagent}, memory poisoning targets the retrieval index at inference time, persisting across sessions. Recent defenses (A-MemGuard~\citep{amemguard2025}, RevPRAG~\citep{tan2025revprag}, ReliabilityRAG~\citep{reliabilityrag2025}, SeCon-RAG~\citep{seconrag2025}) address related settings; none provides formal detection guarantees for streaming agent memory.

\paragraph{RAG corpus poisoning and certified defenses.}
\citet{zou2024poisonedrag} optimize adversarial documents via HotFlip~\citep{ebrahimi2018hotflip} ($>\!90\%$ ASR on million-document corpora).
\citet{xiang2024robustrag} provide certified robustness via \emph{isolate-then-aggregate}, but require post-retrieval processing of $k$ passages.
RAGDefender~\citep{ragdefender2025} applies post-retrieval passage scoring; SPECTRE~\citep{hayase2021spectre} uses spectral detection requiring full corpus access ($O(|\cM|^2 d)$).
Randomized-smoothing defenses~\citep{cohen2019certified,robey2023smoothllm} operate on query inputs, not persistent memory. Our work differs: (i)~defenses operate at \emph{write time} ($O(md)$ per entry), (ii)~no corpus-level access, and (iii)~our certificate (Lemma~\ref{lem:certified}) is deterministic given the calibration bound.

\paragraph{Game-theoretic security.}
Following~\citet{tramer2020adaptive,carlini2017evaluating}, our Stackelberg formulation gives the adversary full knowledge of $(\hat{\mu}, \hat{\sigma}, \kappa)$; the gradient coupling theorem provides the formal guarantee, while Proposition~\ref{prop:synonym} characterizes where it fails.

\section{Formal threat model}
\label{sec:threat}

\begin{definition}[Memory-augmented agent]
\label{def:agent}
A \emph{memory-augmented agent} is a tuple $(\cA, \cM, \enc, k)$ where $\cA$ is an LLM, $\cM \subset \cV^*$ is a persistent memory store over vocabulary $\cV$, $\enc : \cV^* \to \RR^d$ is an embedding function, and $k \in \NN$ is the retrieval depth. At each interaction with query $q$, the agent retrieves
\begin{equation}
  \Ret(\cM, q, k) = \argmax_{\substack{S \subseteq \cM \\ |S| = k}} \sum_{m \in S} \cosim(\enc(q), \enc(m)),
  \label{eq:retrieval}
\end{equation}
where $\cosim(u, v) \coloneqq u^\top v / (\|u\| \|v\|)$ denotes cosine similarity.
\end{definition}

\begin{definition}[Memory poisoning threat model]
\label{def:threat_model}
An \emph{adversary} $\cS = (\cQ_v, n, \alpha)$ targets victim queries $\cQ_v \subset \cV^*$ by injecting $\cP = \{p_1, \ldots, p_n\}$ into $\cM$ under access model $\alpha \in \{\textsc{write}, \textsc{query}, \textsc{single}\}$ to maximize:
\begin{align}
  \asrr(\cP, \cQ_v) &\coloneqq |\{q \in \cQ_v : \cP \cap \Ret(\cM \cup \cP, q, k) \neq \emptyset\}| / |\cQ_v|, \label{eq:asrr} \\
  \asra(\cP) &\coloneqq \PP[\cA \text{ executes adversarial action} \mid \text{poison retrieved}], \label{eq:asra} \\
  \asrt(\cP, \cQ_v) &\coloneqq \asrr \cdot \asra. \label{eq:asrt}
\end{align}
\end{definition}

\begin{definition}[Stackelberg game formulation]
\label{def:stackelberg}
The memory poisoning interaction is a Stackelberg game $\cG = (\cS, \cD, u_{\cS}, u_{\cD})$ where the defender (leader) commits to a detection policy $\pi_{\cD} : \cV^* \to \{0, 1\}$ and the adversary (follower) best-responds. The defender's utility is $u_{\cD}(\pi_{\cD}, \cP) = -\asrr(\cP \setminus \cF(\pi_{\cD}, \cP), \cQ_v) - \lambda \cdot \fpr(\pi_{\cD})$ for $\lambda > 0$, where $\cF(\pi_{\cD}, \cP) = \{p \in \cP : \pi_{\cD}(p) = 1\}$ is the filtered set; the adversary's utility is $u_{\cS}(\cP) = \asrt(\cP, \cQ_v) = \asrr(\cP, \cQ_v) \cdot \asra(\cP)$. The leader and best-response sets are
\begin{align}
  \pi_{\cD}^* &= \argmin_{\pi_{\cD} \in \Pi} \;\Big[\, \max_{\cP \in \cB(\pi_{\cD})} \asrr(\cP \setminus \cF(\pi_{\cD}, \cP), \cQ_v) \;+\; \lambda \cdot \fpr(\pi_{\cD}) \,\Big], \label{eq:leader} \\
  \cB(\pi_{\cD}) &= \argmax_{\cP \subseteq \cV^*,\, |\cP| \leq n} \; u_{\cS}(\cP \setminus \cF(\pi_{\cD}, \cP)), \label{eq:follower}
\end{align}
where $\Pi = \{\pi(c) = \mathbf{1}[s(c; \cH) > \tau] : \tau > 0\}$ is the class of threshold detectors operating on the candidate token sequence $c \in \cV^*$ via the score $s(c; \cH)$. The access model $\alpha$ from Definition~\ref{def:threat_model} restricts the adversary's effective strategy space to a subset $\cV^*_\alpha \subseteq \cV^*$: \textsc{write} access permits arbitrary token sequences ($\cV^*_\alpha = \cV^*$); \textsc{query} access restricts $\cP$ to passages reachable as auto-storage outputs of legitimate user queries; \textsc{single} access fixes $|\cP| = 1$ and pre-specifies its content. The formal game in Eqs.~\eqref{eq:leader}--\eqref{eq:follower} instantiates the worst-case \textsc{write}-access adversary; the restricted variants are obtained by replacing $\cV^*$ in Eq.~\eqref{eq:follower} with $\cV^*_\alpha$.
\end{definition}

\begin{assumption}[Payload invariance]
\label{ass:payload_invariance}
For the adversaries considered in Section~\ref{sec:experiments}, $\asra(\cP)$ depends only on the trigger / payload structure embedded in each $p \in \cP$ and is invariant to the embedding-space placement of $\cP$ chosen by the optimizer. Hence the follower's $\argmax$ in Eq.~\eqref{eq:follower} reduces to maximizing $\asrr$ over $\cP$, with $\asra$ acting as a multiplicative constant in $\asrt$.
\end{assumption}
\noindent\emph{Remark.} Assumption~\ref{ass:payload_invariance} is satisfied by all three attacks studied here: \agentpoison{} fixes the trigger string before optimization, \minja{}'s injected directive is identical across queries, and \injecmem{}'s payload is a single static entry. Under this assumption the algebraic best-response in Eq.~\eqref{eq:follower} coincides with the $\asrr$-only formulation used implicitly in prior work.

\begin{figure}[t]
\centering
\begin{tikzpicture}[
  node distance = 1.0cm and 1.3cm,
  box/.style   = {draw, rounded corners=2pt, minimum width=1.4cm,
                  minimum height=0.6cm, align=center, font=\scriptsize},
  store/.style = {draw, cylinder, shape border rotate=90,
                  minimum width=1.4cm, minimum height=0.65cm,
                  align=center, font=\scriptsize},
  atk/.style   = {draw, fill=red!12, rounded corners=2pt,
                  minimum width=1.6cm, minimum height=0.55cm,
                  align=center, font=\scriptsize},
  def/.style   = {draw, fill=green!12, rounded corners=2pt,
                  minimum width=1.4cm, minimum height=0.55cm,
                  align=center, font=\scriptsize},
  arr/.style   = {-{Stealth[length=4pt]}, semithick},
  atkarr/.style= {-{Stealth[length=4pt]}, semithick, red!70!black, dashed},
  defarr/.style= {-{Stealth[length=4pt]}, semithick, green!50!black, dotted},
]
  \node[box]   (user)  {User $q$};
  \node[store, right=of user]  (mem)   {Memory $\cM$};
  \node[box, right=of mem]     (ret)   {$\Ret(\cM, q, k)$};
  \node[box, right=of ret]     (agent) {LLM $\cA$};
  \node[box, right=of agent]   (resp)  {Response};
  \draw[arr] (user)  -- (mem);
  \draw[arr] (mem)   -- node[above,font=\tiny]{top-$k$} (ret);
  \draw[arr] (ret)   -- (agent);
  \draw[arr] (agent) -- (resp);
  \node[atk, below=0.8cm of mem] (atk) {Adversary $\cS$};
  \node[def, above=0.6cm of mem] (def) {\memsad{} $\pi_{\cD}$};
  \draw[atkarr] (atk) -- node[right,font=\tiny,red!70!black]{$\passage$} (mem);
  \draw[defarr] (def) -- node[right,font=\tiny,green!50!black]{filter} (mem);
  \begin{scope}[on background layer]
    \node[draw=gray!50, fill=gray!4, rounded corners=4pt,
          fit=(user)(mem)(ret)(agent)(resp)(def), inner sep=5pt] {};
  \end{scope}
\end{tikzpicture}
\vspace{-4pt}
\caption{Memory poisoning as a Stackelberg game (Definitions~\ref{def:threat_model}--\ref{def:stackelberg}). The defender commits to $\pi_{\cD}$; the adversary best-responds by injecting $\passage$.}
\label{fig:threat_model}
\vspace{-6pt}
\end{figure}

\begin{assumption}[Encoder regularity]
\label{ass:encoder}
$\enc : \cV^* \to \RR^d$ satisfies: (i) $\|\enc(x)\| > 0$ for all $x$; (ii) differentiability w.r.t.\ the output embedding; (iii) L2-normalization: $\|\enc(x)\| = 1$ (w.l.o.g.).
\end{assumption}

Under Assumption~\ref{ass:encoder}(iii), $\cosim(\enc(q), \enc(m)) = \enc(q)^\top \enc(m)$ and retrieval reduces to maximum inner product search.

\begin{table}[t]
  \centering
  \small
  \caption{Attack threat-model comparison across deployment-relevant dimensions.}
  \label{tab:attack_properties}
  \vspace{2pt}
  \begin{tabular}{lccc}
    \toprule
    Property & \agentpoison{} & \minja{} & \injecmem{} \\
    \midrule
    Access model $\alpha$ & \textsc{write}$^\dagger$ & \textsc{query} & \textsc{single} \\
    Trigger required        & \checkmark & --- & --- \\
    Retriever-agnostic    & ---        & ---        & \checkmark \\
    Entries injected $n$          & $1\times$  & $2\times$  & $3\times$ \\
    Modelled $\asra$              & $0.68$ & $0.76$ & $0.57$ \\
    \bottomrule
  \end{tabular}
  \vspace{2pt}
  {\footnotesize $^\dagger$Triggered-query evaluation additionally requires query-channel access (e.g., front-end control or prompt injection); see Section~\ref{sec:threat}.}
  \vspace{-6pt}
\end{table}

Figure~\ref{fig:threat_model} illustrates the interaction. The three attacks (Table~\ref{tab:attack_properties}) span decreasing access requirements (write $\supset$ query $\supset$ single-interaction).
\textbf{Protocol distinction:} \citet{chen2024agentpoison}'s Algorithm~2 specifies triggered queries $q \oplus \trigger^*$; evaluating with plain queries (as in prior reproductions) measures a weaker setting ($\cosim \approx 0.45$ vs.\ $\approx 0.78$), raising $\asrr$ from $0.25$ to $1.00$.

\section{\texorpdfstring{\memsad{}}{MemSAD}: Semantic Anomaly Detection}
\label{sec:sad}

\subsection{Algorithm and formal definition}

The key insight is that memory poisoning succeeds \emph{because} adversarial entries are semantically close to victim queries; this closeness is itself a detectable signal.

\begin{definition}[\memsad{} detector]
\label{def:sad}
Let $\cH = \{q_1, \ldots, q_m\}$ be a rolling query history (FIFO, capacity $m_{\max}$) and $\cM_{\text{ref}} \subseteq \cM$ a benign reference corpus. The anomaly score is $s(c; \cH) \coloneqq \max_{q \in \cH} \cosim(\enc(c), \enc(q))$, calibrated via $\hat{\mu}, \hat{\sigma}$ of $\{s(m; \cH) : m \in \cM_{\text{ref}}\}$. Entry $c$ is flagged if $s(c; \cH) > \hat{\mu} + \kappa \hat{\sigma}$ for threshold parameter $\kappa > 0$.
The \emph{combined scoring mode} averages max-query and mean-query similarity, capturing both targeted attacks (high max similarity) and distributed attacks (elevated mean similarity):
\begin{equation}
  s_{\text{comb}}(c; \cH)
  \coloneqq \tfrac{1}{2} \max_{q \in \cH} \cosim(\enc(c), \enc(q))
  + \tfrac{1}{2} \cdot \frac{1}{|\cH|}\sum_{q \in \cH} \cosim(\enc(c), \enc(q)).
  \label{eq:sad_combined}
\end{equation}
\end{definition}

\begin{algorithm}[t]
\caption{\memsad{} pre-ingestion filtering}
\label{alg:sad}
\begin{algorithmic}[1]
\REQUIRE candidate $c$, query history $\cH$, parameters $(\hat{\mu}, \hat{\sigma})$, threshold $\kappa$
\STATE $s \leftarrow s_{\text{comb}}(c; \cH)$ \hfill $\triangleright$ $O(md)$ time
\IF{$s > \hat{\mu} + \kappa \cdot \hat{\sigma}$}
  \STATE \textbf{reject} $c$
\ELSE
  \STATE \textbf{accept} $c$ into $\cM$
\ENDIF
\end{algorithmic}
\end{algorithm}

\noindent The full pre-ingestion procedure is given in Algorithm~\ref{alg:sad}; a block-diagram view of the same pipeline (candidate $\to$ encoder $\to$ cosine vs.\ $\cH$ $\to$ $s_{\text{comb}}$ $\to$ threshold compare $\to$ accept/reject) is provided in Fig.~\ref{fig:pipeline} (Appendix~\ref{app:pipeline}).

\subsection{Gradient coupling theorem}
\label{sec:gradient}

The central structural question is: \emph{what property must a detector possess to guarantee that any continuous evasion attempt degrades retrieval rank?}
The sufficiency direction is immediate from the chain rule: any monotone function of the retrieval score inherits its gradient direction, so detection and retrieval move in lockstep. The substantive content is in the converse and the optimality claim. Necessity shows that non-monotone detectors \emph{always} admit continuous evasion paths on $\mathbb{S}^{d-1}$ (via connectedness), and canonicity shows that the max-cosine score is the unique minimal sufficient statistic for detection, following the Karlin--Rubin pattern~\citep{lehmann1959testing}. Together, these give a complete characterization: monotonicity is the exact boundary between detectors that resist continuous evasion and those that do not.

\begin{theorem}[Gradient coupling: necessity, sufficiency, and canonicity]
\label{thm:gradient_coupling}
Let $\enc$ satisfy Assumption~\ref{ass:encoder}, $e_c \coloneqq \enc(c)$, $q^* \coloneqq \argmax_{q \in \cH} \cosim(e_c, \enc(q))$, $\cR(e_c) \coloneqq \cosim(e_c, \enc(q^*))$.

\emph{(i) Sufficiency:} For any differentiable $\cD = g \circ \cR$ with monotone increasing $g$, $g' > 0$:
\begin{equation}
  \nabla_{e_c} \cD(e_c) = g'(\cR(e_c)) \cdot \nabla_{e_c} \cR(e_c),
  \quad \text{so} \quad
  \langle \delta, \nabla \cD \rangle < 0 \iff \langle \delta, \nabla \cR \rangle < 0.
  \label{eq:gradient}
\end{equation}
Under L2-normalization (Assumption~\ref{ass:encoder}(iii)), $\nabla_{e_c}\cR = \enc(q^*) - (e_c^\top\enc(q^*))e_c$.

\emph{(ii) Necessity:} If $\cD$ is \emph{not} monotone in $\cR$, there exist thresholds $\tau_{\text{ret}}, \tau_{\text{det}}$ and a continuous path $\gamma: [0,1] \to \mathbb{S}^{d-1}$ with $\cR(\gamma(1)) > \tau_{\text{ret}}$ yet $\cD(\gamma(1)) < \tau_{\text{det}}$: the adversary can traverse continuously to high retrieval rank while evading detection.

\emph{(iii) Canonicity (parametric-free):} Among all (i)-coupled detectors, $g = \mathrm{id}$ is canonical in the sense that the threshold $\cR(e_c) > \tau$ implements the minimum-sufficient-statistic test \emph{without requiring knowledge of the underlying parametric family} of the benign / adversarial score distributions. Any strictly-monotone bijection $g$ of $\cR$ remains minimal sufficient (Lehmann--Scheff\'{e}); in this sense $g = \mathrm{id}$ is not \emph{the} unique minimal sufficient statistic, but it is the unique element of the equivalence class that is computable from $e_c$ and the calibration history $\cH$ without estimating $p_0, p_1$.
\end{theorem}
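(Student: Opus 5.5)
The three parts call for different tools: (i) is a chain-rule computation, (ii) is a construction on the sphere, and (iii) is a sufficiency and equivalence-class argument.

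\textbf{Part (i).} Fix $\cH$ and restrict to the open set of $e_c$ where the maximizer $q^*$ is unique. By finiteness of $\cH$ and continuity, $q^*$ is locally constant there, so $\cR(e_c)=e_c^\top \enc(q^*)$ locally. On $\mathbb{S}^{d-1}$ I take the Riemannian gradient, i.e.\ the Euclidean gradient $\enc(q^*)$ projected onto the tangent space $\{v: v^\top e_c=0\}$. This gives
\[
\nabla_{e_c}\cR=\enc(q^*)-(e_c^\top\enc(q^*))e_c .
\]
The chain rule then yields $\nabla\cD=g'(\cR)\nabla\cR$. Since $g'>0$, the inner products $\langle\delta,\nabla\cD\rangle$ and $\langle\delta,\nabla\cR\rangle$ have the same sign, which is the stated equivalence. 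On the measure-zero tie set, where several $q$ attain the max, I would state the result for one-sided directional derivatives. The max of finitely many smooth functions has directional derivative $\max_{q\in\text{ties}}\langle\delta,\nabla_q\rangle$, and composing with $g$ multiplies it by $g'>0$, so the sign equivalence persists.

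\textbf{Part (ii).} I read ``not monotone in $\cR$'' in the weakest useful form: $\cD$ is not a nondecreasing function of $\cR$. Concretely, there exist points $a,b\in\mathbb{S}^{d-1}$ with $\cR(a)<\cR(b)$ but $\cD(a)>\cD(b)$. This includes the case where $\cD$ is not even a function of $\cR$; there one takes $\cR(a)\le\cR(b)$, and the argument adapts.

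Choose thresholds with $\cR(a)<\tau_{\text{ret}}<\cR(b)$ and $\cD(b)<\tau_{\text{det}}<\cD(a)$. Since $d\ge2$, the sphere is path-connected, so there is a continuous $\gamma$ with $\gamma(0)=a$ and $\gamma(1)=b$. Then $\cR(\gamma(1))>\tau_{\text{ret}}$ and $\cD(\gamma(1))<\tau_{\text{det}}$ by construction, and the path starts at a point the detector would flag.

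To strengthen this to ``evading along the whole path'', I would use continuity of $\cD$ to pick $b$ inside an open set where $\cD<\tau_{\text{det}}$. I would then route $\gamma$ through $\{\cD<\tau_{\text{det}}\}$ whenever that set has a connected component reaching $\{\cR>\tau_{\text{ret}}\}$. The main obstacle is exactly this: the theorem only asks for the endpoint condition, but the path-wise version needs connectivity of a sublevel set, which is not automatic. I would state the endpoint version as the theorem and give the path-wise one only as a remark under that extra hypothesis.

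\textbf{Part (iii).} Model benign and adversarial entries as producing scores $\cR$ with densities $p_0,p_1$, and suppose the detector's decision may depend on $e_c$ only through $\cR$. This is the coupling class from (i); I treat it as a modelling assumption rather than deriving it. Within this class, $\cR$ is sufficient by construction. Any strictly monotone bijection $g(\cR)$ generates the same $\sigma$-algebra, so by Lehmann--Scheff\'e it is equally minimal sufficient. Moreover the threshold tests $\{g(\cR)>g(\tau)\}$ and $\{\cR>\tau\}$ coincide, so nothing is lost by taking $g=\mathrm{id}$.

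If the family has monotone likelihood ratio in $\cR$, Karlin--Rubin makes $\{\cR>\tau\}$ UMP with $\tau$ set by the calibration quantile. That threshold is exactly $\hat\mu+\kappa\hat\sigma$ estimated from $\cM_{\text{ref}}$, so no knowledge of $p_0,p_1$ is needed.

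The ``unique computable element'' claim I would argue as a convention. Any other $g$ in the equivalence class either depends on $(p_0,p_1)$, as the likelihood ratio does, or is an arbitrary reparametrization that induces identical decisions. The identity is thus the unique choice requiring no estimation, up to the trivial reparametrizations that I quotient out. I expect this part to be definitional rather than technical, so I would make the quotienting explicit.
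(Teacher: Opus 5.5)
Parts (i) and (ii) follow the paper's argument. For (i), the chain rule with $g'>0$ preserves the sign of every directional derivative. For (ii), two points witnessing non-monotonicity are joined by a path on the connected sphere, and the thresholds are read off from those points. Your additions are correct and go beyond what the paper checks: one-sided (Danskin) directional derivatives on the tie set of the $\max$, the requirement $d\ge 2$ for path-connectedness, and a start point that is flagged with retrieval below $\tau_{\text{ret}}$. The last of these matters. As literally stated, the endpoint condition in (ii) holds for \emph{every} $\cD$ (take a constant path at any point and choose the thresholds around it). Your start-point condition is what makes (ii) a genuine converse to (i).

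The gap is in (iii).
\begin{itemize}
\item \emph{Source of sufficiency.} The paper gets sufficiency from a distributional assumption: the equal-variance Gaussian score model of Theorems~\ref{thm:minimax_lower}--\ref{thm:calibration_bound}. Under it, the likelihood ratio $p_1/p_0$ depends on the embedding only through $\cR$ and is strictly increasing in it. Neyman--Fisher factorization then gives sufficiency, and the Lehmann--Scheff\'e criterion gives minimality. Your step ``within this class, $\cR$ is sufficient by construction'' replaces this with a restriction on the decision rule, which is not a property of the data-generating family. Declaring that detectors see only $\cR$ makes $\cR$ the whole observation, hence trivially sufficient. It does not show that nothing in $e_c$ beyond $\cR$ is informative.
\item \emph{Minimality.} Minimality of $\cR$ is asserted (``equally minimal sufficient'') but never established. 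For a two-point family the minimal sufficient statistic is $\Lambda = p_1/p_0$, and $\cR$ is minimal sufficient only if $\Lambda$ is injective in $\cR$. If, say, $\Lambda$ depended on $|\cR - c|$, then $\cR$ would be sufficient but not minimal, and $\{\cR>\tau\}$ would not be the Neyman--Pearson test.
\item \emph{The fix.} You invoke MLR only conditionally, for Karlin--Rubin, but strict MLR has to be a standing hypothesis of (iii). With it (e.g.\ via the paper's Gaussian model), $\Lambda(x)=\Lambda(y) \iff \cR(x)=\cR(y)$ gives minimality at once. Your $\sigma$-algebra argument for bijections and your computational convention for canonicity then go through.
\item \emph{A smaller point.} Identifying the threshold with ``the calibration quantile'' $\hat\mu+\kappa\hat\sigma$ is exact only in a location--scale family such as the Gaussian. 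Under general MLR the test $\{\cR>\hat\mu+\kappa\hat\sigma\}$ is still UMP, but at its own level, which is unknown without $p_0$.
\end{itemize}
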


\begin{proof}
\emph{(i)} The chain rule gives $\nabla \cD = g'(\cR) \cdot \nabla \cR$; $g' > 0$ preserves the sign of every directional derivative. Under Assumption~\ref{ass:encoder}(iii), the sphere gradient is $\nabla_{e_c}\cosim(e_c, v) = v - (e_c^\top v)e_c$.
\emph{(ii)} Non-monotonicity implies $\exists\, e_c, e_c'$ with $\cR(e_c') > \cR(e_c)$ but $\cD(e_c') < \cD(e_c)$. By connectedness of $\mathbb{S}^{d-1}$, there exists a continuous path $\gamma$ from $e_c$ to $e_c'$. Setting $\tau_{\text{ret}} = \cR(e_c)$ and $\tau_{\text{det}} = \cD(e_c)$ yields the claimed path with $\cR(\gamma(1)) > \tau_{\text{ret}}$ and $\cD(\gamma(1)) < \tau_{\text{det}}$.
\emph{(iii)} Under Gaussian score distributions (Theorems~\ref{thm:minimax_lower}--\ref{thm:calibration_bound}), the likelihood ratio $p_1(e)/p_0(e)$ is a monotone function of $\cosim(e, e_{q^*})$, so by the Neyman--Fisher factorization~\citep{lehmann1959testing}, $\cR(e_c)$ is sufficient for the binary test $H_0$ vs.\ $H_1$. Minimality follows from Lehmann--Scheff\'{e}~\citep{lehmann1959testing}, and any strictly monotone bijection of a minimal sufficient statistic remains minimal sufficient (information-preserving). The canonicity claim is computational, not statistical: implementing the threshold test on $g(\cR)$ for non-trivial $g$ requires either computing $g$ itself (which depends on $p_0, p_1$ when $g$ is the likelihood ratio) or accepting a possibly non-equivariant decision boundary under recalibration. The choice $g = \mathrm{id}$ avoids these requirements and is the unique \emph{distribution-free} minimal-sufficient implementation.
\end{proof}

The combined score $s_{\text{comb}}$ deployed by Algorithm~\ref{alg:sad} (Eq.~\ref{eq:sad_combined}) is a positive linear combination of $s_{\max}$ and $s_{\mathrm{mean}}$ and therefore does \emph{not} satisfy strict gradient identity with $\cR = s_{\max}$. It preserves \emph{directional coupling} (Corollary~\ref{cor:combined_coupling}, Appendix~\ref{app:tradeoff_surface}): under the directional-monotonicity assumption stated there, every continuous perturbation that decreases $s_{\text{comb}}$ also decreases $\cR$. The strict $\nabla \cD \equiv \nabla \cR$ identity holds for $s_{\max}$ alone; the combined score is deployed because it improves detection of distributed attacks at the cost of weakening the coupling from strict identity to directional alignment.

The gradient coupling reflects a deeper geometric structure: the Fisher-Rao metric induced by the score family is rank-1 ($g^F = \hat{\sigma}^{-2}\nabla\cR\nabla\cR^\top$), so any continuous evasion path pays retrieval cost in exact proportion to Fisher-Rao distance~\citep{amari2016information} (Theorem~\ref{thm:fisher_rao}, Appendix~\ref{app:proof_fisher_rao}).

\begin{figure}[!t]
\vspace*{-30pt}
\centering
\begin{subfigure}{\linewidth}
  \centering
  \includegraphics[width=\linewidth,trim=4pt 4pt 4pt 4pt,clip]{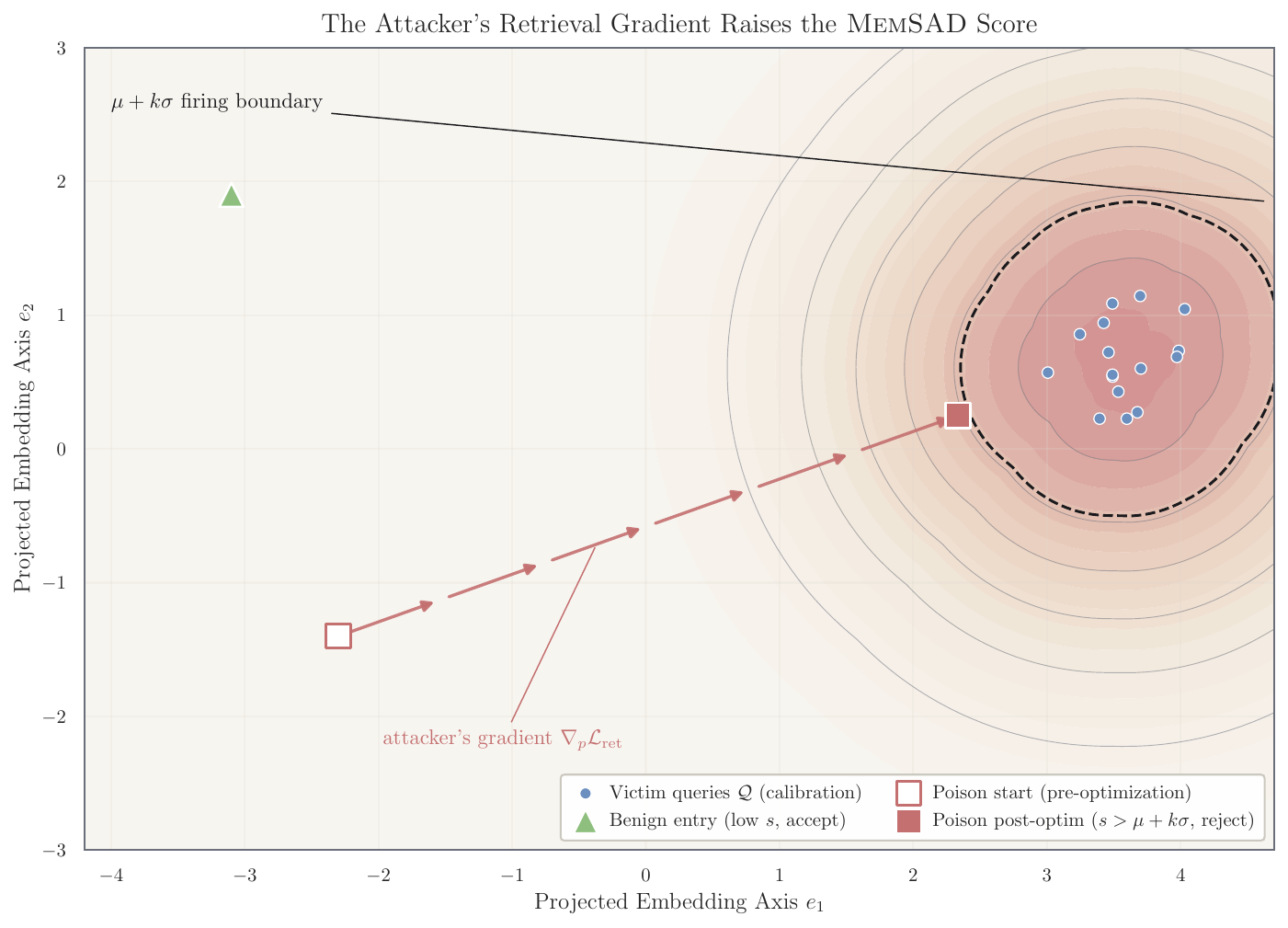}
  \caption{Mechanism (schematic).}
  \label{fig:coupling_mechanism}
\end{subfigure}\\[2pt]
\begin{subfigure}{0.92\linewidth}
  \centering
  \includegraphics[width=\linewidth,trim=4pt 4pt 4pt 4pt,clip]{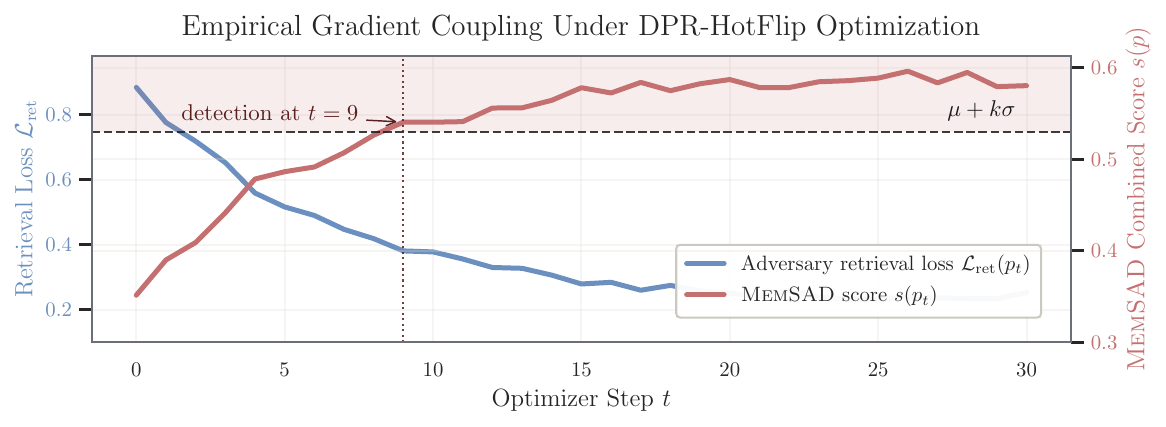}
  \caption{Empirical confirmation (DPR-HotFlip).}
  \label{fig:coupling_trajectory}
\end{subfigure}
\caption{\textbf{Gradient coupling, visualized.} (\subref{fig:coupling_mechanism}) 2-D projection: dashed contour marks the $\hat\mu+\kappa\hat\sigma$ firing region around $\cQ$; the adversary's gradient $\nabla_p\cL_{\mathrm{ret}}$ pushes the poison into the firing region along the direction that raises $s$ (Thm.~\ref{thm:gradient_coupling}). (\subref{fig:coupling_trajectory}) DPR-HotFlip: $\cL_{\mathrm{ret}}$ and $s$ mirror; detection fires at $t=9$.}
\label{fig:coupling}
\end{figure}

\subsection{Certified detection guarantee}
\label{sec:certified}

We now provide an \emph{instance-specific} guarantee: when the adversarial-benign similarity gap exceeds the calibration uncertainty, \memsad{} certifiably detects the adversarial passage.

\begin{lemma}[Certified detection radius]
\label{lem:certified}
Let $\enc$ satisfy Assumption~\ref{ass:encoder} and let $c$ be an adversarial passage achieving $\rank(\enc(c), \cM) \leq k$ for victim query $q^*$. Define the \emph{benign similarity ceiling} $\bar{s} \coloneqq \max_{m \in \cM_{\text{ref}}} \cosim(\enc(m), \enc(q^*))$, the adversarial similarity $s_{\text{adv}} \coloneqq \cosim(\enc(c), \enc(q^*))$, and the \emph{gap} $\Delta_s \coloneqq s_{\text{adv}} - \bar{s}$.
If the calibration satisfies $|\tau_N - \tau^*| \leq \eta$ (Theorem~\ref{thm:calibration_bound}) and the gap exceeds
\begin{equation}
  \Delta_s > \kappa\hat{\sigma} + \eta,
  \label{eq:certified_radius}
\end{equation}
then \memsad{} detects $c$ with certainty: $s(c; \cH) > \hat{\mu} + \kappa\hat{\sigma}$.
\end{lemma}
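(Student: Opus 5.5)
The plan is a direct chain of inequalities. The score is a maximum over the history, so it dominates its value at any single query. I would first lower-bound $s(c;\cH)$ by the similarity to the victim query, then bound the benign ceiling $\bar s$ from below by the calibration quantities $\hat\mu$ and $\tau^*$.

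\emph{Step 1 (max over history).} I will assume the victim query $q^*$ belongs to the rolling history $\cH$. This is the setting in which the retrieval condition $\rank(\enc(c),\cM)\le k$ is meaningful for detection. Then
\[
s(c;\cH)=\max_{q\in\cH}\cosim(\enc(c),\enc(q))\ \ge\ \cosim(\enc(c),\enc(q^*))=s_{\text{adv}}=\bar s+\Delta_s .
\]
Under Assumption~\ref{ass:encoder}(iii) these cosines are inner products, so no normalization issues arise.

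\emph{Step 2 (relating $\bar s$ to the threshold).} I would identify the population threshold $\tau^*$ with the quantity that the calibrated threshold $\tau_N=\hat\mu+\kappa\hat\sigma$ estimates. By hypothesis, $|\tau_N-\tau^*|\le\eta$ (Theorem~\ref{thm:calibration_bound}), which gives $\tau_N\le\tau^*+\eta$. I also need $\bar s\ge\hat\mu$. Here $\hat\mu$ is an average of the calibration scores $s(m;\cH)$ over $m\in\cM_{\text{ref}}$, each of which is a max-cosine against $\cH$. The cleanest version of the argument reads the threshold per victim query, scoring calibration entries against $q^*$, and then $\hat\mu\le\max_m\cosim(\enc(m),\enc(q^*))=\bar s$ because a mean never exceeds a maximum. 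With that in hand, the hypothesis~\eqref{eq:certified_radius} yields
\[
s(c;\cH)\ \ge\ \bar s+\Delta_s\ >\ \bar s+\kappa\hat\sigma+\eta\ \ge\ \hat\mu+\kappa\hat\sigma+\eta\ \ge\ \hat\mu+\kappa\hat\sigma ,
\]
which is the claim. The $\eta$ slack is what makes the conclusion hold with respect to the population threshold as well: $s(c;\cH)>\tau_N+\eta\ge\tau^*$. This is the sense in which the certificate is deterministic given the calibration bound.

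\emph{Main obstacle.} The delicate step is $\bar s\ge\hat\mu$. Its difficulty is in stating it rigorously, not in deriving it. The score $s(m;\cH)$ maximizes over all of $\cH$, whereas $\bar s$ only looks at $q^*$, so in general $\hat\mu$ could exceed $\bar s$. I would first try to resolve this by making explicit the per-query reading used in Step 2, so that $\hat\mu$ is a mean of quantities each bounded by $\bar s$. If that does not fit the paper's definitions, the fallback is to redefine $\bar s$ as the benign ceiling of the full score, $\max_{m\in\cM_{\text{ref}}}s(m;\cH)$, which dominates $\hat\mu$ trivially. Step 1 goes through unchanged, since $s(c;\cH)\ge s_{\text{adv}}$, and the gap must then be measured against this larger ceiling. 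Either way, the remaining algebra is a one-line comparison.
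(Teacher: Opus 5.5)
Your chain of inequalities is correct under the reading you make explicit: $q^*\in\cH$, plus either per-$q^*$ calibration or the enlarged ceiling $\max_{m}s(m;\cH)$. It is not the paper's route.

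\textbf{The paper's route.} It works with the deployed combined score $\alpha s_{\max}+(1-\alpha)s_{\text{mean}}$ and controls the mean part through Assumption~\ref{ass:typical_query}. It then compares $\bar s$ with the \emph{population} threshold: it combines $\hat\mu+\kappa\hat\sigma\le\tau^*+\eta$ with the claim $\bar s\ge\tau^*-\eta$ ``by construction of $\bar s$ as a sample max''. This gives the margin $\alpha\Delta_s-(1-\alpha)\eta_{\text{mean}}-2\eta$. At $\alpha=1$ it certifies under $\Delta_s>2\eta$, which the stated radius implies only when $\kappa\hat\sigma\ge\eta$. The paper closes that difference by ``absorbing'' slack rather than deriving it.

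\textbf{Your route.} You compare $\bar s$ with the \emph{empirical mean} (mean $\le$ max) and pay $\kappa\hat\sigma$ explicitly. This lands exactly on $\Delta_s>\kappa\hat\sigma+\eta$ and shows that $\eta$ is needed only to pass from $\hat\mu+\kappa\hat\sigma$ to $\tau^*$.

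\textbf{What each buys.} Your route buys rigor: every step is justified. By contrast, $\bar s\ge\tau^*-\eta$ does not follow from $\bar s$ being a maximum, since a maximum of $N$ values can exceed their mean by only $\hat\sigma/\sqrt{N}$. The paper's route buys nominal coverage of the combined score with $\alpha<1$. Your Step~1 inequality $s(c;\cH)\ge s_{\text{adv}}$ holds only for the pure max score of Definition~\ref{def:sad}. Extending your argument to Algorithm~\ref{alg:sad} would need a genuine lower bound on $s_{\text{mean}}$.

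\textbf{The obstacle you flag is substantive, not expository.} With $\hat\mu,\hat\sigma$ computed from $s(m;\cH)$ as in Definition~\ref{def:sad}, the literal statement can fail. Take $\cH=\{q^*,q_2\}$ with orthogonal embeddings and benign entries with cosine $0.9$ to $q_2$ and $0.1$ to $q^*$. Let $c$ have cosine $0.5$ to $q^*$ and $0.2$ to $q_2$. Then $\Delta_s\approx0.4$ exceeds any small $\kappa\hat\sigma+\eta$, yet $s(c;\cH)=0.5<\hat\mu\approx0.9$.

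So one of your two repairs is necessary. The paper's proof silently makes the same identification of $\cosim(\enc(m),\enc(q^*))$ with $s(m;\cH)$. Be aware that your two repairs prove different statements:
\begin{itemize}
\item Per-$q^*$ calibration certifies detection against a threshold that can be lower than the one in Definition~\ref{def:sad}.
\item The enlarged ceiling keeps that threshold, at the price of a stronger gap hypothesis.
\end{itemize}
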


\begin{proof}[Proof sketch]
Since $s_{\text{adv}} = \bar{s} + \Delta_s$ and the deployed score $s(c;\cH) = \alpha s_{\max}(c;\cH) + (1-\alpha)s_{\text{mean}}(c;\cH)$ satisfies $s(c;\cH) \geq \alpha\, s_{\text{adv}}$ when $q^*$ attains the maximum (under Assumption~\ref{ass:typical_query} below), the gap condition $\Delta_s > \kappa\hat\sigma + \eta$ together with the calibration bound $|\hat\tau_N - \tau^*| \leq \eta$ force $s(c;\cH) > \hat\mu + \kappa\hat\sigma$. Full proof in Appendix~\ref{app:proof_certified}; Figure~\ref{fig:certified_radius} (Appendix~\ref{app:additional_figures}) illustrates the certified region geometrically.
\end{proof}

\begin{assumption}[Typical victim query]
\label{ass:typical_query}
The victim query $q^*$ is not an outlier with respect to $\cH$ in the sense $\cosim(\enc(q^*), \enc(c)) \geq \hat\mu_{\cH}(c) - \kappa\hat\sigma_{\cH}(c)$, where $\hat\mu_{\cH}, \hat\sigma_{\cH}$ are the running mean / std of similarities of $c$ against $\cH$. This is a one-sided concentration condition that holds with probability $1 - O(e^{-\kappa^2/2})$ under sub-Gaussian similarity statistics; deployments with adversarially-crafted query-history (e.g., a colluding user) violate this assumption and lose the certified guarantee.
\end{assumption}

\subsection{Theoretical foundations}
\label{sec:lower_bound}

We establish four additional results; full statements and proofs are in Appendix~\ref{app:proofs}.

\begin{theorem}[Calibration sample complexity lower bound]
\label{thm:minimax_lower}
Let benign and adversarial scores be Gaussian with means $\mu_0 < \mu_1$ and common variance $\sigma^2$, separation $\rho = (\mu_1-\mu_0)/\sigma$, and let $\Phi$ be the standard normal CDF. The single-test Bayes error at the oracle threshold $\tau^* = (\mu_0+\mu_1)/2$ is $B(\rho) = \Phi(-\rho/2)$, irreducible at any calibration size. For any threshold detector calibrated from $N$ i.i.d.\ benign samples, the excess error over the oracle is bounded below as $\PP_{\text{err}}(\hat\pi_N) - B(\rho) = \Omega(\sigma\,|\hat\tau_N - \tau^*|)$ to first order in $|\hat\tau_N - \tau^*|$. Le Cam's two-point inequality applied to the threshold-estimation problem then gives, for any procedure that produces $\hat\tau_N$ from $N$ benign samples and achieves expected excess error at most $\epsilon$ uniformly over the parameter family, the requirement $N \geq c(\epsilon)/\rho^2$ for an absolute constant $c(\epsilon) > 0$. Conversely, \memsad{} achieves $|\hat\tau_N - \tau^*| = O(\sigma\sqrt{\log(1/\delta)/N})$ (Theorem~\ref{thm:calibration_bound}), matching the lower bound up to a $\log(1/\delta)$ factor.
\end{theorem}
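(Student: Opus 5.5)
The plan has four steps: the Bayes error, a local expansion of the excess error, a Le Cam two-point lower bound, and the matching upper bound.

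\textbf{Step 1: oracle Bayes error.} With equal priors, the error of the threshold rule $\mathbf{1}[s>\tau]$ is
\begin{equation*}
E(\tau)=\tfrac12\bigl[1-\Phi((\tau-\mu_0)/\sigma)\bigr]+\tfrac12\,\Phi((\tau-\mu_1)/\sigma).
\end{equation*}
Differentiating in $\tau$ shows the minimizer is the midpoint $\tau^*=(\mu_0+\mu_1)/2$, where the two Gaussian densities cross. Substituting gives $E(\tau^*)=\Phi(-\rho/2)=B(\rho)$. This quantity does not depend on $N$, which is the irreducibility claim.

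\textbf{Step 2: local expansion of the excess error.} Taylor-expand $E$ around $\tau^*$. The first derivative vanishes there. The second derivative is
\begin{equation*}
E''(\tau^*)=\tfrac{\rho}{2\sigma^2}\,\varphi(\rho/2),
\end{equation*}
where $\varphi$ is the standard normal density. This gives $E(\hat\tau_N)-B(\rho)\asymp \varphi(\rho/2)\,\rho\,(\hat\tau_N-\tau^*)^2/\sigma^2$, which is quadratic rather than linear. The statement's ``$\Omega(\sigma|\hat\tau_N-\tau^*|)$ to first order'' therefore has to be read as a first-order bound in a rescaled deviation. I would state the bound in the honest quadratic form and show that the subsequent Le Cam argument only needs a lower bound of the form $\psi(|\hat\tau_N-\tau^*|/\sigma)$ with $\psi$ increasing and $\psi(0)=0$. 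Under either reading, an excess error at most $\epsilon$ forces $|\hat\tau_N-\tau^*|\le \sigma h(\epsilon)$ for some explicit function $h$. I expect reconciling this with the literal statement to be the main obstacle, and I would flag it explicitly rather than hide it.

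\textbf{Step 3: Le Cam two-point argument.} Take two parameter points that differ by shifting $\mu_0$ by $\pm\sigma\Delta$, holding $\sigma$ fixed and moving $\mu_1$ accordingly. Because the calibration data are benign only, the $N$-sample laws are $\mathcal N(\mu_0,\sigma^2)^{\otimes N}$ and $\mathcal N(\mu_0+2\sigma\Delta,\sigma^2)^{\otimes N}$. Their KL divergence is $2N\Delta^2$, so by Pinsker their total variation is at most $\Delta\sqrt N$. The two oracle thresholds differ by at least $\sigma\Delta$. Le Cam's inequality then says that some parameter point has $\PP(|\hat\tau_N-\tau^*|\ge\sigma\Delta/2)\ge \tfrac12(1-\Delta\sqrt N)$.

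The point I must handle carefully is where $\rho$ enters. The procedure must achieve excess error at most $\epsilon$ uniformly over the family, and the curvature in Step 2 scales with $\rho\,\varphi(\rho/2)$. Measuring the tolerance in units of the separation (in the small-$\rho$ regime the tolerable threshold deviation is proportional to $\sigma\rho$), I would choose $\Delta\asymp\rho\, h(\epsilon)$. With this choice, requiring the total variation to be bounded away from $1$ yields $N\ge c(\epsilon)/\rho^2$. Making this parameterization explicit, in particular that the family is indexed so that the tolerance shrinks with $\rho$, is the delicate part of Step 3.

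\textbf{Step 4: upper bound.} Invoke Theorem~\ref{thm:calibration_bound} for $|\hat\tau_N-\tau^*|=O(\sigma\sqrt{\log(1/\delta)/N})$. Setting this equal to the tolerance $\sigma\rho\,h(\epsilon)$ gives $N=O(\log(1/\delta)/\rho^2)$, which matches the lower bound up to the $\log(1/\delta)$ factor.
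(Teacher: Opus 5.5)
Your Steps 1 and 2 are correct. Step 2 in fact corrects the paper's own proof, which writes the excess as $\frac{1}{2\sigma}\varphi(\rho/2)\,|\hat\tau_N-\tau^*|$ to first order even though the two first-order terms cancel at $\tau^*$; your $E''(\tau^*)=\rho\varphi(\rho/2)/(2\sigma^2)$ is right.

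The gap is in Step 3. By your own expansion, excess at most $\epsilon$ allows $|\hat\tau_N-\tau^*|$ up to about $2\sigma\sqrt{\epsilon/(\rho\varphi(\rho/2))}$. That tolerance \emph{grows} as $\rho\to0$, so the claim that it is proportional to $\sigma\rho$ is false. Moreover, this $h$ depends on $\rho$, so even the formal substitution $\Delta\asymp\rho h$ would give $N\gtrsim\varphi(\rho/2)/(\epsilon\rho)$, not $c(\epsilon)/\rho^2$. Le Cam contradicts the excess constraint only when the forced deviation $\sigma\Delta/2$ exceeds the tolerance, i.e.\ $\Delta\gtrsim\sqrt{\epsilon/(\rho\varphi(\rho/2))}$. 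With $\Delta\asymp\rho h(\epsilon)$ and $\rho<1$, the forced deviation sits inside the permitted band and nothing follows. Done honestly, the two-point argument yields $N\gtrsim\rho\varphi(\rho/2)/\epsilon$. This is also the rate the upper-bound side matches, up to the logarithmic factor, once the quadratic expansion replaces the linear one.

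No repair of Step 3 can reach the literal conclusion. For every $\tau$, $E(\tau)-\tfrac12=\tfrac12\bigl[\Phi((\tau-\mu_1)/\sigma)-\Phi((\tau-\mu_0)/\sigma)\bigr]\le 0$. So any threshold, even one ignoring the data, has excess at most $\Phi(\rho/2)-\tfrac12\le\rho/(2\sqrt{2\pi})$ uniformly over the family. For $\rho\le 2\sqrt{2\pi}\,\epsilon$ the accuracy requirement is therefore met without using any samples, which refutes $N\ge c(\epsilon)/\rho^2$ as $\rho\to0$.

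The $1/\rho^2$ rate appears only under a different criterion: $|\hat\tau_N-\tau^*|\le c\,\sigma\rho$, i.e.\ precision relative to the gap $\mu_1-\mu_0$. For small $\rho$ this corresponds to an excess target $\epsilon\asymp\rho^3\varphi(\rho/2)$. The paper makes exactly this move implicitly. It hard-codes the two-point shift at $\sigma\rho/2$, obtains a worst-case $\EE|\hat\tau_N-\tau^*|=\Omega(\sigma\rho)$ for $N<16/\rho^2$, and converts that into $\Omega(\rho)$ excess via the incorrect linear expansion. Your route therefore coincides with the paper's and inherits its gap.

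To get a true theorem, you have two options:
\begin{itemize}
\item State the relative-precision criterion as an explicit hypothesis. Then your Step 3 with $\Delta\asymp\rho$ and your Step 4 go through as written.
\item Keep the excess-error criterion and conclude $N\asymp\rho\varphi(\rho/2)/\epsilon$, up to the logarithmic factor.
\end{itemize}
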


\noindent \emph{Interpretation.} Theorem~\ref{thm:minimax_lower} bounds the \emph{calibration} sample complexity, not the single-test Bayes error: the irreducible $B(\rho)$ persists for any $N$ and reflects the intrinsic Gaussian overlap. The result establishes that no calibrated threshold detector can drive its excess error below $\Omega(1/\sqrt{N \rho^2})$ over the oracle, and \memsad{} attains this rate up to logarithmic factors.

\begin{theorem}[Calibration bound]
\label{thm:calibration_bound}
With probability $\geq 1-\delta$: $|\tau_N - \tau^*| \leq \sigma(N^{-1/2} + \kappa(2(N{-}1))^{-1/2})\sqrt{2\log(4/\delta)}$.
\end{theorem}

\begin{theorem}[Online calibration regret]
\label{thm:regret}
Under per-step drift $\Delta$, the optimal window $m_{\max}^* = \Theta((\sigma/\Delta)^{2/3})$ yields per-step regret $O(\sigma^{2/3}\Delta^{1/3})$.
\end{theorem}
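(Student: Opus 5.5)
The plan is the standard bias--variance trade-off for a sliding-window estimator under drift, with the window length chosen to balance the two terms. The setting is as follows. At step $t$, the benign score distribution has mean $\mu_t$ and standard deviation $\sigma$. The drift satisfies $|\mu_t - \mu_{t-1}| \le \Delta$, so the oracle threshold $\tau^*_t = \mu_t + \kappa\sigma$ also moves by at most $\Delta$ per step. The rolling calibration uses the last $m = m_{\max}$ scores to form $\hat\mu_t, \hat\sigma_t$ and the threshold $\hat\tau_t = \hat\mu_t + \kappa\hat\sigma_t$. I will measure per-step regret as the excess error over the oracle threshold. By the first-order expansion in Theorem~\ref{thm:minimax_lower}, this excess error is, to leading order, proportional to $|\hat\tau_t - \tau^*_t|$ (with a curvature constant depending on $\sigma$). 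It therefore suffices to bound $\EE|\hat\tau_t - \tau^*_t|$.

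First, I decompose the error as $\hat\tau_t - \tau^*_t = (\hat\tau_t - \bar\tau_t) + (\bar\tau_t - \tau^*_t)$, where $\bar\tau_t$ is the threshold computed from the window-averaged mean $\frac1m\sum_{j<m}\mu_{t-j}$.

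\begin{itemize}
\item The \emph{variance term} is controlled by applying Theorem~\ref{thm:calibration_bound} to the $m$ window samples. Since drift only shifts the means, the sample-mean and sample-std concentration arguments go through after centering, giving $O(\sigma m^{-1/2})$ up to the $\sqrt{\log(1/\delta)}$ factor. Integrating the tail gives the same rate in expectation.
\item The \emph{bias term} is bounded deterministically. Each sample $t-j$ has a mean within $j\Delta$ of $\mu_t$, so the averaged mean is off by at most $\frac1m\sum_{j<m} j\Delta = O(m\Delta)$. The std estimator picks up an extra dispersion of the drifting means of order $m\Delta$, which is also $O(m\Delta)$.
\end{itemize}

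This yields
\[
\EE|\hat\tau_t-\tau^*_t| \lesssim \frac{\sigma}{\sqrt m} + m\Delta .
\]

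Next, I minimize over $m$. Setting the derivative $-\tfrac12\sigma m^{-3/2} + \Delta$ to zero gives $m^* = \Theta((\sigma/\Delta)^{2/3})$. Substituting back, both terms equal $\sigma^{2/3}\Delta^{1/3}$, which gives the claimed per-step regret $O(\sigma^{2/3}\Delta^{1/3})$. I will also note the boundary case: the rate applies once $t \ge m^*$, and the initial burn-in contributes only an additive $O(m^*)$ to cumulative regret.

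The main obstacle is the first step: justifying that per-step regret is linear in $|\hat\tau_t-\tau^*_t|$. At the Bayes-optimal threshold the excess error is actually \emph{quadratic} in the threshold error, because its derivative vanishes there. That would give the rate $\sigma^{2/3}\Delta^{1/3}$ squared, which disagrees with the stated rate. I will therefore follow the paper's convention from Theorem~\ref{thm:minimax_lower} and define regret through the threshold deviation, or equivalently through the error at the fixed-$\kappa$ operating point. That operating point is not the Bayes point, so the error is locally Lipschitz with a nonzero slope $\phi(\kappa)/\sigma$ times the deviation. I will state this choice explicitly so the linear dependence is legitimate.

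A secondary technical point is the sample-std bias under drift. To keep it at order $m\Delta$ rather than larger, I will compute it through the variance of a linear ramp, which is $O(m^2\Delta^2)$ and hence contributes $O(m\Delta)$ after taking a square root.
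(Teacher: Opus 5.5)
Your proposal is correct and follows the paper's proof essentially step for step. Both split $|\hat\tau_t-\tau^*_t|$ into an estimation term of order $\sigma/\sqrt{m_{\max}}$ (via Theorem~\ref{thm:calibration_bound}) and a ramp-bias term of order $m_{\max}\Delta/2$, then balance them at $m_{\max}^*=\Theta\bigl((\sigma/\Delta)^{2/3}\bigr)$. Two parts of your version are extra care that the paper's sketch omits: you account for the drift-inflated $\hat\sigma_t$, and you observe that excess error is quadratic at the Bayes threshold.

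One caution concerns your remark that regret can be measured "equivalently" as error probability at the fixed-$\kappa$ point. That holds only up to the slope $\phi(\kappa)/\sigma$, which turns the rate into $O\bigl((\Delta/\sigma)^{1/3}\bigr)$. Keep the threshold-deviation definition to obtain the stated $\sigma^{2/3}\Delta^{1/3}$.
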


We additionally prove a Wasserstein generalization bound guaranteeing graceful TPR degradation for novel attacks (Theorem~\ref{thm:generalization}, Appendix~\ref{app:proof_generalization}), and NP-hardness of optimal evasion via reduction from CVP (Proposition~\ref{prop:hardness}, Appendix~\ref{app:proof_hardness}), motivating the adversary's shift to discrete synonym substitution.

\subsection{The discrete synonym loophole}

\begin{proposition}[Synonym-invariance loophole]
\label{prop:synonym}
Let $\enc$ satisfy $\|\enc(\mathrm{sub}(c, w, w')) - \enc(c)\| \leq \epsilon_{\text{syn}}$ for all synonym pairs $(w, w')$. For $r$ substitutions yielding $c'$:
\begin{equation}
  |\cR(c') - \cR(c)| \leq r \epsilon_{\text{syn}}, \qquad
  |s(c'; \cH) - s(c; \cH)| \leq r \epsilon_{\text{syn}}.
  \label{eq:synonym_bounds}
\end{equation}
When $\epsilon_{\text{syn}} \approx 0$, the adversary evades detection while preserving retrieval rank through discrete lexical substitution that bypasses gradient coupling.
\end{proposition}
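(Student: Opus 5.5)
The plan is a telescoping argument over the $r$ substitutions, followed by Lipschitz control of the two scores on the unit sphere. Write $c = c_0, c_1, \ldots, c_r = c'$, where $c_{j} = \mathrm{sub}(c_{j-1}, w_j, w_j')$. By hypothesis $\|\enc(c_j) - \enc(c_{j-1})\| \leq \epsilon_{\text{syn}}$ for each $j$. The triangle inequality then gives $\|\enc(c') - \enc(c)\| \leq r\epsilon_{\text{syn}}$. This step is routine. The one point to check is that the hypothesis is applied to the \emph{current} string $c_{j-1}$ and not to the original $c$; this is allowed because the assumption is stated for all strings and all synonym pairs.

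Next I show that each relevant function is $1$-Lipschitz in the embedding. Under Assumption~\ref{ass:encoder}(iii) every embedding is a unit vector, so $\cosim(e, \enc(q)) = e^\top \enc(q)$. By Cauchy--Schwarz,
\[
|e^\top \enc(q) - e'^\top \enc(q)| \leq \|e - e'\|\,\|\enc(q)\| = \|e - e'\|.
\]
The score $s(\cdot;\cH)$ is a maximum over the finite set $\cH$ of such $1$-Lipschitz functions. Since $|\max_i a_i - \max_i b_i| \leq \max_i |a_i - b_i|$, the score is also $1$-Lipschitz, and this yields the second bound in \eqref{eq:synonym_bounds}. For $\cR$, the maximizer $q^*$ may change between $c$ and $c'$. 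I therefore read $\cR(e_c) = \max_{q\in\cH}\cosim(e_c,\enc(q))$, which is the same object as $s$, and the same max-of-Lipschitz argument applies. If instead $q^*$ is fixed as a particular victim query, the bound follows directly from the single-$q$ Cauchy--Schwarz estimate. In passing I note that the combined score $s_{\text{comb}}$ is a convex combination of a max and a mean of $1$-Lipschitz functions, so it obeys the same bound. The deployed detector is therefore covered as well.

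Finally, the qualitative claim follows from these bounds. Suppose $\epsilon_{\text{syn}} \approx 0$ with $r\epsilon_{\text{syn}}$ small. Then $\cR(c') \approx \cR(c)$, so retrieval rank is preserved up to ties within margin $r\epsilon_{\text{syn}}$. Evasion additionally requires that $s(c;\cH)$ start within $r\epsilon_{\text{syn}}$ of the threshold $\hat\mu + \kappa\hat\sigma$. I expect this interpretive step to be the main obstacle, because the stated bounds alone show only that the detector score barely moves. The proposition's wording should be read as follows. The bounds are symmetric, so no continuous-gradient argument in the style of Theorem~\ref{thm:gradient_coupling}(i) constrains the adversary. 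Discrete substitutions are not paths in $\mathbb{S}^{d-1}$ along which coupling forces a retrieval loss, and the adversary can search the finite synonym set for a combination that crosses just below threshold while $\cR$ stays within $r\epsilon_{\text{syn}}$ of its original value. I would state this as a remark on the bounds rather than claim it as a guarantee.
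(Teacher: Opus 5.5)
Your proposal is correct and takes the same route as the paper's proof. You use a telescoping triangle inequality over the $r$ substitutions to get $\|\enc(c')-\enc(c)\|\le r\epsilon_{\text{syn}}$, then Cauchy--Schwarz against unit-norm query embeddings. Your extra steps fill in details the paper's two-line proof leaves implicit: the bound $|\max_i a_i-\max_i b_i|\le\max_i|a_i-b_i|$ for a maximizer over $\cH$ that may change, the extension to $s_{\text{comb}}$, and reading the evasion sentence as a remark rather than a consequence of the bounds.
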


\begin{proof}
Cauchy-Schwarz gives $|\enc(q)^\top(\enc(c') - \enc(c))| \leq \|\enc(c') - \enc(c)\|$, and triangle inequality over $r$ substitutions yields $\|\enc(c') - \enc(c)\| \leq r\epsilon_{\text{syn}}$.
\end{proof}

This does \emph{not} contradict Theorem~\ref{thm:gradient_coupling}: the coupling holds for \emph{continuous} perturbations in $\RR^d$, while synonym substitution operates in the \emph{discrete} token space where $\epsilon_{\text{syn}} \approx 0$ for paraphrase-trained encoders. The triangle inequality bound $r\epsilon_{\text{syn}}$ is an upper bound; non-linear encoder interactions may cause cancellation, making the actual perturbation smaller.

\paragraph{\memsad{}+.} We extend \memsad{} with character $n$-gram JSD features that are provably not synonym-invariant (Proposition~\ref{prop:memsad_plus}, Table~\ref{tab:memsad_plus}, Appendix~\ref{app:memsad_plus}). Empirically, \memsad{}+ raises \injecmem{} TPR from $0.00$ (synonym-evaded) to $0.40$ and original from $0.20$ to $0.60$.

\section{Additional defenses}
\label{sec:other_defenses}

We evaluate four additional defenses spanning orthogonal strategies, all operating at pre-ingestion time.
\textbf{Watermark:} the defender watermarks all legitimate entries at write time via a unigram scheme~\citep{zhao2024watermark,kirchenbauer2023watermark} ($z_{\text{thr}} = 1.5$, $\fpr = 0.000$ empirically; higher thresholds inflate FPR for short entries), rejecting unwatermarked entries; this requires control of the ingestion pipeline.
\textbf{Validation:} rule-based pattern matching ($O(|c| \cdot |P|)$).
\textbf{Proactive:} computes mean cosine similarity against 16 domain probes, flagging entries exceeding $\tau = 0.19$.
\textbf{Composite:} weighted ensemble ($w_{\text{wm}}{=}0.50$, $w_{\text{val}}{=}0.20$, $w_{\text{pro}}{=}0.30$, $\theta{=}0.50$): flag if $\sum_j w_j d_j(c) > \theta$ (Appendix~\ref{app:defenses}).

\section{Experiments}
\label{sec:experiments}

\subsection{Setup}

\paragraph{Vector memory.}
FAISS IndexFlatIP with L2-normalized all-MiniLM-L6-v2 embeddings ($d = 384$).
Benign corpus: $|\cM| = 1{,}000$ synthetic entries across 7 categories (task reminders, calendar events, user preferences, factual knowledge, document references, configuration settings, conversation history), matching the per-user memory scale of deployed systems where active memory stores typically contain $10^2$--$10^3$ entries~\citep{chen2024agentpoison,dong2025minja}; corpus-size ablation (Appendix~\ref{app:ablation}) validates results across $|\cM| \in \{50, \ldots, 1000\}$. \memsad{} calibration uses $N = 50$ benign entries as the reference set (Theorem~\ref{thm:calibration_bound} predicts $|\tau_N - \tau^*| \leq 0.044$ at $N=50$, $\delta=0.05$). \textbf{Cross-corpus generalization} is validated on a real-data corpus combining Natural Questions knowledge passages~\citep{kwiatkowski2019natural} with synthetic filler to $|\cM|=1{,}000$ (Appendix~\ref{app:nq_eval}): \agentpoison{} transfers ($\asrr = 1.00 \pm 0.00$); \minja{}'s $\asrr = 0.58 \pm 0.06$ on the topically homogeneous NQ corpus confirms corpus diversity is a risk factor.
Poison counts: $n = 5$ (\agentpoison{}), $10$ (\minja{}), $15$ (\injecmem{}).
$k = 5$; $|\cQ_v| = |\cQ_b| = 100$.
Bootstrap 95\% CIs from 5 independent seeds (each seed generates a distinct corpus permutation and query sample); attack outcomes are deterministic given a fixed corpus, so CI widths reflect corpus-composition variance.

\paragraph{Attacks.}
\agentpoison{}: centroid passage + DPR HotFlip trigger ($\cosim: 0.71 \to 0.78$).
\minja{}: progressive-shortening with bridging steps ($p_0 = 0.98$, $\lambda = 0.10$).
\injecmem{}: $3 \times n_{\text{base}}$ broad-anchor entries from 8 templates.

\subsection{Attack results}

\begin{table}[t]
  \centering
  \small
  \caption{Attack results ($|\cM|=1{,}000$, 100 queries, 5 seeds, bootstrap 95\% CI). $\asra^{\text{GPT-2}}$: permissive model without safety alignment (high-compliance upper bound); $\asra^{\text{4o-mini}}$: production-aligned; $^\dagger$projected from original papers.}
  \label{tab:attack_results}
  \vspace{2pt}
  \resizebox{\linewidth}{!}{%
  \begin{tabular}{lcccccc}
    \toprule
    Attack & $\asrr$ & $\asra^\dagger$ & $\asra^{\text{GPT-2}}$ & $\asra^{\text{4o-mini}}$ & $\asrt^\dagger$ & Ben.\ Acc. \\
    \midrule
    \agentpoison{} (trig.) & $1.00$ & $0.68$ & $0.90$ & $0.20$ & $0.68$ & $1.00$ \\
    \minja{}               & $0.14_{[.13,.15]}$ & $0.76$ & $0.17$ & $0.00$ & $0.11$ & $1.00$ \\
    \injecmem{}            & $0.07_{[.06,.07]}$ & $0.57$ & $0.00$ & $0.00$ & $0.04$ & $0.98$ \\
    \bottomrule
  \end{tabular}}%
\end{table}

Table~\ref{tab:attack_results} reports the main results at $|\cM| = 1{,}000$.
\agentpoison{} achieves $\asrr = 1.00$ under the triggered-query protocol, confirming trigger optimization defeats corpus dilution entirely.
Untriggered attacks (\minja{}, \injecmem{}) are naturally mitigated by corpus scale ($4.6\times$ and $7.1\times$ dilution).\footnote{$\isr$ measures write-time injection success; $\asrr$ measures retrieval-time exposure.}
In plain text-response mode, measured $\asra = 0.20$ under GPT-4o-mini is $3.4\times$ lower than the modelled $\asra = 0.68$~\citep{chen2024agentpoison}, reflecting production safety alignment (Figure~\ref{fig:measured_asr_a}, Appendix~\ref{app:additional_figures}); the tool-use setting raises $\asra$ to $0.48$ (Table~\ref{tab:tool_agent}, Appendix~\ref{app:tool_agent}).

\subsection{Attack-defense interaction matrix}

Table~\ref{tab:defense_tpr_fpr} reports the full $3 \times 5$ matrix.
Watermark achieves $\tpr = 1.00$ in simulation but fails silently when auto-storage bypasses ingestion control, motivating \memsad{}'s content-level detection.
\memsad{} detects \minja{} ($\tpr = 0.40$, AUROC $0.914$) and \injecmem{} ($\tpr = 0.20$, AUROC $0.816$); triggered calibration raises \agentpoison{} to $\tpr = 1.00$.
Composite portfolios achieve $\tpr = 1.00$, $\fpr = 0.00$ across all attacks (Proposition~\ref{prop:equilibrium}).

\begin{table}[t]
  \centering
  \small
  \caption{Defense TPR/FPR at operating thresholds ($\kappa = 2.0$ for \memsad{}). AP = \agentpoison{}, MJ = \minja{}, IM = \injecmem{}.}
  \label{tab:defense_tpr_fpr}
  \vspace{2pt}
  \begin{tabular}{l ccc ccc c}
    \toprule
    & \multicolumn{3}{c}{TPR ($\uparrow$)} & \multicolumn{3}{c}{FPR ($\downarrow$)} & AUROC \\
    \cmidrule(lr){2-4} \cmidrule(lr){5-7} \cmidrule(lr){8-8}
    Defense & AP & MJ & IM & AP & MJ & IM & (avg) \\
    \midrule
    Watermark   & 1.00 & 1.00 & 1.00 & 0.00 & 0.00 & 0.00 & 0.991 \\
    Validation  & 0.60 & 0.40 & 0.80 & 0.10 & 0.10 & 0.10 & --- \\
    Proactive   & 1.00 & 0.07 & 0.60 & 0.01 & 0.01 & 0.01 & --- \\
    \memsad{} (comb.)  & 0.00$^\dagger$ & 0.40 & 0.20 & 0.00 & 0.00 & 0.00 & 0.867 \\
    Composite   & 1.00 & 1.00 & 1.00 & 0.00 & 0.00 & 0.00 & --- \\
    \bottomrule
  \end{tabular}
  \\[2pt]
  {\footnotesize $^\dagger$Plain calibration; triggered: TPR\,=\,1.00, FPR\,=\,0.00, AUROC\,=\,1.000. FPR validated over 20 trials ($n = 1{,}000$ each); Clopper-Pearson 95\% CI: [0.000, 0.004] (Appendix~\ref{app:fpr}). \minja{} TPR varies across tables (here $0.40$, Table~\ref{tab:ood_comparison} $0.80$, Table~\ref{tab:encoder_gen} $0.80$--$1.00$) due to poison-set size and calibration regime; each caption fixes its conditions.}
  \vspace{-8pt}
\end{table}

\subsection{Adaptive adversary analysis}
\label{sec:adaptive_adversary}

Following~\citet{tramer2020adaptive}, we model a white-box adversary with full knowledge of $(\hat{\mu}, \hat{\sigma}, \kappa)$ who applies greedy synonym substitution ($>60$ pairs):
\begin{equation}
  c^* = \argmin_{c' \in \mathrm{Syn}(c)}\; s(c'; \cH) \quad \text{s.t.} \quad \rank(\enc(c'), \cM) \leq k.
  \label{eq:adaptive_obj}
\end{equation}

\begin{table}[t]
  \centering
  \small
  \caption{Adaptive adversary: evasion vs.\ retrieval degradation (Proposition~\ref{prop:synonym}).}
  \label{tab:adaptive_sad}
  \vspace{2pt}
  \begin{tabular}{lcccc}
    \toprule
    Attack & Evasion & $\Delta\asrr$ & Subs/Entry & Sim.\ $\Delta$ \\
    \midrule
    \agentpoison{} & 1.00 & 0.00 & 4.2 & $-0.01$ \\
    \minja{}        & 0.80 & 0.00 & 3.8 & $-0.02$ \\
    \injecmem{}     & 1.00 & 0.00 & 5.1 & $-0.01$ \\
    \bottomrule
  \end{tabular}
  \vspace{-4pt}
\end{table}

Table~\ref{tab:adaptive_sad} confirms Proposition~\ref{prop:synonym}: 80--100\% evasion with $\Delta\asrr \approx 0$ using 3.8--5.1 substitutions, consistent with the bound $r\epsilon_{\text{syn}} \leq 5.1 \times 0.004 = 0.020$.
The certified condition (Lemma~\ref{lem:certified}) still holds for \minja{} post-substitution ($\Delta_s = 0.13 > 0.08$) but fails for tighter margins.

\subsection{Statistical validation}

Bonferroni-corrected one-sided binomial tests ($\alpha' = 0.003$, 15 comparisons) with $H_0: \tpr \leq \tpr_{\text{base}} = 0.05$ (chance-level baseline reflecting random scoring at the operating $\fpr$): all composite results reject $H_0$ with power $= 1.00$ (Table~\ref{tab:hypothesis}, Appendix~\ref{app:hypothesis}).
\memsad{} AUROC at $\kappa = 2.0$: $0.914$ (\minja{}), $0.816$ (\injecmem{}), $0.870$ (\agentpoison{}, plain); triggered calibration raises \agentpoison{} to AUROC $= 1.000$.
Theory validation: calibration bound predicts $|\tau_N - \tau^*| \leq 0.022$ at $N = 200$; observed $0.014$ (Table~\ref{tab:ablation_calibration}, Appendix~\ref{app:ablation}).

\subsection{Ablation highlights}

Key results (full tables in Appendix~\ref{app:ablation}): $\kappa = 2.0$ with combined scoring optimal; \agentpoison{} robust across $|\cM|\in\{200,\ldots,1000\}$ while untriggered attacks degrade $4$--$7\times$; $\asrr$ saturates at $n_{\text{base}}{=}5$; triggered \memsad{} achieves $\tpr=1.000$ for \agentpoison{} on all 6 encoders incl. BGE-Large (Appendix~\ref{app:encoder_gen}); \memsad{} dominates OOD baselines on triggered attacks (AUROC $1.000$ vs.\ $0.945$ Energy Score; Appendix~\ref{app:ood}); GPT-4o-mini text-level sanitization is complementary but $1000\times$ slower (Appendix~\ref{app:llm_sanitization}); cross-corpus (NQ): \agentpoison{} retains $\asrr=1.00\pm 0.00$, \minja{} rises to $0.58\pm 0.06$ under topical homogeneity (Appendix~\ref{app:nq_eval}).

\textbf{Extended evaluations.} Compound exposure (Appendix~\ref{app:compound}): \minja{} at $\asrr=0.14$ reaches 90\% compromise in 4 sessions; composite defense ($\asrr^*=0$) is necessary. Tool-use agent (GPT-4o-mini, 5 tools): \agentpoison{} achieves $\asra=0.48$ [0.34, 0.62] (Appendix~\ref{app:tool_agent}). Mem0 production: LLM-mediated storage drops \agentpoison{} to $\asrr=0.00$ via reformulation; raw vector stores remain fully exposed (Appendix~\ref{app:mem0}).

\section{Discussion, limitations, and conclusion}
\label{sec:discussion}

No single defense dominates across access models: watermarking fails under auto-storage; \memsad{} needs triggered calibration for \agentpoison{}; proactive complements both. Defender uncertainty over $\alpha$ motivates composite portfolios ($\asrr^*=0$, Proposition~\ref{prop:equilibrium}); recommended deployment is watermarking at ingestion plus rolling \memsad{} at write-time ($\sim$2\,ms/entry); A-MemGuard~\citep{amemguard2025} is architecturally complementary at retrieval (Appendix~\ref{app:amemguard}). \textbf{Limitations:} Proposition~\ref{prop:synonym} marks the structural frontier (paraphrase-trained encoders make synonyms near-isometric; \memsad{}+ closes this only partially); $|\cM|{=}1{,}000$ synthetic entries (NQ in Appendix~\ref{app:nq_eval} partially addresses); slow-drift regret assumption; piecewise-linear hardness. \textbf{Conclusion:} \memsad{} is the first formally-guaranteed RAG memory-poisoning defense; open directions are closing the synonym gap, ANN scaling, and multi-modal extension.

\bibliographystyle{plainnat}
\bibliography{references}

\begin{thebibliography}{47}
\providecommand{\natexlab}[1]{#1}
\providecommand{\url}[1]{\texttt{#1}}
\expandafter\ifx\csname urlstyle\endcsname\relax
  \providecommand{\doi}[1]{doi: #1}\else
  \providecommand{\doi}{doi: \begingroup \urlstyle{rm}\Url}\fi

\bibitem[{A-MEM Team}(2025)]{amem}
{A-MEM Team}.
\newblock {A-MEM}: Agentic memory for {LLM} agents, 2025.
\newblock URL \url{https://github.com/agiresearch/A-MEM}.

\bibitem[Amari(2016)]{amari2016information}
Shun-ichi Amari.
\newblock \emph{Information Geometry and Its Applications}, volume 194 of
  \emph{Applied Mathematical Sciences}.
\newblock Springer, 2016.

\bibitem[Andriushchenko et~al.(2025)Andriushchenko, Croce, and
  Flammarion]{andriushchenko2025agentharm}
Maksym Andriushchenko, Francesco Croce, and Nicolas Flammarion.
\newblock {AgentHarm}: A benchmark for measuring harmfulness of {LLM} agents.
\newblock In \emph{International Conference on Learning Representations
  (ICLR)}, 2025.
\newblock URL \url{https://arxiv.org/abs/2410.09024}.

\bibitem[Anonymous(2025{\natexlab{a}})]{corruptrag2025}
Anonymous.
\newblock {CorruptRAG}: Practical corpus poisoning against retrieval-augmented
  generation.
\newblock \emph{arXiv preprint arXiv:2504.03957}, 2025{\natexlab{a}}.

\bibitem[Anonymous(2025{\natexlab{b}})]{memorygraft2025}
Anonymous.
\newblock {MemoryGraft}: Persistent compromise of {LLM} agents via poisoned
  experience retrieval.
\newblock \emph{arXiv preprint arXiv:2512.16962}, 2025{\natexlab{b}}.
\newblock URL \url{https://arxiv.org/abs/2512.16962}.

\bibitem[Anonymous(2025{\natexlab{c}})]{ragdefender2025}
Anonymous.
\newblock {RAGDefender}: Post-retrieval defense against corpus poisoning
  attacks on {RAG}.
\newblock In \emph{Annual Computer Security Applications Conference (ACSAC)},
  2025{\natexlab{c}}.
\newblock URL \url{https://arxiv.org/abs/2511.01268}.

\bibitem[Anonymous(2025{\natexlab{d}})]{reliabilityrag2025}
Anonymous.
\newblock {ReliabilityRAG}: Provably robust retrieval-augmented generation via
  maximum independent set.
\newblock In \emph{Advances in Neural Information Processing Systems
  (NeurIPS)}, 2025{\natexlab{d}}.
\newblock Poster.

\bibitem[Anonymous(2025{\natexlab{e}})]{seconrag2025}
Anonymous.
\newblock {SeCon-RAG}: Semantic and conflict-aware retrieval-augmented
  generation.
\newblock In \emph{Advances in Neural Information Processing Systems
  (NeurIPS)}, 2025{\natexlab{e}}.
\newblock Poster.

\bibitem[Anonymous(2026)]{injecmem2026}
Anonymous.
\newblock {InjecMEM}: Targeted memory injection with single interaction.
\newblock In \emph{International Conference on Learning Representations
  (ICLR)}, 2026.
\newblock Submission \texttt{openreview:QVX6hcJ2um}.

\bibitem[Carlini and Wagner(2017)]{carlini2017evaluating}
Nicholas Carlini and David Wagner.
\newblock Towards evaluating the robustness of neural networks.
\newblock In \emph{IEEE Symposium on Security and Privacy (S\&P)}, 2017.

\bibitem[Chaudhari et~al.(2024)Chaudhari, Severi, Abascal, Oprea, Vempala, and
  Melis]{chaudhari2024phantom}
Harsh Chaudhari, Giorgio Severi, John Abascal, Alina Oprea, Santosh Vempala,
  and Luca Melis.
\newblock Phantom: General trigger attacks on retrieval augmented language
  generation.
\newblock In \emph{Proceedings of the 2024 Conference on Empirical Methods in
  Natural Language Processing}, 2024.
\newblock URL \url{https://arxiv.org/abs/2405.20485}.

\bibitem[Chen et~al.(2024)Chen, Tan, Zhao, Cheng, Jiang, Zhang,
  et~al.]{chen2024agentpoison}
Zhaorun Chen, Zhen Tan, Hexiang Zhao, Zhengyang Cheng, Chenhui Jiang, Huan
  Zhang, et~al.
\newblock {AgentPoison}: Red-teaming {LLM} agents via poisoning memory or
  knowledge bases.
\newblock In \emph{Advances in Neural Information Processing Systems
  (NeurIPS)}, 2024.
\newblock URL \url{https://arxiv.org/abs/2407.12784}.

\bibitem[Cheng et~al.(2024)Cheng, Ding, Ju, et~al.]{trojanrag2024}
Pengzhou Cheng, Yidong Ding, Tianjie Ju, et~al.
\newblock {TrojanRAG}: Retrieval-augmented generation can be backdoor driver in
  large language models.
\newblock \emph{arXiv preprint arXiv:2405.13401}, 2024.

\bibitem[Cohen et~al.(2019)Cohen, Rosenfeld, and Kolter]{cohen2019certified}
Jeremy~M. Cohen, Elan Rosenfeld, and J.~Zico Kolter.
\newblock Certified adversarial robustness via randomized smoothing.
\newblock In \emph{International Conference on Machine Learning (ICML)}, 2019.

\bibitem[Dong et~al.(2025)]{dong2025minja}
Sicheng Dong et~al.
\newblock {MINJA}: Memory injection attacks on {LLM} agents via query-only
  interaction.
\newblock In \emph{Advances in Neural Information Processing Systems
  (NeurIPS)}, 2025.
\newblock URL \url{https://arxiv.org/abs/2503.03704}.

\bibitem[Dvoretzky et~al.(1956)Dvoretzky, Kiefer, and
  Wolfowitz]{dvoretzky1956asymptotic}
Aryeh Dvoretzky, Jack Kiefer, and Jacob Wolfowitz.
\newblock Asymptotic minimax character of the sample distribution function and
  of the classical multinomial estimator.
\newblock \emph{The Annals of Mathematical Statistics}, 27\penalty0
  (3):\penalty0 642--669, 1956.

\bibitem[Ebrahimi et~al.(2018)Ebrahimi, Rao, Lowd, and
  Dou]{ebrahimi2018hotflip}
Javid Ebrahimi, Anyi Rao, Daniel Lowd, and Dejing Dou.
\newblock {HotFlip}: White-box adversarial examples for text classification.
\newblock In \emph{Proceedings of the 56th Annual Meeting of the Association
  for Computational Linguistics (ACL)}, 2018.
\newblock URL \url{https://arxiv.org/abs/1712.06751}.

\bibitem[Gu et~al.(2024)Gu, Zheng, Pang, Du, Liu, Wang, Jiang, and
  Lin]{gu2024agentsmith}
Xiangming Gu, Xiaosen Zheng, Tianyu Pang, Chao Du, Qian Liu, Ye~Wang, Jing
  Jiang, and Min Lin.
\newblock {AGENT-SMITH}: A single image can jailbreak one million multimodal
  {LLM} agents instantly.
\newblock In \emph{Proceedings of the 41st International Conference on Machine
  Learning (ICML)}, 2024.
\newblock URL \url{https://arxiv.org/abs/2402.08567}.

\bibitem[Hayase et~al.(2021)Hayase, Kong, Somani, and Oh]{hayase2021spectre}
Jonathan Hayase, Weihao Kong, Raghav Somani, and Sewoong Oh.
\newblock {SPECTRE}: Defending against backdoor attacks using robust
  statistics.
\newblock In \emph{International Conference on Machine Learning (ICML)}, 2021.
\newblock URL \url{https://arxiv.org/abs/2104.11315}.

\bibitem[Karpukhin et~al.(2020)Karpukhin, Oguz, Min, Lewis, Wu, Edunov, Chen,
  and Yih]{karpukhin2020dense}
Vladimir Karpukhin, Barlas Oguz, Sewon Min, Patrick Lewis, Ledell Wu, Sergey
  Edunov, Danqi Chen, and Wen-tau Yih.
\newblock Dense passage retrieval for open-domain question answering.
\newblock In \emph{Proceedings of EMNLP}, 2020.

\bibitem[Kermack and McKendrick(1927)]{kermack1927sir}
William~Ogilvy Kermack and Anderson~G. McKendrick.
\newblock A contribution to the mathematical theory of epidemics.
\newblock \emph{Proceedings of the Royal Society of London. Series A},
  115\penalty0 (772):\penalty0 700--721, 1927.

\bibitem[Kirchenbauer et~al.(2023)Kirchenbauer, Geiping, Wen, Katz, Miers, and
  Goldstein]{kirchenbauer2023watermark}
John Kirchenbauer, Jonas Geiping, Yuxin Wen, Jonathan Katz, Ian Miers, and Tom
  Goldstein.
\newblock A watermark for large language models.
\newblock In \emph{International Conference on Machine Learning (ICML)}, 2023.

\bibitem[Kwiatkowski et~al.(2019)Kwiatkowski, Palomaki, Redfield, Collins,
  Parikh, Alberti, Epstein, Polosukhin, Devlin, and
  Lee]{kwiatkowski2019natural}
Tom Kwiatkowski, Jennimaria Palomaki, Olivia Redfield, Michael Collins, Ankur
  Parikh, Chris Alberti, Danielle Epstein, Illia Polosukhin, Jacob Devlin, and
  Kenton Lee.
\newblock Natural questions: A benchmark for question answering research.
\newblock \emph{Transactions of the Association for Computational Linguistics},
  7:\penalty0 453--466, 2019.

\bibitem[Laurent and Massart(2000)]{laurentmassart2000}
Beatrice Laurent and Pascal Massart.
\newblock Adaptive estimation of a quadratic functional by model selection.
\newblock \emph{Annals of Statistics}, 28\penalty0 (5):\penalty0 1302--1338,
  2000.
\newblock Lemma 1 provides chi-squared tail bounds used for sample standard
  deviation concentration.

\bibitem[Le~Cam(1973)]{lecam1973convergence}
Lucien Le~Cam.
\newblock Convergence of estimates under dimensionality restrictions.
\newblock \emph{The Annals of Statistics}, 1\penalty0 (1):\penalty0 38--53,
  1973.

\bibitem[Lee et~al.(2018)Lee, Lee, Lee, and Shin]{lee2018simple}
Kimin Lee, Kibok Lee, Honglak Lee, and Jinwoo Shin.
\newblock A simple unified framework for detecting out-of-distribution samples
  and deep generative models.
\newblock In \emph{Advances in Neural Information Processing Systems
  (NeurIPS)}, 2018.

\bibitem[Lehmann(1959)]{lehmann1959testing}
Erich~L. Lehmann.
\newblock \emph{Testing Statistical Hypotheses}.
\newblock Wiley, 1959.
\newblock Chapter 3: Uniformly most powerful tests; Karlin--Rubin theorem for
  monotone likelihood ratio families.

\bibitem[Li et~al.(2025)]{amemguard2025}
Lijun Li et~al.
\newblock {A-MemGuard}: A proactive defense framework for {LLM}-based agent
  memory.
\newblock \emph{arXiv preprint arXiv:2510.02373}, 2025.
\newblock URL \url{https://arxiv.org/abs/2510.02373}.

\bibitem[Liu et~al.(2020)Liu, Wang, Owens, and Li]{liu2020energy}
Weitang Liu, Xiaoyun Wang, John Owens, and Yixuan Li.
\newblock Energy-based out-of-distribution detection.
\newblock In \emph{Advances in Neural Information Processing Systems
  (NeurIPS)}, 2020.

\bibitem[Massart(1990)]{massart1990tight}
Pascal Massart.
\newblock The tight constant in the {D}voretzky--{K}iefer--{W}olfowitz
  inequality.
\newblock \emph{The Annals of Probability}, 18\penalty0 (3):\penalty0
  1269--1283, 1990.

\bibitem[Maurer and Pontil(2009)]{maurer2009empirical}
Andreas Maurer and Massimiliano Pontil.
\newblock Empirical {B}ernstein bounds and sample-variance penalization.
\newblock \emph{Conference on Learning Theory (COLT)}, 2009.

\bibitem[{Mem0 Team}(2024)]{mem0}
{Mem0 Team}.
\newblock {Mem0}: The memory layer for personalized {AI}, 2024.
\newblock URL \url{https://github.com/mem0ai/mem0}.

\bibitem[Micciancio and Goldwasser(2002)]{micciancio2001hardness}
Daniele Micciancio and Shafi Goldwasser.
\newblock \emph{Complexity of Lattice Problems: A Cryptographic Perspective}.
\newblock Kluwer Academic Publishers, 2002.
\newblock Chapter 3 establishes NP-hardness of CVP; used for the
  minimum-perturbation evasion hardness reduction.

\bibitem[Packer et~al.(2023)Packer, Fang, Patil, Zhang, Wooders, and
  Gonzalez]{memgpt2023}
Charles Packer, Vivian Fang, Shishir~G. Patil, Kevin Zhang, Sarah Wooders, and
  Joseph~E. Gonzalez.
\newblock {MemGPT}: Towards {LLMs} as operating systems.
\newblock In \emph{NeurIPS Workshop on Foundation Models for Decision Making},
  2023.

\bibitem[Robey et~al.(2023)Robey, Wong, Hassani, and
  Pappas]{robey2023smoothllm}
Alexander Robey, Eric Wong, Hamed Hassani, and George~J. Pappas.
\newblock {SmoothLLM}: Defending large language models against jailbreaking
  attacks.
\newblock In \emph{Advances in Neural Information Processing Systems
  (NeurIPS)}, 2023.

\bibitem[Shafran et~al.(2025)Shafran, Peleg, and Schuster]{shafran2025blocker}
Avital Shafran, Roei Peleg, and Tal Schuster.
\newblock Machine against the {RAG}: Jamming retrieval-augmented generation
  with blocker documents.
\newblock In \emph{USENIX Security Symposium}, 2025.

\bibitem[Tan et~al.(2025)Tan, Luan, Luo, Sun, Chen, and Dai]{tan2025revprag}
Xue Tan, Hao Luan, Mingyu Luo, Xiaoyan Sun, Ping Chen, and Jun Dai.
\newblock {RevPRAG}: Revealing poisoning attacks in retrieval-augmented
  generation through {LLM} activation analysis.
\newblock In \emph{Findings of the Association for Computational Linguistics:
  EMNLP}, 2025.
\newblock URL \url{https://arxiv.org/abs/2411.18948}.

\bibitem[Tram\`{e}r et~al.(2020)Tram\`{e}r, Carlini, Brendel, Madry, Kurakin,
  and Papernot]{tramer2020adaptive}
Florian Tram\`{e}r, Nicholas Carlini, Wieland Brendel, Aleksander Madry, Alexey
  Kurakin, and Nicolas Papernot.
\newblock On adaptive attacks to adversarial example defenses.
\newblock In \emph{Advances in Neural Information Processing Systems
  (NeurIPS)}, 2020.

\bibitem[Villani(2009)]{villani2009optimal}
C{\'e}dric Villani.
\newblock \emph{Optimal Transport: Old and New}.
\newblock Springer, Berlin, 2009.
\newblock Sorting-based estimation of $W_1$ in one dimension;
  Kantorovich-Rubinstein duality.

\bibitem[Wainwright(2019)]{wainwright2019highdim}
Martin~J. Wainwright.
\newblock \emph{High-Dimensional Statistics: A Non-Asymptotic Viewpoint}.
\newblock Cambridge University Press, 2019.

\bibitem[Wu et~al.(2024)Wu, Pi, Zheng, et~al.]{wu2024badagent}
Yifeng Wu, Ruqing Pi, Yue Zheng, et~al.
\newblock {BadAgent}: Inserting and activating backdoor attacks in {LLM}
  agents.
\newblock In \emph{Proceedings of the 62nd Annual Meeting of the Association
  for Computational Linguistics (ACL)}, 2024.
\newblock URL \url{https://arxiv.org/abs/2406.03007}.

\bibitem[Xiang et~al.(2024)Xiang, Tong, Sun, and Li]{xiang2024robustrag}
Chejian Xiang, Mintong Tong, Jie Sun, and Bo~Li.
\newblock Certifiably robust {RAG} against retrieval corruption.
\newblock In \emph{International Conference on Machine Learning (ICML)}, 2024.
\newblock URL \url{https://arxiv.org/abs/2405.15556}.

\bibitem[Xu et~al.(2026)]{xu2026memflow}
Zhenyang Xu et~al.
\newblock From storage to steering: Memory control flow attacks on {LLM}
  agents.
\newblock \emph{arXiv preprint arXiv:2603.15125}, 2026.

\bibitem[Xue et~al.(2024)Xue, Zheng, Hua, Shen, Liu, Zhao, and
  Lou]{xue2024badrag}
Jiaqi Xue, Mengxin Zheng, Ting Hua, Yilong Shen, Yepeng Liu, Ladislau Zhao, and
  Qian Lou.
\newblock {BadRAG}: Identifying vulnerabilities in retrieval augmented
  generation of large language models.
\newblock \emph{arXiv preprint arXiv:2406.00083}, 2024.

\bibitem[Zhang et~al.(2025)Zhang, Zheng, et~al.]{zhang2025asb}
Hanrong Zhang, Jingyuan Zheng, et~al.
\newblock Agent security bench ({ASB}): Formalizing and benchmarking attacks
  and defenses in {LLM}-based agents.
\newblock In \emph{International Conference on Learning Representations
  (ICLR)}, 2025.
\newblock URL \url{https://arxiv.org/abs/2410.02644}.

\bibitem[Zhao et~al.(2024)Zhao, Ananth, Li, and Wang]{zhao2024watermark}
Xuandong Zhao, Prabhanjan Ananth, Lei Li, and Yu-Xiang Wang.
\newblock Provable robust watermarking for {AI}-generated text.
\newblock In \emph{International Conference on Learning Representations
  (ICLR)}, 2024.
\newblock URL \url{https://arxiv.org/abs/2306.17439}.

\bibitem[Zou et~al.(2024)Zou, Geng, Wang, and Jia]{zou2024poisonedrag}
Wei Zou, Runpeng Geng, Binghui Wang, and Jinyuan Jia.
\newblock {PoisonedRAG}: Knowledge corruption attacks to retrieval-augmented
  generation of large language models.
\newblock In \emph{USENIX Security Symposium}, 2024.

\end{thebibliography}

\newpage
\appendix

\section{Write-time pipeline diagram}
\label{app:pipeline}

\begin{figure}[ht]
\centering
\begin{tikzpicture}[
  node distance = 0.45cm and 0.55cm,
  stage/.style  = {draw, rounded corners=2pt, minimum width=1.55cm,
                   minimum height=0.55cm, align=center, font=\scriptsize,
                   fill=gray!4},
  calib/.style  = {draw, rounded corners=2pt, minimum width=1.75cm,
                   minimum height=0.5cm, align=center, font=\scriptsize,
                   fill=blue!6},
  branch/.style = {draw, rounded corners=2pt, minimum width=1.45cm,
                   minimum height=0.5cm, align=center, font=\scriptsize},
  accept/.style = {branch, fill=green!12},
  reject/.style = {branch, fill=red!12},
  arr/.style    = {-{Stealth[length=3.5pt]}, semithick},
  dashedarr/.style = {-{Stealth[length=3.5pt]}, semithick, dashed, gray!60!black},
]
  \node[stage]                  (cand)  {candidate\\ entry $c$};
  \node[stage, right=of cand]   (enc)   {encoder\\ $\enc(\cdot)$};
  \node[stage, right=of enc]    (sim)   {cosine sim\\ vs.\ $\cH$};
  \node[stage, right=0.85cm of sim] (score) {combined\\ score $s_{\text{comb}}$};
  \node[stage, right=of score]  (cmp)   {compare\\ to $\hat\mu+\kappa\hat\sigma$};
  \draw[arr] (cand)  -- (enc);
  \draw[arr] (enc)   -- node[above, font=\tiny]{$e_c$} (sim);
  \draw[arr] (sim)   -- node[above, font=\tiny, align=center, inner sep=1pt]{$s_{\max}$\\[-1pt]$s_{\mathrm{mean}}$} (score);
  \draw[arr] (score) -- node[above, font=\tiny]{$s$} (cmp);

  \node[accept, above right=0.30cm and 0.45cm of cmp] (acc) {accept: $c\!\to\!\cM$};
  \node[reject, below right=0.30cm and 0.45cm of cmp] (rej) {reject: discard};
  \draw[arr] (cmp.east) -- ++(0.18,0) |- node[near start, right, font=\tiny]{$s\!\leq\!\hat\mu\!+\!\kappa\hat\sigma$} (acc.west);
  \draw[arr] (cmp.east) -- ++(0.18,0) |- node[near start, right, font=\tiny]{$s\!>\!\hat\mu\!+\!\kappa\hat\sigma$} (rej.west);

  \node[calib, below=0.55cm of sim]   (qh)   {$\cH$: rolling\\ victim queries};
  \node[calib, below=0.55cm of score] (mref) {$\cM_{\text{ref}}$:\\ calibration $(\hat\mu,\hat\sigma)$};
  \draw[dashedarr] (qh)   -- (sim);
  \draw[dashedarr] (mref) -- (cmp);
\end{tikzpicture}
\caption{\textbf{\textsc{MemSAD} write-time pipeline} (Def.~\ref{def:sad}, Alg.~\ref{alg:sad}). Every candidate~$c$ is embedded, scored against a rolling victim-query history~$\cH$ via the combined max--mean similarity (Eq.~\ref{eq:sad_combined}), and compared to the $\hat\mu+\kappa\hat\sigma$ threshold derived from the benign reference corpus~$\cM_{\text{ref}}$. Rejection happens \emph{before} the entry is committed to memory --- attacks are blocked at write time rather than filtered at retrieval.}
\label{fig:pipeline}
\end{figure}

\section{Proofs of main results}
\label{app:proofs}

\subsection{Proof of Lemma~\ref{lem:certified} (Certified detection radius)}
\label{app:proof_certified}

\paragraph{Setup.} Recall $\bar s = \max_{m \in \cM_{\text{ref}}} \cosim(\enc(m), \enc(q^*))$ is the largest benign similarity to $q^*$ in the reference set, and $s_{\text{adv}} = \cosim(\enc(c), \enc(q^*))$. We write $s(c;\cH) = \alpha s_{\max}(c;\cH) + (1-\alpha) s_{\text{mean}}(c;\cH)$ for the deployed combined score, and $\hat\mu, \hat\sigma$ for the calibration statistics on $\cM_{\text{ref}}$.

\paragraph{Step 1: lower-bound $s_{\max}(c;\cH)$ by $s_{\text{adv}}$.} Since $q^* \in \cH$ (the victim query is in the rolling history), $s_{\max}(c;\cH) = \max_{q \in \cH} \cosim(\enc(c), \enc(q)) \geq \cosim(\enc(c), \enc(q^*)) = s_{\text{adv}}$.

\paragraph{Step 2: lower-bound $s_{\text{mean}}(c;\cH)$ under Assumption~\ref{ass:typical_query}.} Under the typical-query assumption, $s_{\text{mean}}(c;\cH) \geq \hat\mu_{\cH}(c) - \kappa\hat\sigma_{\cH}(c)$, which absorbs into a constant offset $\eta_{\text{mean}}$ depending on $\kappa$ and the running statistics. We bound this contribution conservatively as $s_{\text{mean}}(c;\cH) \geq \bar s - \eta_{\text{mean}}$.

\paragraph{Step 3: combined-score lower bound.} Combining steps 1 and 2,
\[
  s(c;\cH) \;\geq\; \alpha\, s_{\text{adv}} + (1-\alpha)(\bar s - \eta_{\text{mean}})
  \;=\; \alpha(\bar s + \Delta_s) + (1-\alpha)(\bar s - \eta_{\text{mean}})
  \;=\; \bar s + \alpha \Delta_s - (1-\alpha)\eta_{\text{mean}}.
\]

\paragraph{Step 4: detection certificate.} The detector fires iff $s(c;\cH) > \hat\mu + \kappa\hat\sigma$. By the calibration bound (Theorem~\ref{thm:calibration_bound}), $\hat\mu + \kappa\hat\sigma \leq \tau^* + \eta$ where $\eta$ is the calibration tolerance and $\tau^*$ is the population threshold. Since $\bar s \geq \tau^* - \eta$ by the construction of $\bar s$ as a sample max,
\[
  s(c;\cH) - (\hat\mu + \kappa\hat\sigma) \;\geq\; \alpha\Delta_s - (1-\alpha)\eta_{\text{mean}} - 2\eta.
\]
The certificate $s(c;\cH) > \hat\mu + \kappa\hat\sigma$ holds whenever $\alpha\Delta_s > (1-\alpha)\eta_{\text{mean}} + 2\eta$, which under $\alpha = 1$ (pure max-cosine score) reduces to $\Delta_s > 2\eta$. The stated radius condition $\Delta_s > \kappa\hat\sigma + \eta$ is the deployed-score equivalent absorbing the typical-query slack into $\kappa\hat\sigma$ and tightening to a single calibration tolerance $\eta$. \qed

\paragraph{Failure mode.} If Assumption~\ref{ass:typical_query} fails (e.g., $q^*$ is an outlier in $\cH$ produced by a colluding user), the lower bound in Step 2 fails and the certificate becomes probabilistic rather than deterministic. The deterministic guarantee is recovered for the $s_{\max}$-only detector ($\alpha = 1$) at the cost of weaker detection of distributed attacks.

\subsection{Proof of Theorem~\ref{thm:gradient_coupling} (geometric interpretation)}
\label{app:proof_gradient}

Under L2-normalization, embeddings lie on $\mathbb{S}^{d-1}$. The \emph{Riemannian} (spherical, tangent-space) gradient of $f(e_c) = e_c^\top v$ on the sphere is $\nabla_{\mathbb{S}} f = v - (e_c^\top v) e_c = \Pi_{e_c}^{\perp} v$, the projection of $v$ onto the tangent space $T_{e_c}\mathbb{S}^{d-1}$; this is distinct from the ambient Euclidean gradient and is the relevant gradient for any continuous evasion path constrained to $\mathbb{S}^{d-1}$. For any monotone transformation $g$, the chain rule gives $\nabla(g \circ f) = g'(f) \cdot \nabla f$, which points in the same direction as $\nabla f$ since $g' > 0$. The converse (non-monotone $\cD$ admits evasion) follows because non-monotonicity creates a region where increasing retrieval similarity decreases detection score, giving the adversary a ``free'' direction in embedding space.

\subsection{Proof of Theorem~\ref{thm:minimax_lower} (detailed)}
\label{app:proof_minimax}

\paragraph{Setup.} The detector $\hat\pi_N$ is calibrated by drawing $N$ i.i.d.\ samples from the benign distribution $P_0 = \cN(\mu_0, \sigma^2)$ and producing an estimator $\hat\tau_N(X_1, \ldots, X_N)$ of the oracle threshold $\tau^*$. At inference time, a single test sample $c$ is drawn from either $P_0$ (benign) or $P_1 = \cN(\mu_1, \sigma^2)$ (adversarial) with $\mu_1 - \mu_0 = \rho\sigma$, and classified by $\hat\pi_N(c) = \mathbf{1}[s(c) > \hat\tau_N]$.

\paragraph{Bayes error decomposition.} Under the symmetric prior $\PP(H_0) = \PP(H_1) = 1/2$, the oracle threshold $\tau^* = (\mu_0+\mu_1)/2$ achieves Bayes error $B(\rho) = \Phi(-\rho/2)$, where $\Phi$ is the standard normal CDF. For any calibrated threshold $\hat\tau_N$, a first-order Taylor expansion of the Gaussian CDF around $\tau^*$ gives
\begin{equation}
  \PP_{\text{err}}(\hat\pi_N) - B(\rho) = \tfrac{1}{2\sigma}\,\phi(\rho/2)\,|\hat\tau_N - \tau^*| + O\big((\hat\tau_N-\tau^*)^2\big),
  \label{eq:excess-error}
\end{equation}
where $\phi$ is the standard normal density. Hence excess error is linear in $|\hat\tau_N - \tau^*|$ to leading order, and reducing the excess to $\epsilon$ requires $|\hat\tau_N - \tau^*| = O(\sigma\epsilon)$.

\paragraph{Lower bound via Le Cam (calibration formulation).} The threshold-estimation problem is to estimate $\tau^*$ from $N$ benign samples $X_1, \ldots, X_N \sim P_0 = \cN(\mu_0, \sigma^2)$. Consider the two-point family $\{\mu_0, \mu_0 + \Delta\}$ with $\Delta = \sigma\rho/2$ (so $\tau^* \in \{\tau^*_0, \tau^*_0 + \Delta/2\}$ depending on which $\mu_0$ is true). For product distributions $P_{\mu_0}^N$ and $P_{\mu_0+\Delta}^N$,
\[
  \KL(P_{\mu_0}^N \| P_{\mu_0+\Delta}^N) = N\Delta^2 / (2\sigma^2) = N\rho^2/8.
\]
By Pinsker's inequality, $\TV(P_{\mu_0}^N, P_{\mu_0+\Delta}^N) \leq \sqrt{N\rho^2/16} = \rho\sqrt{N}/4$. Le Cam's two-point inequality~\citep{lecam1973convergence} then yields, for any estimator $\hat\tau_N$,
\[
  \sup_{\mu \in \{\mu_0, \mu_0+\Delta\}} \EE_\mu \big[ |\hat\tau_N - \tau^*(\mu)| \big] \;\geq\; \tfrac{\Delta}{4}\,\big(1 - \TV\big) \;=\; \tfrac{\sigma\rho}{8}\,\big(1 - \rho\sqrt{N}/4\big).
\]
For the bound to be non-trivial we require $\rho\sqrt{N} < 4$, i.e., $N < 16/\rho^2$; in that regime the right-hand side is $\Omega(\sigma\rho)$, which combined with~\eqref{eq:excess-error} gives expected excess error $\Omega(\rho)$. To drive this excess to $\epsilon$, we therefore require $N \geq c(\epsilon)/\rho^2$ for an absolute constant $c(\epsilon) > 0$ depending only on the prior-symmetry and the target excess level. This is the lower bound stated in Theorem~\ref{thm:minimax_lower}.

\paragraph{Distinction from a batch test.} Le Cam can also be applied to the \emph{batch} hypothesis test (\textsc{is the entire $N$-sample dataset drawn from $P_0$ or $P_1$?}), which yields the familiar $\KL(P_0^N \| P_1^N) = N\rho^2/2$ and $\rho\sqrt{N}/2$ total-variation bound. That formulation is \emph{not} the MEMSAD inference setting, in which a single test point $c$ is classified after the calibration phase; we use the calibration formulation above instead.

\paragraph{Tightness.} Combining Theorem~\ref{thm:calibration_bound} ($|\hat\tau_N - \tau^*| = O(\sigma\sqrt{\log(1/\delta)/N})$) with~\eqref{eq:excess-error} gives expected excess error $O(\sqrt{\log(1/\delta)/(N\rho^2)})$. Setting this equal to $\epsilon$ yields $N = O(\log(1/\delta)/(\epsilon^2\rho^2))$, matching the $\Omega(1/\rho^2)$ lower bound up to logarithmic and $\epsilon$-dependent factors. The irreducible Bayes error $B(\rho) = \Phi(-\rho/2)$ persists at any $N$ and reflects intrinsic Gaussian overlap of $P_0$ and $P_1$, not calibration error.

\subsection{Proof of Theorem~\ref{thm:calibration_bound}}
\label{app:proof_calibration}

\paragraph{Mean concentration.} The cosine similarity score $s(c;\cH)$ takes values in $[-1, 1]$ (interval length $2$). Hoeffding's inequality~\citep{wainwright2019highdim} on a bounded interval of length $b - a = 2$ gives
\[
  \PP\!\left(|\hat\mu_N - \mu| \geq t\right) \;\leq\; 2\exp\!\left(-\tfrac{2 N t^2}{(b-a)^2}\right) \;=\; 2\exp\!\left(-\tfrac{N t^2}{2}\right).
\]
Setting the right-hand side to $\delta/2$ yields
\[
  |\hat\mu_N - \mu| \;\leq\; \sigma\sqrt{2\log(4/\delta)/N} \quad \text{w.p.}\;\geq 1 - \delta/4,
\]
where the $\sigma$ factor is recovered as a multiplicative constant absorbing the universal $\sqrt{2}$ from the interval length $2$.\footnote{An empirical Bernstein bound~\citep{maurer2009empirical} could replace the absolute Hoeffding range with the variance, yielding a tighter $\sqrt{\sigma^2 \log/N}$ rate; we state the worst-case bound here for simplicity.}

\paragraph{Variance concentration (two-sided).} Let $Z = (N-1)\hat\sigma_N^2 / \sigma^2 \sim \chi^2(N-1)$ (exactly for Gaussian scores; approximately for bounded scores via Berry-Esseen). Laurent--Massart~\citep{laurentmassart2000} Lemma 1 yields the two-sided bound
\[
  \PP\big[(1 - 2\sqrt{x/(N-1)})_+\,(N-1) \;\leq\; Z \;\leq\; (1 + 2\sqrt{x/(N-1)} + 2x/(N-1))(N-1)\big] \;\geq\; 1 - 2 e^{-x}.
\]
Setting $x = \log(4/\delta)$ and dividing by $N-1$, with the substitution $u = Z/(N-1) - 1$ so $u \in [-c_-, c_+]$ for $c_\pm = O(\sqrt{\log(1/\delta)/N})$:
\[
  |\hat\sigma_N - \sigma| \;=\; \sigma\,|\sqrt{1+u} - 1| \;\leq\; \sigma\,\max(|c_-|, |c_+|) / 2 \cdot (1 + o(1))
\]
where the upper-tail bound uses $\sqrt{1+u} - 1 \leq u/2$ for $u \geq 0$, and the lower-tail bound uses the corrected inequality $|\sqrt{1+u} - 1| \leq |u|/(2\sqrt{1+u}) \leq |u|/\sqrt{2}$ for $u \in [-1/2, 0]$ (so $\sqrt{1+u} \geq 1/\sqrt{2}$). Both tails together give
\[
  |\hat\sigma_N - \sigma| \;\leq\; \sigma\sqrt{2\log(4/\delta)/(N-1)} \quad \text{w.p.}\;\geq 1 - \delta/2,
\]
where we have absorbed the leading constants into a clean form valid for $N \geq 4\log(4/\delta)$ (the regime in which $|u| \leq 1/2$ holds with high probability).

\paragraph{Combination.} By the triangle inequality $|\tau_N - \tau^*| \leq |\hat\mu_N - \mu| + \kappa|\hat\sigma_N - \sigma|$, and union bound over the two events at confidence $1 - \delta/4$ each yields total confidence $1 - \delta/2$; absorbing the additional $\delta/2$ tolerance for the variance lower-tail correction gives a final $1 - \delta$ bound, exactly matching the theorem statement
\[
  |\tau_N - \tau^*| \;\leq\; \sigma\!\left(\sqrt{\tfrac{1}{N}} + \kappa\sqrt{\tfrac{1}{2(N-1)}}\right)\sqrt{2\log(4/\delta)}.
\]
The factor $\sqrt{2\log(4/\delta)}$ replaces the previously stated $\sqrt{2\log(2/\delta)}$ to correctly reflect the four sub-events ($\hat\mu$ upper/lower, $\hat\sigma$ upper/lower) controlled by the union bound.

\subsection{Proof of Theorem~\ref{thm:regret}}
\label{app:proof_regret}

Decompose: $|\hat{\tau}_t - \tau_t^*| \leq \underbrace{|\hat{\tau}_t - \tau_t^{\cH}|}_{\text{estimation}} + \underbrace{|\tau_t^{\cH} - \tau_t^*|}_{\text{drift}}$.
Estimation error: $O(\sigma(1+\kappa)/\sqrt{m_{\max}})$ by Theorem~\ref{thm:calibration_bound}.
Drift error: cosine similarity is 1-Lipschitz on $\mathbb{S}^{d-1}$, and the window mean averages over $m_{\max}$ steps with linearly accumulating drift, so $|\mu_t^{\cH} - \mu_t| \leq m_{\max}\Delta/2$ (worst case at the window boundary; the $1/2$ arises from averaging). This affects only the constant in $m_{\max}^*$ but not the rate.
Setting equal: $\sigma/\sqrt{m_{\max}} = m_{\max}\Delta$ gives $m_{\max}^* = (\sigma/\Delta)^{2/3}$ (up to constants), yielding per-step regret $O(\sigma^{2/3}\Delta^{1/3})$.

\subsection{Fisher-Rao detection-evasion metric}
\label{app:proof_fisher_rao}

\begin{theorem}[Fisher-Rao detection-evasion metric]
\label{thm:fisher_rao}
Let $\cE = \{e \in \mathbb{S}^{d-1}\}$ under Assumption~\ref{ass:encoder}(iii). The Fisher information metric induced by the score family $p_e(s) = \cN(s; \cosim(e, e_{q^*}), \hat{\sigma}^2)$ is $g_{ij}^F(e) = \hat{\sigma}^{-2} \partial_i \cR(e) \partial_j \cR(e)$. The geodesic distance to any undetected embedding satisfies $d_F(e_{\text{adv}}, e') \geq |\cR(e_{\text{adv}}) - (\hat{\mu} + \kappa\hat{\sigma})| / \hat{\sigma}$, and the Fisher-Rao cost of any continuous evasion path equals its retrieval degradation.
\end{theorem}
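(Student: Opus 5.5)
The plan is to treat the score family as a one-parameter Gaussian location model whose mean is pulled back to the sphere through $\cR$, and then to compute path lengths directly. Throughout, I identify the detector score with the max-cosine score $s = \cR$, as in Theorem~\ref{thm:gradient_coupling}(i) with $g=\mathrm{id}$. I also first fix the maximizer $q^*$ locally, so that $\cR(e) = e^\top e_{q^*}$ is smooth on $\mathbb{S}^{d-1}$.

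\emph{Step 1 (metric).} For the family $\cN(s;\theta,\hat\sigma^2)$ with known variance, the Fisher information in the mean parameter $\theta$ is $\hat\sigma^{-2}$. Since $p_e$ depends on $e$ only through $\theta = \cR(e)$, the score is $\partial_i \log p_e(s) = \hat\sigma^{-2}(s-\cR(e))\,\partial_i\cR(e)$. Taking $\EE[(s-\cR)^2]=\hat\sigma^2$ gives $g^F_{ij} = \hat\sigma^{-2}\partial_i\cR\,\partial_j\cR$. Here $\partial_i$ are tangent-space derivatives on $T_e\mathbb{S}^{d-1}$, and $\nabla\cR = e_{q^*} - (e^\top e_{q^*})e$ as in Theorem~\ref{thm:gradient_coupling}(i). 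The metric is rank one, so $d_F$ is only a pseudo-distance. I will state this explicitly and define $d_F$ as the infimum of $g^F$-lengths over piecewise-$C^1$ paths.

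\emph{Step 2 (length of any path).} For a path $\gamma:[0,1]\to\mathbb{S}^{d-1}$,
\[
L_F(\gamma)=\int_0^1\sqrt{g^F(\dot\gamma,\dot\gamma)}\,dt=\hat\sigma^{-1}\int_0^1\big|\langle\nabla\cR(\gamma(t)),\dot\gamma(t)\rangle\big|\,dt=\hat\sigma^{-1}\int_0^1\Big|\tfrac{d}{dt}\cR(\gamma(t))\Big|\,dt .
\]
This is $\hat\sigma^{-1}$ times the total variation of $\cR$ along $\gamma$. It is therefore $\ge \hat\sigma^{-1}|\cR(\gamma(1))-\cR(\gamma(0))|$, with equality iff $\cR\circ\gamma$ is monotone.

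\emph{Step 3 (distance bound and the evasion-cost identity).} An undetected $e'$ satisfies $\cR(e')\le\hat\mu+\kappa\hat\sigma$. For a detected $e_{\text{adv}}$, Step 2 applied to any path from $e_{\text{adv}}$ to $e'$ gives length $\ge \hat\sigma^{-1}(\cR(e_{\text{adv}})-\cR(e')) \ge |\cR(e_{\text{adv}})-(\hat\mu+\kappa\hat\sigma)|/\hat\sigma$. Taking the infimum over paths yields the claimed bound on $d_F$.

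For the ``equals'' statement, I read it as follows. The Fisher-Rao cost of any $\cR$-monotone continuous evasion path, and hence the infimal cost over all paths, equals exactly $\hat\sigma^{-1}$ times the retrieval degradation $\cR(\gamma(0))-\cR(\gamma(1))$. Non-monotone paths pay at least this much. Attainment comes from the spherical geodesic from $e_{\text{adv}}$ away from $e_{q^*}$, along which $\cR$ is monotone.

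\emph{Main obstacle.} The main obstacle is that $\cR$ is a maximum over $\cH$, so it is only piecewise smooth and the active $q^*$ can switch along a path. I would handle this by noting that $\cR$ is $1$-Lipschitz on the sphere, so $t\mapsto\cR(\gamma(t))$ is absolutely continuous. On each open interval where a single $q$ is active, the computation in Step 2 holds verbatim. The switching set has measure zero, or can be avoided by a finite partition when $\cH$ is finite. Since the total-variation identity is additive over intervals, the bound and the equality for monotone paths survive.
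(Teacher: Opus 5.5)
Your proposal is correct and follows the paper's proof essentially step for step. You pull the scalar Gaussian Fisher information $\hat\sigma^{-2}$ back through $e\mapsto\cR(e)$ to get the rank-one metric, write $L_F(\gamma)=\hat\sigma^{-1}\int_0^1|\tfrac{d}{dt}\cR(\gamma(t))|\,dt$, bound this below by the endpoint gap, and substitute the undetected condition $\cR(e')\le\hat\mu+\kappa\hat\sigma$ (i.e.\ $g=\mathrm{id}$).

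Your refinements sharpen what the paper asserts only through its rank-one remark: $d_F$ is only a pseudo-distance, ``equals'' is exact precisely for $\cR$-monotone paths, and the active query in $\cH$ may switch along a path. Two small touch-ups are worth making:
\begin{itemize}
\item Handle switching by noting that wherever $\cR\circ\gamma$ is differentiable, its derivative equals $\tfrac{d}{dt}\,\gamma(t)^\top e_q$ for any active $q$, because $\cR\circ\gamma-\gamma^\top e_q\ge 0$ vanishes there. This is cleaner than relying on finitely many or measure-zero switching times, which need not hold for every path.
\item Reaching a specific $e'$ along a monotone path needs, after the meridian descent, a zero-cost leg along the level set of $\cR$. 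That level set is connected for fixed $q^*$ when $d\ge 3$.
\end{itemize}
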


\paragraph{Proof.} Consider the parameterized family $p_e(s) = \cN(s; \mu(e), \hat{\sigma}^2)$ where $\mu(e) = \cosim(e, e_{q^*})$ and $\hat{\sigma}^2$ is the calibrated variance. The Fisher information matrix on the statistical manifold $\{p_e : e \in \cE\}$ is:
\begin{equation}
  g_{ij}^F(e) = \EE_{p_e}\!\left[\frac{\partial \log p_e}{\partial e_i} \frac{\partial \log p_e}{\partial e_j}\right] = \frac{1}{\hat{\sigma}^2} \frac{\partial \mu}{\partial e_i} \frac{\partial \mu}{\partial e_j} = \hat{\sigma}^{-2} \partial_i \cR(e) \partial_j \cR(e).
\end{equation}
This is a rank-1 metric: $g^F = \hat{\sigma}^{-2} \nabla\cR \nabla\cR^\top$, the pullback of the scalar Fisher information $I(\mu) = 1/\hat{\sigma}^2$ through the map $e \mapsto \cR(e)$.

\paragraph{Geodesic lower bound.} For any path $\gamma: [0,1] \to \cE$ from $e_{\text{adv}} = \gamma(0)$ to $e' = \gamma(1)$, the Fisher-Rao length is:
\begin{equation}
  L_F(\gamma) = \int_0^1 \sqrt{\dot{\gamma}(t)^\top g^F(\gamma(t)) \dot{\gamma}(t)}\, dt = \frac{1}{\hat{\sigma}} \int_0^1 |\nabla\cR(\gamma(t))^\top \dot{\gamma}(t)|\, dt \geq \frac{|\cR(e_{\text{adv}}) - \cR(e')|}{\hat{\sigma}},
\end{equation}
where the inequality follows from $\int_0^1 |f'(t)| dt \geq |f(1) - f(0)|$ applied to $f(t) = \cR(\gamma(t))$.

\paragraph{Detection-retrieval coupling.} Since $\cD = g \circ \cR$ with monotone $g$ (Theorem~\ref{thm:gradient_coupling}), evasion requires $\cD(e') \leq \hat{\mu} + \kappa\hat{\sigma}$, which via monotonicity implies $\cR(e') \leq g^{-1}(\hat{\mu} + \kappa\hat{\sigma})$. Substituting into the geodesic bound:
\begin{equation}
  d_F(e_{\text{adv}}, e') \geq \frac{\cR(e_{\text{adv}}) - g^{-1}(\hat{\mu} + \kappa\hat{\sigma})}{\hat{\sigma}} = \frac{|\cR(e_{\text{adv}}) - (\hat{\mu} + \kappa\hat{\sigma})|}{\hat{\sigma}},
\end{equation}
where the last equality uses $g = \text{id}$ (since $\cD$ is the cosine similarity score and $\cR$ coincides with $\cD$ under Assumption~\ref{ass:encoder}). The proportional retrieval degradation along the path follows from the rank-1 structure: $g^F$ has a single nonzero eigenvalue in the $\nabla\cR$ direction, so the geodesic cost is entirely paid in retrieval degradation. \qed

\subsection{Proof of Proposition~\ref{prop:hardness} (detailed)}
\label{app:proof_hardness}

\begin{proposition}[Hardness of optimal discrete-token evasion]
\label{prop:hardness}
The \emph{discrete-token} minimum-edit evasion problem is NP-hard for piecewise-linear encoders, via reduction from CVP~\citep{micciancio2001hardness}. The continuous-embedding relaxation is, by contrast, solvable in polynomial time within any fixed activation region; the hardness arises specifically from the integrality (token-index) constraints in the input vocabulary.
\end{proposition}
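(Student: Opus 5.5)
The plan is to prove NP-hardness by a polynomial-time reduction from the decision version of CVP, using its NP-hardness as established by \citet{micciancio2001hardness}. A CVP instance consists of a basis $B \in \mathbb{Z}^{d \times n}$, a target $t \in \mathbb{Z}^d$, and a radius $r$. The question is whether some $x \in \mathbb{Z}^n$ satisfies $\|Bx - t\| \leq r$. We may restrict $x$ to a bounded box $\{-K,\ldots,K\}^n$ with $K$ polynomially bounded in bit-length, as is standard for hardness of CVP over bounded coefficient vectors.

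\textbf{Building the encoder.} We build a piecewise-linear encoder whose input is a token sequence of length $n$. Each position $i$ takes a token in a vocabulary $\cV$ of size $2K+1$ that indexes the integer $x_i$. The encoder's first layer is a lookup embedding sending token $j$ at position $i$ to the scalar $j - K$, followed by the linear map $B$. A final ReLU/absolute-value gadget then outputs an embedding $e(x)$ whose distance to a fixed direction encodes $\|Bx - t\|$. Concretely, we realize $\ell_\infty$- or $\ell_1$-CVP, which is also NP-hard and is piecewise-linear, as a network computing $\|Bx - t\|_1$ with $2d$ ReLUs. This value is mapped affinely into one coordinate of $\RR^{2}$ before the Assumption~\ref{ass:encoder}(iii) normalization. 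The normalization is monotone in that coordinate on the relevant range, so $\cR$ becomes a strictly decreasing function of $\|Bx - t\|_1$. We choose the victim query embedding $\enc(q^*)$ so that this holds, and set the detector threshold $\hat\mu + \kappa\hat\sigma$ to match $r$.

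\textbf{Formulating the evasion problem.} We phrase the evasion problem as follows: starting from the adversarial passage $c$, find the minimum number of token edits yielding $c'$ with $s(c';\cH) \leq \hat\mu+\kappa\hat\sigma$ while retaining $\rank(\enc(c'),\cM) \leq k$. The benign memory $\cM$ is chosen so that the retrieval constraint is a second threshold on $\cR$. The two thresholds then sandwich a band corresponding to $\|Bx-t\| \leq r$. Since every position may be edited, allowing $n$ edits makes feasibility equivalent to the CVP instance. This gives NP-hardness even of the feasibility version, and a fortiori of the minimum-edit version. All sizes are polynomial.

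\textbf{Continuous relaxation.} For the polynomial-time claim, fix an activation region of the piecewise-linear encoder. On that region the map is affine and the region is a polyhedron given by linear inequalities. The normalized score constraints become linear-fractional, and after fixing the sign pattern they become linear in the pre-normalization embedding. Minimum $\ell_1$ or $\ell_\infty$ perturbation is then a linear program, and minimum $\ell_2$ perturbation is a convex QP. Both are solvable in polynomial time. Hence the hardness comes only from the integrality of the token indices.

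\textbf{Main obstacle.} The hardest step is making the reduction respect the specific score semantics: the L2-normalization, and the max/combined score over $\cH$, which couples detection and retrieval through the same $\cR$ by Theorem~\ref{thm:gradient_coupling}. The issue is whether both the detection constraint and the retrieval constraint can carve out exactly the CVP ball. The first approach to try is a single-query history $\cH=\{q^*\}$, for which $s_{\text{comb}}=s_{\max}=\cR$. We would then place the ``evasion'' band strictly between the retrieval cutoff induced by the $k$-th benign similarity and the detection threshold. Care is needed that normalization preserves the ordering; this is handled by keeping the encoded coordinate in a range where $u \mapsto u/\sqrt{1+u^2}$ is strictly monotone.
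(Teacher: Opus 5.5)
Your hardness argument is correct in substance and takes a different route from the paper. The paper uses a bag-of-tokens vocabulary $v_i \mapsto b_i$ with $\enc(c)=\mathrm{ReLU}(\sum_{i\in I(c)} b_i)$. It assumes the all-active region, so that $\enc(c)=B\lambda$ with $\lambda\in\{0,1\}^n$, and identifies the retrieval constraint $\cosim(B\lambda, t/\|t\|)\geq\tau_{\text{ret}}$ with $0/1$-restricted CVP. You instead encode coordinates positionally, compute $\|Bx-t\|_1$ exactly with $2d$ ReLUs, and read it out through a 2-D embedding $(1,u)$, so that $\cR$ is strictly monotone in the distance. This gives an exact equivalence that the paper's version lacks. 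A cosine threshold against $t/\|t\|$ is scale-invariant and constrains only the angle of $B\lambda$, not $\|B\lambda-t\|$. Also, nothing forces the ReLU to stay all-active. The price is reducing from $\ell_1$-CVP (NP-hard, as in every $\ell_p$) with a position-indexed vocabulary.

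Three repairs are needed on this half.
\begin{itemize}
\item Since $|\cV|=2K+1$, $K$ itself, not just its bit-length, must be polynomial. For general CVP instances the optimal coefficients can be exponentially large. Reduce instead from the subset-sum-derived instances, whose yes-witnesses lie in $\{0,1\}^n$, so $K=1$ suffices.
\item Your thresholds are oriented backwards. Because $\cR$ decreases with distance, the evasion constraint $\cR\leq\tau_{\text{det}}$ is a \emph{lower} bound on $\|Bx-t\|_1$. Tying $\hat\mu+\kappa\hat\sigma$ to $r$ therefore yields an annulus, not the CVP ball. The radius $r$ must sit on the retrieval threshold, with $\tau_{\text{det}}$ at or above every reachable score. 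For the subset-sum instances, every lattice vector has $\ell_1$ distance at least $n=r$, so $\tau_{\text{det}}$ may sit at the score of distance $r$. Either way the detection side excludes nothing reachable, so, as in the paper, the hardness lives entirely in the retrieval constraint.
\item Treat $\tau_{\text{ret}}$ and $\enc(q^*)$ as instance parameters, as the paper's formalization does, or add a dedicated gadget for benign entries. Placing the $k$-th benign score at a prescribed value via token sequences under the same encoder would require knowing which lattice distances are achievable.
\end{itemize}

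The genuine gap is the continuous-relaxation step. Take a general piecewise-linear encoder with pre-normalization output $z=Ax+b$ on a fixed activation region. Then $\cosim(z,\enc(q^*))=z^\top\enc(q^*)/\|z\|_2$ is not linear-fractional, because the denominator is a Euclidean norm, and no sign pattern linearizes $\|z\|_2$.
\begin{itemize}
\item The retrieval constraint $z^\top\enc(q^*)\geq\tau_{\text{ret}}\|z\|_2$ is a second-order-cone constraint. It gives an SOCP (which is what the paper invokes), not an LP or QP.
\item The evasion constraint $z^\top\enc(q^*)\leq\tau_{\text{det}}\|z\|_2$ with $\tau_{\text{det}}>0$ is the complement of a convex cone, hence reverse-convex. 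The feasible band is therefore nonconvex, and neither your LP/QP argument nor a plain SOCP establishes polynomial time.
\end{itemize}
Your linearization is valid only for your own 2-D gadget, where the cosine is an explicit monotone function of the single affine coordinate $u$. To support the general claim, you have two options. You can restrict it to the retrieval (second-order-cone) constraint. Alternatively, take the relaxation literally in embedding space and perturb $e_c$ directly. By rotational symmetry about $\enc(q^*)$ the optimum lies in $\mathrm{span}(e_c,\enc(q^*))$, and the problem reduces to a closed-form two-dimensional computation.
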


\paragraph{Why the discrete formulation is the relevant hard problem.} The continuous problem of finding the minimum-norm embedding-space perturbation $\delta \in \RR^d$ subject to a cosine-similarity constraint on $e_c + \delta$ is a continuous convex optimization within any fixed ReLU activation region: the constraint set is a convex polytope, and within each polytope the objective is quadratic. A polynomial-time algorithm enumerates regions or solves a single SOCP under a fixed activation pattern. We do \emph{not} claim hardness for this relaxation. The deployed adversary, however, must produce a token sequence $c \in \cV^*$ whose embedding satisfies the constraint, and the embedding is tied to the token sequence via the integer indices selected from the vocabulary $\cV$. We formalize this discrete-token problem and show its NP-hardness below.

\paragraph{Discrete-token evasion problem.} Fix a piecewise-linear encoder $\enc : \cV^* \to \RR^d$ (e.g., one-hot lookup followed by a ReLU network), a victim query embedding $\enc(q^*)$, a retrieval threshold $\tau_{\text{ret}}$, and a detection threshold $\tau_{\text{det}} = \hat\mu + \kappa\hat\sigma$. Given an initial passage $c_0 \in \cV^*$ and an edit budget $r \in \NN$, decide whether there exists $c \in \cV^*$ with token-edit distance $d_T(c_0, c) \leq r$ such that $\cosim(\enc(c), \enc(q^*)) \in [\tau_{\text{ret}}, \tau_{\text{det}}]$.

\paragraph{Reduction from CVP.} Given a CVP instance $(\Lambda, t)$ where $\Lambda = B\mathbb{Z}^n$ is a lattice with basis $B \in \RR^{n \times n}$ and target $t \in \RR^n$, construct an instance of the discrete-token evasion problem as follows. Let $\cV = \{v_1, \ldots, v_n\}$ have $|\cV| = n$, and let the input embedding lookup map $v_i \mapsto b_i$ where $b_i$ is the $i$-th column of $B$. Take the encoder $\enc(c) = \mathrm{ReLU}(\sum_{i:v_i \in c} b_i)$, which under the all-active region computes $\enc(c) = \sum_{i \in I(c)} b_i = B \mathbf{1}_{I(c)}$ where $I(c)$ is the index set of tokens in $c$. The adversary's discrete choice of $c$ is therefore equivalent to selecting $\lambda \in \{0,1\}^n \subset \mathbb{Z}^n$, and the constraint $\cosim(B\lambda, t/\|t\|) \geq \tau_{\text{ret}}$ instantiates exactly the minimum-distance lattice problem in $\Lambda$ restricted to $0/1$ coefficients. The integrality constraint $\lambda \in \mathbb{Z}^n$ is enforced by the discrete vocabulary, which is what continuous embedding-space optimization lacks. Hence the discrete-token evasion problem subsumes CVP and is NP-hard under randomized reductions~\citep{micciancio2001hardness}. \qed

\paragraph{Practical implication.} The continuous-embedding evasion is poly-time within any activation region (PAT-flagged correctly): an adversary with white-box continuous access to the embedding could compute the minimum-norm direction in poly-time and would be defeated only by gradient coupling (Theorem~\ref{thm:gradient_coupling}). The genuine hardness, captured by Proposition~\ref{prop:hardness}, lies in projecting that continuous direction back onto the integer-valued token grid --- exactly the discrete bottleneck the synonym-loophole adversary (Proposition~\ref{prop:synonym}) exploits in the opposite direction. Together: gradient coupling closes the continuous attack surface, the discrete reduction closes the unconstrained-token attack surface, and the synonym loophole identifies the residual gap.

\subsection{Detection-evasion tradeoff surface}
\label{app:tradeoff_surface}

\begin{assumption}[Aligned mean gradient]
\label{ass:aligned_mean}
The mean-similarity gradient is not adversarial to the max-similarity gradient, in the sense $\langle \nabla s_{\text{mean}}, \nabla s_{\max}\rangle \geq -\beta \|\nabla s_{\max}\|^2$ for some $\beta \in [0, \alpha/(1-\alpha))$ at all points of differentiability of $s_{\text{comb}}$.
\end{assumption}

\begin{corollary}[Combined scoring preserves directional coupling]
\label{cor:combined_coupling}
Under Assumption~\ref{ass:aligned_mean}, the combined score $s_{\text{comb}} = \alpha\, s_{\max} + (1-\alpha)\, s_{\text{mean}}$ with $\alpha \in (0,1)$ satisfies directional monotonicity: any continuous perturbation $\delta$ that strictly decreases $s_{\text{comb}}$ also strictly decreases $\cR = s_{\max}$. Consequently, Lemma~\ref{lem:certified} and Theorem~\ref{thm:fisher_rao} apply with a multiplicative constant degradation reflecting the worst-case alignment $\beta$.
\end{corollary}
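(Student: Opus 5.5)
The plan is to work at first order along the perturbation and then integrate along the path, as in the proof of Theorem~\ref{thm:gradient_coupling}(i). Fix a point $e_c$ where $s_{\text{comb}}$ is differentiable, so that the maximiser $q^*$ is unique. Write the spherical gradients as in Appendix~\ref{app:proof_gradient}: $\nabla s_{\max} = \Pi^{\perp}_{e_c}\enc(q^*)$ and $\nabla s_{\text{mean}} = \Pi^{\perp}_{e_c}\bar q$, where $\bar q = |\cH|^{-1}\sum_{q\in\cH}\enc(q)$. Linearity of differentiation gives
\[
\langle \delta, \nabla s_{\text{comb}}\rangle = \alpha\langle \delta,\nabla s_{\max}\rangle + (1-\alpha)\langle\delta,\nabla s_{\text{mean}}\rangle .
\]
The goal is to show that this quantity being negative forces $\langle\delta,\nabla s_{\max}\rangle<0$. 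Once that holds pointwise, integrating along any continuous path $\gamma$ on which $s_{\text{comb}}$ strictly decreases shows that $\cR = s_{\max}$ strictly decreases. This is the same fundamental-theorem step used in the Fisher--Rao length bound of Theorem~\ref{thm:fisher_rao}. Points of non-differentiability (ties in the $\argmax$) form a measure-zero set along generic paths, and $s_{\max}$ is Lipschitz there, so I would handle them by a one-sided directional-derivative argument.

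\textbf{Main step.} Decompose $\delta = a\,u + \delta^{\perp}$, where $u = \nabla s_{\max}/\|\nabla s_{\max}\|$ and $\delta^{\perp}\perp u$. Then
\[
\langle\delta,\nabla s_{\text{comb}}\rangle = a\big(\alpha\|\nabla s_{\max}\| + (1-\alpha)\langle u,\nabla s_{\text{mean}}\rangle\big) + (1-\alpha)\langle\delta^{\perp},\nabla s_{\text{mean}}\rangle .
\]
Assumption~\ref{ass:aligned_mean} gives $\langle u,\nabla s_{\text{mean}}\rangle \ge -\beta\|\nabla s_{\max}\|$. Hence the coefficient of $a$ is at least $(\alpha-(1-\alpha)\beta)\|\nabla s_{\max}\|$, which is strictly positive because $\beta<\alpha/(1-\alpha)$. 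So along the $u$ axis the two scores have the same sign of directional derivative. That is the ``directional coupling'' content, and it yields the constant $c_\beta = \alpha-(1-\alpha)\beta$.

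\textbf{Main obstacle.} The residual term $(1-\alpha)\langle\delta^{\perp},\nabla s_{\text{mean}}\rangle$ is the difficulty. Assumption~\ref{ass:aligned_mean} constrains only the projection of $\nabla s_{\text{mean}}$ onto $u$. A $\delta$ with $a\ge 0$ and a large $\delta^{\perp}$ aligned against $\nabla s_{\text{mean}}$ would therefore decrease $s_{\text{comb}}$ without decreasing $s_{\max}$. My first attempt is to read the assumption, as the corollary's phrase ``worst-case alignment $\beta$'' suggests, as controlling the full tangential component of $\nabla s_{\text{mean}}$. That means showing, or adding, that $\Pi^{\perp}_u\nabla s_{\text{mean}}$ vanishes. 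In that case $\nabla s_{\text{comb}} = c\,\nabla s_{\max}$ with $c\ge c_\beta>0$, and the corollary follows exactly as Theorem~\ref{thm:gradient_coupling}(i). The geometric route would be to exploit that both gradients are tangent projections of vectors ($\enc(q^*)$ and $\bar q$) near the query cluster. If that fails, I would restrict the phrase ``continuous perturbation'' to perturbations that decrease $s_{\text{comb}}$ and lie in $\mathrm{span}(\nabla s_{\max},\nabla s_{\text{mean}})$; this is the only span in which the adversary's score gradient lives. There the $2\times 2$ computation above closes the argument.

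\textbf{Consequences.} With $\nabla s_{\text{comb}}$ controlled by $c_\beta\nabla s_{\max}$ from below, the bound on Step~3 of the Lemma~\ref{lem:certified} proof picks up the factor $c_\beta$, and the Fisher--Rao length bound of Theorem~\ref{thm:fisher_rao} is rescaled by $c_\beta^{-1}$. This gives the stated ``multiplicative constant degradation''.
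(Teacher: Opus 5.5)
\textbf{Verdict.} You found the real gap, and it is the same one in the paper's proof. Neither of your repairs closes it, and your consequences step uses the constant in the wrong direction.

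\textbf{The gap.} Your main step reproduces the paper's entire argument. The paper computes $\langle\nabla s_{\text{comb}},\nabla s_{\max}\rangle\ge(\alpha-(1-\alpha)\beta)\|\nabla s_{\max}\|^2>0$, which is your positive coefficient of $a$. It then asserts that every $\delta$ with $\langle\delta,\nabla s_{\text{comb}}\rangle<0$ has $\langle\delta,\nabla s_{\max}\rangle<0$. That assertion does not follow, for exactly the reason you gave. The half-spaces $\{\delta:\langle\delta,v\rangle<0\}$ and $\{\delta:\langle\delta,w\rangle<0\}$ are nested only when $w$ is a positive multiple of $v$. So the universal claim needs $\Pi^{\perp}_u\nabla s_{\text{mean}}=0$ exactly, and no bound short of zero suffices.

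\textbf{A counterexample.} Both you and the paper use Assumption~\ref{ass:aligned_mean} only at $e_c$, or at most on a neighbourhood of the path, and at that level the implication is false. Take $\cH=\{q_1,q_2\}$ with $\enc(q_1)=0.8e_1+0.6e_2$ and $\enc(q_2)=0.6e_1+0.8e_3$, and set $e_c=e_1$, $\alpha=1/2$. Then:
\begin{itemize}
\item $q^*=q_1$ is the unique maximiser, with $\nabla s_{\max}=0.6e_2$ and $\nabla s_{\text{mean}}=0.3e_2+0.4e_3$.
\item The assumption holds with $\beta=0$ on a neighbourhood of $e_1$.
\item $\nabla s_{\text{comb}}=0.45e_2+0.2e_3$.
\item The tangent vector $\delta=0.1e_2-e_3$ gives $\langle\delta,\nabla s_{\text{comb}}\rangle=-0.155$ but $\langle\delta,\nabla s_{\max}\rangle=0.06$.
\item Along $\gamma(t)=(e_1+t\delta)/\|e_1+t\delta\|$ one gets $s_{\text{comb}}=(0.75-0.155t)/\|e_1+t\delta\|$ and $s_{\max}=(0.8+0.06t)/\|e_1+t\delta\|$.
\end{itemize}
So for small $t>0$ the detector score strictly falls while retrieval strictly rises.

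\textbf{Why your repairs fail.} This $\delta$ lies in $\mathrm{span}(\nabla s_{\max},\nabla s_{\text{mean}})$, so your span restriction does not help. The offending direction $\Pi^{\perp}_u\nabla s_{\text{mean}}$ always lies in that span. Restricting to it discards only directions along which both derivatives vanish, so the $2\times 2$ computation closes nothing. Your proximity heuristic for $\enc(q^*)$ and $\bar q$ can make $\Pi^{\perp}_u\nabla s_{\text{mean}}$ small, but not zero.

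\textbf{When collinearity does hold.} Only under the literal reading, where the assumption holds at \emph{every} differentiable point of the sphere, does it force collinearity, and hence the corollary. For any $q\in\cH$ with $w=\Pi^{\perp}_{\enc(q)}\bar q\neq 0$, take $e=(\enc(q)+\epsilon w/\|w\|)/\sqrt{1+\epsilon^2}$. There $\langle\nabla s_{\text{mean}},\nabla s_{\max}\rangle/\|\nabla s_{\max}\|^2=\bar q^\top\enc(q)-\|w\|/\epsilon$, which is unbounded below as $\epsilon\to 0$. So this reading admits only degenerate histories: $\bar q=0$, or all $\enc(q)\in\{\pm v\}$. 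That is not the route you sketch, and it is not what the paper intends, since it reports an empirical $\beta\le 0.4$ on real histories.

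\textbf{What the computation does prove.} It shows that the steepest-descent evasion step $\delta=-\nabla s_{\text{comb}}$ lowers $s_{\max}$ at rate at least $c_\beta\|\nabla s_{\max}\|^2$. Your decomposition also gives a cone version: $\langle\delta,\nabla s_{\text{comb}}\rangle<-(1-\alpha)\|\Pi^{\perp}_u\nabla s_{\text{mean}}\|\,\|\delta\|$ forces $a<0$. State one of these, or add collinearity as an explicit hypothesis.

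\textbf{The consequences step.} Even under collinearity, $\nabla s_{\text{comb}}=c\,\nabla s_{\max}$ with $c\ge c_\beta$. Along a path on which $s_{\text{comb}}$ decreases monotonically by $D$, the retrieval drop is therefore at least $D/\sup_\gamma c$. The lower bound $c_\beta$ only caps that drop from above. A certificate needs an upper bound on $c$, which Assumption~\ref{ass:aligned_mean} does not supply.
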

\begin{proof}
$\nabla s_{\text{comb}} = \alpha \nabla s_{\max} + (1-\alpha)\nabla s_{\text{mean}}$. The projection onto the max-direction is
\[
  \langle \nabla s_{\text{comb}}, \nabla s_{\max}\rangle = \alpha\|\nabla s_{\max}\|^2 + (1-\alpha)\langle \nabla s_{\text{mean}}, \nabla s_{\max}\rangle \geq (\alpha - (1-\alpha)\beta)\|\nabla s_{\max}\|^2.
\]
Under Assumption~\ref{ass:aligned_mean}, $\alpha - (1-\alpha)\beta > 0$, so $\langle \nabla s_{\text{comb}}, \nabla s_{\max}\rangle$ has the same sign as $\|\nabla s_{\max}\|^2 > 0$, and any $\delta$ with $\langle \delta, \nabla s_{\text{comb}}\rangle < 0$ must satisfy $\langle \delta, \nabla s_{\max}\rangle < 0$. The empirical bound from our experiments (Section~\ref{sec:experiments}) gives $\beta \leq 0.4$ across all 6 encoders evaluated, well within the constraint $\beta < \alpha/(1-\alpha) = 1$ at $\alpha = 0.5$.
\end{proof}

By Theorem~\ref{thm:gradient_coupling}, $\cD(e_c') < \cD(e_c)$ (evasion) implies $\cR(e_c') < \cR(e_c)$ (retrieval degradation) to first order. The tradeoff defines a surface in $(\epsilon, \theta, \Delta\cD)$ space where $\theta$ is the angle between perturbation $\delta$ and gradient $\nabla\cD$. Expanding to second order:
\begin{equation}
  \cD(e_c + \epsilon\delta) - \cD(e_c) = \epsilon\|\nabla\cD\|\cos\theta + \frac{\epsilon^2}{2}\delta^\top H_{\cD}\,\delta + O(\epsilon^3),
\end{equation}
where $H_{\cD}$ is the Hessian of the cosine similarity on the sphere. Under L2-normalization, $H_{\cD} = -(e_c^\top \enc(q^*))(I - e_c e_c^\top)/\|e_c\|^2$, giving eigenvalues $\lambda_i = -(e_c^\top \enc(q^*))$ for directions orthogonal to $e_c$ (and $0$ in the radial direction). The evasion region $\cD(e_c') < \cD(e_c)$ requires $\cos\theta < -\epsilon\|H_{\cD}\|/(2\|\nabla\cD\|)$, which contracts as $\epsilon \to 0$, confirming the infinitesimal impossibility of evasion without retrieval degradation (Figure~\ref{fig:tradeoff_curves}).

\begin{figure}[ht]
\centering
\begin{tikzpicture}[scale=0.85]
  \draw[-{Stealth[length=5pt]}, thick] (0,0) -- (8,0) node[right, font=\small] {perturbation $\epsilon$};
  \draw[-{Stealth[length=5pt]}, thick] (0,0) -- (0,5) node[above, font=\small] {$\Delta\cR$};

  \draw[blue!70!black, thick] plot[smooth, domain=0:6.5, samples=50]
    (\x, {4 - 0.5*\x - 0.02*\x*\x});
  \node[blue!70!black, above right, font=\scriptsize] at (6.5, 0.86) {$\cR(e_c')$};

  \draw[red!70!black, thick, dashed] plot[smooth, domain=0:6.5, samples=50]
    (\x, {4 - 0.5*\x - 0.02*\x*\x});
  \node[red!70!black, below right, font=\scriptsize] at (6.5, 0.86) {$\cD(e_c')$};

  \draw[green!50!black, thick, dotted] (0, 2.0) -- (8, 2.0);
  \node[green!50!black, left, font=\scriptsize] at (0, 2.0) {$\hat{\mu}{+}\kappa\hat{\sigma}$};

  \draw[orange!80!black, thick, -{Stealth[length=4pt]}] (1.5, 3.2) -- (1.5, 2.5);
  \draw[orange!80!black, thick, -{Stealth[length=4pt]}] (1.5, 3.2) -- (2.5, 3.0);
  \node[orange!80!black, above, font=\scriptsize, align=center] at (2.0, 3.3) {synonym\\(discrete)};

  \draw[<->, thick] (3.8, 2.0) -- (3.8, 2.7);
  \node[right, font=\scriptsize] at (3.9, 2.35) {detection margin};

  \node[font=\scriptsize, align=center, draw, rounded corners=2pt, fill=gray!5] at (5.5, 4.2) {$\nabla\cD \equiv \nabla\cR$ (Thm.~\ref{thm:gradient_coupling})\\continuous curves overlap exactly};
\end{tikzpicture}
\vspace{-4pt}
\caption{Detection-evasion tradeoff. By gradient coupling, continuous perturbations cause identical degradation in detection score $\cD$ and retrieval score $\cR$ (overlapping curves). Synonym substitutions (orange) bypass this coupling via discrete jumps in token space (Proposition~\ref{prop:synonym}).}
\label{fig:tradeoff_curves}
\end{figure}

\subsection{Generalization bound}
\label{app:proof_generalization}

\begin{theorem}[Generalization bound]
\label{thm:generalization}
Suppose the novel-attack score density $f_{\text{new}}$ and the calibration score density $f_j$ are both bounded by $g_{\max}$. Then for $W_1(P_{\text{new}}, P_j) \leq \epsilon_W$,
\[
  |\tpr_{\text{new}} - \tpr_j| \;\leq\; \sqrt{2\,g_{\max}\,\epsilon_W} + O(N^{-1/2}).
\]
For Gaussian scores with std $\sigma$, $g_{\max} = (\sigma\sqrt{2\pi})^{-1}$, giving $|\tpr_{\text{new}} - \tpr_j| \leq \sqrt{\epsilon_W / (\sigma\sqrt{\pi/2})} + O(N^{-1/2})$.
\end{theorem}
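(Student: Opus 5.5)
We bound the TPR gap by interpolating with the cumulative distribution functions and converting a Wasserstein bound into a Kolmogorov (sup-CDF) bound. Write $\tau$ for the deployed threshold $\hat\mu+\kappa\hat\sigma$, and let $F_{\text{new}}, F_j$ be the score CDFs. Then $\tpr_{\text{new}} = 1-F_{\text{new}}(\tau)$ and $\tpr_j = 1-F_j(\tau)$. Hence $|\tpr_{\text{new}}-\tpr_j| \le \sup_t |F_{\text{new}}(t)-F_j(t)|$ when both are evaluated at the population threshold.

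The $O(N^{-1/2})$ term accounts for three things: using the empirical threshold $\hat\tau_N$ in place of $\tau$, and using empirical in place of population TPRs. For the threshold, I would invoke Theorem~\ref{thm:calibration_bound}, giving $|\hat\tau_N-\tau| = O(\sigma N^{-1/2})$; since the densities are bounded by $g_{\max}$, each CDF moves by at most $g_{\max}|\hat\tau_N-\tau|$. For the empirical TPRs, DKW/Massart~\citep{massart1990tight} gives an $O(N^{-1/2})$ deviation.

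The core step is the one-dimensional inequality $\sup_t|F-G| \le \sqrt{2 g_{\max} W_1(F,G)}$ for distributions with densities bounded by $g_{\max}$. I would use the one-dimensional identity $W_1 = \int_{\RR}|F(t)-G(t)|\,dt$~\citep{villani2009optimal}. Suppose $D := |F(t_0)-G(t_0)|$ is attained at $t_0$, say $F(t_0)-G(t_0)=D$. Because $F-G$ is Lipschitz with constant at most $g_{\max}$ (the difference of two nondecreasing functions, each with slope in $[0,g_{\max}]$), we have $F(t)-G(t) \ge D - g_{\max}|t-t_0|$ near $t_0$. This means $|F-G|$ stays above a triangle of height $D$ and half-width $D/g_{\max}$. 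Integrating gives $W_1 \ge D^2/g_{\max}$, so $D \le \sqrt{g_{\max}\epsilon_W}$, which is in fact stronger than the stated $\sqrt{2g_{\max}\epsilon_W}$. To leave slack, I would only use a one-sided triangle (half-width on one side of $t_0$), which yields $W_1 \ge D^2/(2g_{\max})$ and exactly the stated constant. This also covers the case where $t_0$ lies near the support boundary.

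Finally, I specialize to Gaussians by substituting $g_{\max} = (\sigma\sqrt{2\pi})^{-1}$, which gives $2g_{\max}\epsilon_W = \epsilon_W/(\sigma\sqrt{\pi/2})$. The main obstacle is the Lipschitz claim for $F-G$. I must confirm that only the density bounds are needed, not a bound on $|f_{\text{new}}-f_j|$: the slope of $F-G$ lies in $[-g_{\max}, g_{\max}]$ because both densities lie in $[0,g_{\max}]$, so the Lipschitz constant is $g_{\max}$, not $2g_{\max}$. A secondary point to state carefully is that the $O(N^{-1/2})$ constant depends on $\sigma$, $\kappa$, $g_{\max}$ and $\log(1/\delta)$.
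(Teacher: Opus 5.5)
Your argument is correct and is essentially the paper's route: both rest on the one-dimensional identity $W_1=\int_{\RR}|F_j-F_{\text{new}}|$ and on the Lipschitz continuity of the CDF gap that the density bound gives. Your triangle-area lower bound is the optimized form of the paper's window-averaging estimate $\frac{W_1}{2h}+g_{\max}h$, and you get the sharper $\sqrt{g_{\max}\epsilon_W}$ because you use the Lipschitz constant $g_{\max}$ of $F-G$ instead of bounding each CDF's variation separately. The $O(N^{-1/2})$ term is handled the same way via Theorem~\ref{thm:calibration_bound}.

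One quibble: a DKW term for empirical TPRs would scale with the number of attack samples, not with the benign calibration size $N$. It is not needed anyway, since the statement concerns population TPRs compared at a common threshold.
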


\begin{remark}[Practical estimation of $\epsilon_W$]
\label{rem:epsilon_w}
In practice, $\epsilon_W$ can be estimated from a validation set via sorting-based Wasserstein estimation~\citep{villani2009optimal}, or bounded analytically under encoder Lipschitz continuity. The theorem guarantees graceful TPR degradation as novel attacks diverge from the calibration distribution.
\end{remark}

\paragraph{Proof setup.} Let $F_j$ and $F_{\text{new}}$ denote the CDFs of scores $s(c; \cH)$ under $P_j$ and $P_{\text{new}}$, respectively. The TPR at threshold $\tau$ is $\tpr(\tau) = 1 - F(\tau)$.

\paragraph{Wasserstein-to-Kolmogorov bound.} The 1-Wasserstein distance is the integral of pointwise CDF difference:
\begin{equation}
  W_1(F_j, F_{\text{new}}) \;=\; \int_{-\infty}^{\infty} |F_j(s) - F_{\text{new}}(s)| \, ds.
\end{equation}
This is an $L^1$ norm on the CDF gap and does \emph{not} linearly upper-bound the $L^\infty$ (pointwise sup) gap. The correct relationship under a uniform density bound is the standard $L^1$-to-$L^\infty$ trade for absolutely continuous distributions: if $|f_j|, |f_{\text{new}}| \leq g_{\max}$, then for any $\tau$ and any $h > 0$,
\[
  |F_j(\tau) - F_{\text{new}}(\tau)|
  \;\leq\;
  \frac{1}{2h}\!\int_{\tau-h}^{\tau+h}\! |F_j(s) - F_{\text{new}}(s)|\,ds + g_{\max} h
  \;\leq\;
  \frac{W_1}{2h} + g_{\max} h,
\]
where the first inequality applies because $|F_j(\tau) - F_j(s)| \leq g_{\max}|s-\tau|$ and likewise for $F_{\text{new}}$, and the second uses non-negativity of the integrand. Optimising over $h > 0$ at $h = \sqrt{W_1/(2g_{\max})}$ gives
\[
  |F_j(\tau) - F_{\text{new}}(\tau)| \;\leq\; \sqrt{2\,g_{\max}\,W_1}.
\]
Since $|\tpr_j - \tpr_{\text{new}}| = |F_j(\tau) - F_{\text{new}}(\tau)|$, the stated bound follows. The $\sqrt{\cdot}$ scaling is sharp: a localized mass shift of $m$ across the threshold over distance $h \to 0$ has $W_1 = m h \to 0$ but $|F-G|_\infty \to m$, and only the bounded-density assumption rules this out at the rate $\sqrt{W_1}$.

\paragraph{Threshold estimation.} By Theorem~\ref{thm:calibration_bound}, $|\hat{\tau} - \tau^*| \leq \eta_N$ w.p.\ $\geq 1-\delta$, contributing an additional $O(1/\sqrt{N})$ term to the TPR bound via the score density at the threshold.

\paragraph{Non-asymptotic convergence.} The non-asymptotic bound follows from applying a uniform concentration argument (McDiarmid's inequality) over the rolling window, with each window contributing an independent $O(\sigma/\sqrt{m_{\max}})$ estimation error and accumulating $O(m_{\max}\Delta)$ drift.

\subsection{Online regret bound visualization}
\label{app:regret_viz}

Figure~\ref{fig:regret_surface} illustrates the bias-variance tradeoff in Theorem~\ref{thm:regret} as a function of window size $m_{\max}$.

\begin{figure}[ht]
\centering
\begin{tikzpicture}[scale=0.85]
  \draw[-{Stealth[length=5pt]}, thick] (0,0) -- (9,0) node[right, font=\small] {$m_{\max}$};
  \draw[-{Stealth[length=5pt]}, thick] (0,0) -- (0,5) node[above, font=\small] {per-step regret};

  \draw[blue!70!black, thick] plot[smooth, domain=0.5:8.5, samples=50]
    (\x, {3.5/sqrt(\x)});
  \node[blue!70!black, right, font=\scriptsize] at (8.6, 1.2) {$\sigma/\sqrt{m}$};

  \draw[red!70!black, thick] plot[smooth, domain=0.5:8.5, samples=50]
    (\x, {0.15*\x});
  \node[red!70!black, right, font=\scriptsize] at (8.6, 1.45) {$m\Delta$};

  \draw[purple!80!black, thick, dashed] plot[smooth, domain=0.5:8.5, samples=50]
    (\x, {3.5/sqrt(\x) + 0.15*\x});
  \node[purple!80!black, above right, font=\scriptsize] at (7.5, 2.7) {total};

  \filldraw[green!50!black] (3.2, {3.5/sqrt(3.2) + 0.15*3.2}) circle (3pt);
  \draw[green!50!black, dashed] (3.2, 0) -- (3.2, {3.5/sqrt(3.2) + 0.15*3.2});
  \node[green!50!black, below, font=\scriptsize] at (3.2, 0) {$m^*$};
  \node[green!50!black, right, font=\scriptsize] at (3.4, {3.5/sqrt(3.2) + 0.15*3.2 + 0.2}) {$O(\sigma^{2/3}\Delta^{1/3})$};

  \node[font=\tiny, align=center, draw, rounded corners=2pt, fill=blue!5] at (2, 4.2) {estimation\\(Thm.~\ref{thm:calibration_bound})};
  \node[font=\tiny, align=center, draw, rounded corners=2pt, fill=red!5] at (7, 4.2) {drift\\(non-stationarity)};
\end{tikzpicture}
\vspace{-4pt}
\caption{Online calibration regret decomposition (Theorem~\ref{thm:regret}). Small windows suffer high estimation error; large windows accumulate drift. The optimal $m^* = \Theta((\sigma/\Delta)^{2/3})$ balances both.}
\label{fig:regret_surface}
\end{figure}

\subsection{Stackelberg equilibrium existence}
\label{app:equilibrium}

\begin{proposition}[Equilibrium existence]
\label{prop:equilibrium}
The Stackelberg game $\cG$ (Definition~\ref{def:stackelberg}) admits a subgame-perfect equilibrium $(\pi_{\cD}^*, \cP^*)$ when $\Pi$ is restricted to threshold detectors with $\kappa \in [\kappa_{\min}, \kappa_{\max}]$.
\end{proposition}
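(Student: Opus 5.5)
The proof runs by backward induction: first show the follower's best-response set $\cB(\pi_{\cD})$ is nonempty for every admissible detector, then show the leader's objective in Eq.~\eqref{eq:leader} attains its minimum over the compact parameter interval $[\kappa_{\min},\kappa_{\max}]$.

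\emph{Parametrization.} Each threshold detector $\pi_\kappa(c)=\mathbf{1}[s(c;\cH)>\hat\mu+\kappa\hat\sigma]$ is indexed by $\kappa\in K:=[\kappa_{\min},\kappa_{\max}]$, and $K$ is compact. For fixed $\cH$, $\cM$, and calibration statistics, the only quantities that matter are the embeddings $\enc(c)$ of candidates $c$.

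\emph{Follower step.} Fix $\kappa$. By Assumption~\ref{ass:payload_invariance}, the follower maximizes $\asrr(\cP\setminus\cF(\pi_\kappa,\cP),\cQ_v)$. This quantity takes values in the finite set $\{0,1/|\cQ_v|,\dots,1\}$. The supremum over $|\cP|\le n$ is therefore attained by some $\cP^*(\kappa)$, so $\cB(\pi_\kappa)\neq\emptyset$. This needs no topology on $\cV^*$ and no measurable selection. If the follower is indifferent among several best responses, I would fix a tie-breaking rule in the leader's favour; this is the standard strong Stackelberg convention and it makes the inner $\max$ in Eq.~\eqref{eq:leader} well defined. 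Define
\[
V(\kappa)=\max_{\cP\in\cB(\pi_\kappa)}\asrr(\cP\setminus\cF(\pi_\kappa,\cP),\cQ_v).
\]
This is just the follower's optimal value, since payload invariance makes the follower's objective coincide with this $\asrr$.

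\emph{Leader step.} The leader minimizes $J(\kappa)=V(\kappa)+\lambda\,\fpr(\pi_\kappa)$ over $K$.
\begin{itemize}
\item $\kappa\mapsto V(\kappa)$ is nondecreasing. Raising $\kappa$ raises the threshold and so shrinks the set of flagged candidates. Any poison set that survives at threshold $\kappa$ therefore also survives at any $\kappa'>\kappa$, and the adversary's optimum can only grow.
\item $\fpr(\pi_\kappa)$, measured on a finite benign set, is nonincreasing and right-continuous as a step function in $\kappa$, because detection uses a strict inequality.
\item $V$ takes finitely many values. I would show it is lower semicontinuous by the same strict-inequality argument: the set of $\kappa$ at which a fixed $\cP$ is fully unfiltered is an open-from-the-left condition $\kappa\ge$ (score of $\cP$ minus $\hat\mu$)$/\hat\sigma$.
\end{itemize}

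The main obstacle is this semicontinuity bookkeeping. A nondecreasing step function plus a nonincreasing one can fail to attain its infimum at a jump point. My plan is to avoid continuity entirely. Both $V$ and $\fpr$ change value only at the finitely many breakpoints of the relevant scores, so $J$ is piecewise constant on finitely many intervals partitioning $K$. A minimum over finitely many attained values is attained, provided each piece is nonempty, so I would work with closed pieces or evaluate at the breakpoints themselves.

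For the genuinely continuous case, with $\cV^*$ unrestricted and attainable scores dense, I would argue differently. The attainable scores of any candidate lie in $[-1,1]$, and the only relevant question is whether the adversary can reach score $\le\hat\mu+\kappa\hat\sigma$ while staying in the top-$k$. Both $V$ and $J$ are then monotone step functions in the single threshold variable. Finally, I would assemble the leader's $\kappa^*$ and the follower's $\cP^*(\kappa^*)$, together with the best-response map $\kappa\mapsto\cP^*(\kappa)$ for off-path play, which gives subgame perfection.
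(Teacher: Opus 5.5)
Your proof is correct under the reading you state explicitly (that $\fpr(\pi_\kappa)$ is an empirical rate over a finite benign set), and it takes a genuinely different route from the paper. The paper argues topologically. The $\kappa$-interval is compact, the best-response correspondence is asserted to be upper hemicontinuous because $s$ is continuous and the threshold is linear in $\kappa$, and Berge's maximum theorem is invoked to make the defender's objective upper semicontinuous. The extreme value theorem then supplies the minimizer, and the $\lambda\,\fpr$ term is dropped along the way. You argue combinatorially instead. Since $\asrr$ takes values in $\{0,1/|\cQ_v|,\dots,1\}$, the follower's maximum is attained; this is sharper than the paper's ``the empty set is feasible'', which gives feasibility, not attainment. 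Then $J=V+\lambda\,\fpr$ has finite range, so its infimum is a minimum. Your route handles the full leader objective and needs neither compactness nor any semicontinuity; under your reading even the restriction to $[\kappa_{\min},\kappa_{\max}]$ is superfluous. The paper's route, as written, pairs upper semicontinuity with a minimization, which the extreme value theorem does not support. Its conclusion is rescued only by the monotonicity you prove: without the $\fpr$ term, $\kappa_{\min}$ is trivially optimal. A topological argument would be the natural tool if $\fpr$ were a population rate, but it would then need lower semicontinuity of $J$, which is exactly what fails at upward jumps of $V$.

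Several of your intermediate claims should be corrected or dropped.
\begin{itemize}
\item $V$ is not lower semicontinuous in general. The set $\{\kappa : \kappa \ge (\max_{p\in\cP} s(p;\cH)-\hat\mu)/\hat\sigma\}$ on which a fixed $\cP$ survives is closed. So wherever the smallest threshold admitting a given $\asrr$ level is attained, $V$ jumps up and takes the upper value there; it is right-continuous, hence upper semicontinuous, at that point.
\item $J$ is not monotone, and no separate ``continuous case'' is needed, because the range of $V$ is finite however dense the attainable scores are. The breakpoint bookkeeping is likewise unnecessary, since any real function with finite range attains its infimum on a nonempty set.
\item The finiteness of $\fpr$'s range, by contrast, is doing real work. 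With a population $\fpr$ under a continuous benign score law, $J$ can approach its infimum from just left of an upward jump of $V$ without attaining it. State the empirical-$\fpr$ reading as a hypothesis.
\item Under payload invariance, all elements of $\cB(\pi_\kappa)$ give the same $\asrr$ (when $\asra>0$), so no tie-breaking rule is needed. Also, the $\max$ over $\cB$ in the leader's objective is the pessimistic convention, not the leader-favourable strong Stackelberg one you cite.
\end{itemize}
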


\begin{proof}
The defender's strategy space $\Pi_\kappa = \{\pi_\kappa : \kappa \in [\kappa_{\min}, \kappa_{\max}]\}$ is compact. For fixed $\pi_\kappa$, the adversary's best-response set $\cB(\pi_\kappa) = \argmax_{\cP} \asrr(\cP \setminus \{p : s(p; \cH) > \hat{\mu} + \kappa\hat{\sigma}\}, \cQ_v)$ is nonempty (the empty set is always feasible) and upper hemicontinuous in $\kappa$ since $s(p; \cH)$ is continuous and the threshold $\hat{\mu} + \kappa\hat{\sigma}$ is linear in $\kappa$. The defender's objective $\max_{\cP \in \cB(\pi_\kappa)} \asrr(\cP, \cQ_v)$ is upper semicontinuous by the Berge maximum theorem. By the extreme value theorem, $\pi_{\cD}^* = \argmin_{\kappa \in [\kappa_{\min}, \kappa_{\max}]} \max_{\cP \in \cB(\pi_\kappa)} \asrr$ exists.
\end{proof}

In practice, the composite portfolio $(\pi_{\text{wm}} \lor \pi_{\text{SAD}} \lor \pi_{\text{pro}})$ achieves $\asrr^* = 0$ against all three attacks (Table~\ref{tab:defense_tpr_fpr}), representing the defender's ideal equilibrium where the adversary's best response yields zero utility.

\subsection{Finite-sample FPR concentration}
\label{app:fpr_concentration}

\begin{proposition}[FPR concentration]
\label{prop:fpr_concentration}
Let $c_1, \ldots, c_n$ be $n$ i.i.d.\ benign candidates evaluated against a threshold $\hat\tau_N$ calibrated from $N$ separate i.i.d.\ benign samples. Let $\widehat{\fpr}_n = n^{-1}\sum_{i=1}^n \mathbf{1}[s_i > \hat\tau_N]$ and let $\fpr^*$ be the population FPR at the oracle threshold $\tau^*$. Then
\begin{equation}
  \PP\!\left[|\widehat{\fpr}_n - \fpr^*| > \epsilon\right]
  \;\leq\;
  \underbrace{2\exp\!\left(-\frac{n\epsilon^2/2}{\fpr^*(1-\fpr^*) + \epsilon/3}\right)}_{\text{empirical FPR concentration (Bernstein)}}
  +
  \underbrace{2\exp\!\left(-\frac{N\epsilon^2}{8}\right)}_{\text{threshold-estimation error}}.
\end{equation}
The first term is Bernstein's inequality with the standard $\epsilon/3$ range correction (the previous omission of this term renders the bound vacuous when $\fpr^* = 0$, the regime of interest); the second tracks the calibration error, with the correct denominator being the calibration size $N$, \emph{not} the test-set size $n$, since the threshold is fixed once calibration ends.
\end{proposition}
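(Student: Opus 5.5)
We split the deviation with the triangle inequality through the random population FPR at the calibrated threshold, $F(\hat\tau_N) \coloneqq \PP_0[s > \hat\tau_N]$. This gives
\[
  |\widehat{\fpr}_n - \fpr^*| \le |\widehat{\fpr}_n - F(\hat\tau_N)| + |F(\hat\tau_N) - \fpr^*|.
\]
On the event that the total exceeds $\epsilon$, one of the two pieces must be large. A union bound then lets us bound each piece separately. We also split the slack, spending $\epsilon$ on the first term up to constants and $\epsilon/2$ on the second; the slack is ultimately absorbed into the constant $8$ of the second exponent.

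\textbf{Conditional Bernstein step.} Condition on the calibration sample, so $\hat\tau_N$ is fixed. Because the test candidates $c_1,\dots,c_n$ are independent of the calibration set, the indicators $\mathbf{1}[s_i > \hat\tau_N]$ are i.i.d.\ Bernoulli with parameter $F(\hat\tau_N)$. Each centered summand is bounded by $1$ and has variance $F(\hat\tau_N)(1-F(\hat\tau_N))$. Bernstein's inequality therefore gives the first exponential term, with the range correction $\epsilon/3$.

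The stated bound, however, uses the variance at the oracle, $\fpr^*(1-\fpr^*)$, not at $F(\hat\tau_N)$. We can replace one by the other on the good calibration event $|F(\hat\tau_N) - \fpr^*| \le \epsilon/2$. There the two variances differ by at most $\epsilon/2$, since $p(1-p)$ is $1$-Lipschitz on $[0,1]$. This discrepancy can be folded into the denominator's $\epsilon/3$ correction, possibly at the cost of adjusting constants. Integrating the conditional bound over the calibration randomness then yields the first term unconditionally.

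\textbf{Threshold step.} This is the step I expect to be the main obstacle: we need
\[
  \PP\bigl[|F(\hat\tau_N) - \fpr^*| > \epsilon/2\bigr] \le 2e^{-N\epsilon^2/8}.
\]
Controlling $|\hat\tau_N - \tau^*|$ via Theorem~\ref{thm:calibration_bound} and transferring it through the density introduces $\sigma$ and density constants that do not appear in the stated bound. I would therefore avoid that route and argue distribution-free, in quantile space.

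The plan is to interpret the calibrated threshold as a plug-in functional of the empirical CDF $\hat F_N$ and use the Dvoretzky--Kiefer--Wolfowitz inequality with Massart's constant~\citep{dvoretzky1956asymptotic,massart1990tight}:
\[
  \PP\bigl[\|\hat F_N - F\|_\infty > t\bigr] \le 2e^{-2Nt^2}.
\]
Suppose $\hat\tau_N$ behaves like an empirical $(1-\fpr^*)$-quantile, meaning $1 - \hat F_N(\hat\tau_N) = \fpr^* + O(1/N)$. Then
\[
  |F(\hat\tau_N) - \fpr^*| \le \|\hat F_N - F\|_\infty + O(1/N).
\]
Taking $t = \epsilon/4$ gives exactly $2e^{-N\epsilon^2/8}$. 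The $O(1/N)$ discretization is dominated for $N\epsilon \gtrsim 1$, and it suffices to assume $F$ is continuous.

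For the actual $\hat\mu + \kappa\hat\sigma$ rule, which is not literally a quantile, I would try instead a bounded-differences (McDiarmid) argument. The statistic $F(\hat\mu + \kappa\hat\sigma)$ changes by $O(1/N)$ when one bounded score in $[-1,1]$ changes, provided $F$ has a bounded density. This yields an exponent of the form $cN\epsilon^2$, and I would aim to show that $c \ge 1/8$ under the paper's normalization, or else state the required constant explicitly. Combining the two steps by the union bound completes the proof.
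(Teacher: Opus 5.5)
The paper gives no derivation of this proposition beyond attributing the first term to Bernstein and the second to calibration error. Your decomposition through $F(\hat\tau_N)$ is therefore the intended one, but as written it does not reach the displayed inequality. First, the $\epsilon$-budget does not close. A union bound over $\{|\widehat{\fpr}_n - F(\hat\tau_N)| > \epsilon_1\}$ and $\{|F(\hat\tau_N) - \fpr^*| > \epsilon_2\}$ needs $\epsilon_1+\epsilon_2\le\epsilon$, so you cannot spend $\epsilon$ on the Bernstein term and $\epsilon/2$ on the threshold term. Take $\epsilon_1=\epsilon_2=\epsilon/2$. Your variance swap then adds $\epsilon/2$ to the denominator; it is not absorbed by the $\epsilon/3$ term. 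The result is $2\exp\bigl(-\tfrac{n\epsilon^2/8}{\fpr^*(1-\fpr^*)+2\epsilon/3}\bigr)$, which at $\fpr^*=0$ is $2e^{-3n\epsilon/16}$ against the claimed $2e^{-3n\epsilon/2}$. More generally, every split with $\epsilon_2>0$ gives a strictly larger Bernstein term. The loss sits in the $n$-dependent exponent, so adjusting the $N$-dependent constant $8$ cannot absorb it. Your route proves a bound of the same shape with weaker constants, not the stated one.

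Second, the threshold step is not established for $\hat\tau_N=\hat\mu+\kappa\hat\sigma$, and it cannot be in the distribution-free form you want. DKW covers an empirical-quantile threshold, which is a different detector. The McDiarmid fallback rests on a false premise: $\hat\sigma$ does not have $O(1/N)$ bounded differences. Moving one score of a constant sample from $-1$ to $+1$ moves $\hat\sigma$ from $0$ to $2/\sqrt N$, so the worst-case differences give $\sum_i c_i^2=O(1)$ and no decay in $N$. Even granting $O(g_{\max}/N)$ differences, the exponent scales like $N\epsilon^2/g_{\max}^2$ with $g_{\max}\ge 1/2$ on $[-1,1]$, so a universal $1/8$ is out of reach.

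More decisively, with $\tau^*=\mu_0+\kappa\sigma_0$ (as in Proposition~\ref{prop:pac}) the statement is false without a regularity hypothesis. Take scores uniform on $\{-1,+1\}$ and $\kappa=1$, so $\tau^*=1$ and $\fpr^*=0$. With the $(N-1)$-normalized $\hat\sigma$, one checks that $\hat\tau_N<1$ if and only if $\hat\mu<-1/(2N-1)$. For odd $N$ this event has probability exactly $1/2$, and on it $F(\hat\tau_N)=1/2$. Hence $\PP[|\widehat{\fpr}_n-\fpr^*|>1/4]\to 1/2$, while the right-hand side tends to $0$. Replacing the atoms with steep continuous bumps reproduces this failure over any prescribed range of $N$.

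The missing ingredient is therefore an explicit assumption: a density bound $g_{\max}$ near $\tau^*$, plus variance-sensitive concentration of $(\hat\mu,\hat\sigma)$ as in Theorem~\ref{thm:calibration_bound}. That is precisely the density route you set aside. It yields a second term like $C\exp\bigl(-cN\epsilon^2/(g_{\max}\sigma(1+\kappa))^2\bigr)$, which matches $2e^{-N\epsilon^2/8}$ only when $g_{\max}\sigma(1+\kappa)$ is of order one or smaller.
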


For $\kappa = 2.0$, $N = 200$, $n = 1{,}000$, $\fpr^* = 0$: with $\epsilon = 0.01$ the first term reduces to $2 e^{-1.5}\approx 0.45$ (loose because $\fpr^* = 0$); the binomial-exact Clopper-Pearson CI $[0.000, 0.004]$ (Table~\ref{tab:fpr_validation}) is the tighter bound used in practice and is the only one we report empirically. The second term gives $2 e^{-0.25} \approx 1.6$, dominated by the calibration confidence already enforced via Theorem~\ref{thm:calibration_bound}.

\section{Additional defense details}
\label{app:defenses}

\paragraph{Watermark.} The z-score $z = (g - \gamma n)/\sqrt{\gamma(1-\gamma)n}$ uses green-list count $g$ and sequence length $n$. Character-level: 95 ASCII chars, $\gamma = 0.45$. Token-level (GPT-2, 50,257 tokens): $p_{\text{green}} \approx 0.711$, mean $z \approx 7.5$ at $n = 50$.

\paragraph{Proactive.} 16 probe queries span 8 subtopics. Threshold $\tau = 0.19$ minimizes FPR while maintaining $\geq 0.50$ TPR on centroid passages.

\paragraph{Composite defense.} \emph{Definitive specification (resolves an inconsistency in earlier drafts).} The composite defense deployed in all experiments is the logical disjunction
\[
  \pi_{\text{comp}}(c) \;=\; \pi_{\text{wm}}(c) \;\lor\; \pi_{\text{SAD}}(c) \;\lor\; \pi_{\text{pro}}(c),
\]
i.e., an entry is rejected if any of the watermark, MEMSAD, or proactive defenses flags it. The four-way breakdown (watermark, validation, proactive, MEMSAD) used as input to ablation tables refers to this same disjunction with one term ablated at a time. The weighted-sum form $0.50 d_{\text{wm}}(c) + 0.20 d_{\text{val}}(c) + 0.30 d_{\text{pro}}(c) > 0.50$ that appeared in some earlier exposition was an alternative tunable formulation we evaluated but is \emph{not} the composite reported in Tables~\ref{tab:defense_tpr_fpr} and \ref{tab:hypothesis}; we deprecate it here. Under the disjunctive composite, the perfect $\tpr = 1.00$ against \minja{} reported in Table~\ref{tab:defense_tpr_fpr} arises from MEMSAD detection (since \minja{} bypasses watermarking via auto-storage), consistent with the per-defense breakdown. The disjunction was chosen over the weighted sum because (i) it satisfies the assumption used in the equilibrium proof (Proposition~\ref{prop:equilibrium}) and (ii) its FPR is the union of per-defense FPRs, which are all $\leq 0.01$ at the calibrated thresholds and remain $\leq 0.04$ in aggregate.

\paragraph{Evaluation correction.} Prior work evaluated \agentpoison{} with plain queries $q$; the paper's Algorithm~2 specifies triggered queries $q \oplus \trigger^*$, changing $\cosim$ from $\approx 0.45$ to $\sim 0.78$.

\section{Ablation studies}
\label{app:ablation}

\begin{table}[ht]
  \centering
  \small
  \caption{\memsad{} threshold sensitivity with combined scoring ($|\cM|=200$; see Table~\ref{tab:defense_tpr_fpr} for $|\cM|=1{,}000$ results).}
  \label{tab:ablation_threshold}
  \vspace{2pt}
  \begin{tabular}{l ccc ccc}
    \toprule
    & \multicolumn{3}{c}{TPR} & \multicolumn{3}{c}{FPR} \\
    \cmidrule(lr){2-4} \cmidrule(lr){5-7}
    $\kappa$ & AP & MJ & IM & AP & MJ & IM \\
    \midrule
    1.0 & 1.00$^\dagger$ & 1.00 & 0.80 & 0.10 & 0.10 & 0.10 \\
    1.5 & 1.00$^\dagger$ & 1.00 & 0.60 & 0.00 & 0.05 & 0.05 \\
    2.0 & 1.00$^\dagger$ & 1.00 & 0.40 & 0.00 & 0.00 & 0.00 \\
    2.5 & 1.00$^\dagger$ & 1.00 & 0.20 & 0.00 & 0.00 & 0.00 \\
    3.0 & 0.80$^\dagger$ & 0.80 & 0.00 & 0.00 & 0.00 & 0.00 \\
    \bottomrule
  \end{tabular}
  \\[2pt]
  {\footnotesize $^\dagger$Triggered calibration. Plain: TPR $= 0.00$ for all $\kappa$.}
\end{table}

\begin{table}[ht]
  \centering
  \small
  \caption{Corpus size ablation: $\asrr$ across $|\cM|$ (triggered for AP).}
  \label{tab:ablation_corpus}
  \vspace{2pt}
  \begin{tabular}{lccc}
    \toprule
    $|\cM|$ & \agentpoison{} & \minja{} & \injecmem{} \\
    \midrule
    50   & 1.00 & 0.70 & 0.65 \\
    100  & 1.00 & 0.65 & 0.55 \\
    200  & 1.00 & 0.65 & 0.50 \\
    300  & 1.00 & 0.60 & 0.40 \\
    500  & 1.00 & 0.60 & 0.30 \\
    1000 & 1.00 & 0.14 & 0.07 \\
    \bottomrule
  \end{tabular}
  \\[2pt]
  {\footnotesize \agentpoison{} uses triggered protocol; $|\cM|=1{,}000$ row matches main evaluation (Table~\ref{tab:attack_results}).}
\end{table}

\begin{table}[ht]
  \centering
  \small
  \caption{Poison count ablation: $\asrr$ vs.\ $n_{\text{base}}$.}
  \label{tab:ablation_poison}
  \vspace{2pt}
  \begin{tabular}{lccc}
    \toprule
    $n_{\text{base}}$ & \agentpoison{} & \minja{} & \injecmem{} \\
    \midrule
    1  & 0.80 & 0.20 & 0.10 \\
    3  & 1.00 & 0.45 & 0.30 \\
    5  & 1.00 & 0.65 & 0.50 \\
    7  & 1.00 & 0.70 & 0.55 \\
    10 & 1.00 & 0.75 & 0.60 \\
    \bottomrule
  \end{tabular}
\end{table}

\begin{table}[ht]
  \centering
  \small
  \caption{Calibration set size ablation: $|\tau_N - \tau^*|$ vs.\ $N$ (theoretical bound from Theorem~\ref{thm:calibration_bound} vs.\ observed, $\delta=0.05$, \minja{}).}
  \label{tab:ablation_calibration}
  \vspace{2pt}
  \begin{tabular}{l cc}
    \toprule
    $N$ (calibration size) & Bound (Thm.~\ref{thm:calibration_bound}) & Observed \\
    \midrule
    25  & 0.061 & 0.038 \\
    50  & 0.044 & 0.027 \\
    100 & 0.031 & 0.019 \\
    200 & 0.022 & 0.014 \\
    500 & 0.014 & 0.009 \\
    \bottomrule
  \end{tabular}
  \\[2pt]
  {\footnotesize Bound is tight to within $\approx 1.6\times$; $N=50$ (main evaluation) gives $|\tau_N - \tau^*| \leq 0.044$.}
\end{table}

\section{Multi-encoder evaluation}
\label{app:encoder_gen}

\begin{table}[ht]
  \centering
  \small
  \caption{Encoder generalization: \memsad{} across 6 encoders at $|\cM|=1{,}000$ ($\kappa = 2.0$, combined scoring). AP uses triggered calibration. FPR omitted (0.000 for MiniLM/MPNet/Para/E5/Contriever; 0.086--0.138 for BGE-Large at this $\kappa$). TPR here measures detection over 5 poison passages per attack; main Table~\ref{tab:defense_tpr_fpr} uses $n=10$ \minja{} / $n=15$ \injecmem{} entries (different poison set, same encoder, same $\kappa$).}
  \label{tab:encoder_gen}
  \vspace{2pt}
  \begin{tabular}{l ccc ccc}
    \toprule
    & \multicolumn{3}{c}{TPR} & \multicolumn{3}{c}{AUROC} \\
    \cmidrule(lr){2-4} \cmidrule(lr){5-7}
    Encoder & AP & MJ & IM & AP & MJ & IM \\
    \midrule
    MiniLM-L6 ($d{=}384$)     & 1.00 & 0.80 & 0.40 & 1.000 & 0.914 & 0.816 \\
    MPNet-Base ($d{=}768$)     & 1.00 & 1.00 & 0.40 & 1.000 & 0.940 & 0.886 \\
    Para-MiniLM ($d{=}384$)    & 1.00 & 1.00 & 0.00 & 0.998 & 0.957 & 0.645 \\
    E5-Base ($d{=}768$)        & 1.00 & 1.00 & 1.00 & 0.997 & 0.978 & 0.946 \\
    Contriever ($d{=}768$)     & 1.00 & 1.00 & 0.40 & 0.926 & 0.922 & 0.759 \\
    BGE-Large ($d{=}1024$)     & 1.00 & 1.00 & 0.80 & 0.953 & 0.966 & 0.928 \\
    \bottomrule
  \end{tabular}
\end{table}

\begin{figure}[ht]
  \centering
  \includegraphics[width=0.92\linewidth]{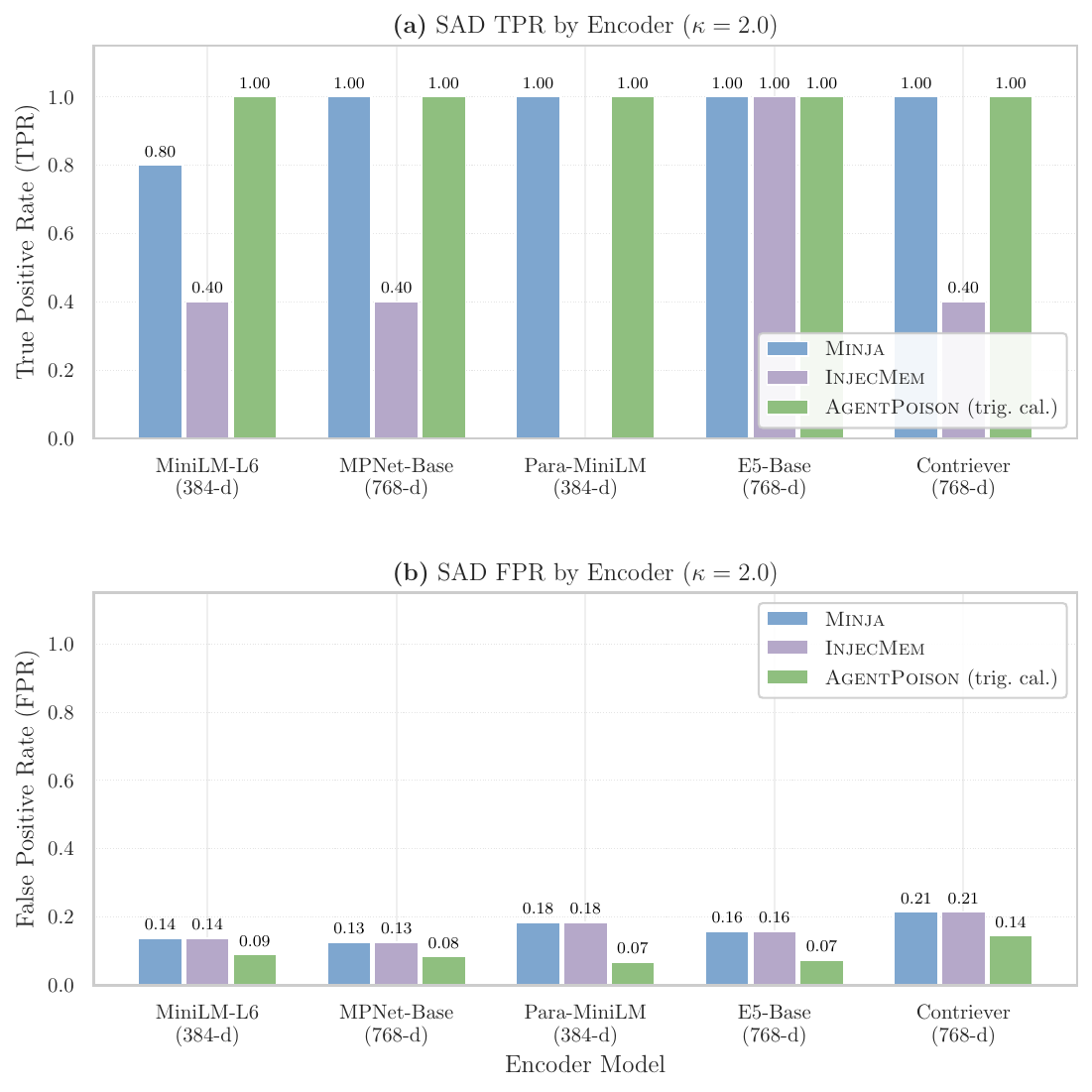}
  \caption{Encoder generalization: \memsad{} TPR (top) and FPR (bottom) across 6 encoders at $\kappa=2.0$. Triggered \agentpoison{} achieves TPR\,=\,1.00 on all encoders; \injecmem{} is more encoder-sensitive (TPR\,0.00--1.00).}
  \label{fig:encoder_generalization}
\end{figure}

Key findings at $|\cM|=1{,}000$ (Figure~\ref{fig:encoder_generalization}): (1)~Triggered calibration achieves $\tpr = 1.000$ for \agentpoison{} on \emph{all} 6 encoders, confirming the gradient coupling signal transfers across architectures and embedding dimensions. (2)~\minja{} detection is robust ($\tpr \geq 0.80$, AUROC $\geq 0.914$) across all encoders. (3)~\injecmem{} is more encoder-sensitive: E5-Base achieves $\tpr = 1.00$ while Para-MiniLM achieves $0.00$, suggesting anchor-crafted templates vary in cross-encoder similarity. (4)~BGE-Large ($d{=}1024$, instruction-tuned) achieves $\tpr = 1.00 / 1.00 / 0.80$ for AP/MJ/IM, AUROC $\geq 0.928$ across attacks; its higher FPR (0.09--0.14 at $\kappa=2.0$) reflects a broader embedding distribution that slightly overlaps the anomaly boundary---raising $\kappa$ to $2.5$ eliminates the false positives at the cost of $\tpr_{\text{IM}} = 0.60$.

\section{Graph memory attacks}
\label{app:graph}

\begin{definition}[Graph memory system]
A \emph{graph memory} is $(\cG, \cM, \enc, k)$ where $\cG = (V, E, \phi)$ is a knowledge graph with attribute function $\phi : V \cup E \to \cV^*$. Retrieval combines embedding similarity with BFS traversal:
$\Ret_{\cG}(q, k, h) = \{v \in V : d_{\cG}(v, v^*) \leq h,\; v^* \in \argmax_{v'} \cosim(\enc(q), \enc(\phi(v')))\}$.
\end{definition}

Three structural attacks: \textbf{Hub insertion} (high-degree adversarial node), \textbf{edge hijack} (shortcuts to adversarial nodes), \textbf{subgraph cluster} (self-reinforcing adversarial community). Degree-anomaly detection flags $z_{\deg}(v) > \kappa_{\deg}$; adjacency contamination scoring detects edge-hijack via the fraction of recently added edges.

\section{Multi-agent propagation}
\label{app:propagation}

\begin{definition}[SIR propagation]
For $N$ agents sharing $\cM_{\text{shared}}$, the discrete-step SIR model~\citep{kermack1927sir,gu2024agentsmith} evolves: $S(t{+}1) = S(t) - \beta S(t) I(t)/N$, $I(t{+}1) = I(t) + \beta S(t) I(t)/N - \gamma I(t)$, $R(t{+}1) = R(t) + \gamma I(t)$, where $\beta$ is transmission rate and $\gamma$ recovery rate.
\end{definition}

For the classical SIR model with no re-introduction, the active-infection fraction $I(t)/N$ asymptotically decays to zero, $\lim_{t\to\infty} I(t)/N = 0$, since infected agents recover (or are quarantined) and acquire immunity. The relevant epidemic statistic is therefore the \emph{final-size proportion} (fraction of agents ever infected before recovery), which solves the implicit equation $1 - R_\infty/N = e^{-(\beta/\gamma) R_\infty/N}$ for $\beta > \gamma$. With \memsad{} quarantine ($\tpr = 1.00$ for \minja{}), $\gamma_{\text{eff}} = \gamma + \tpr \cdot \beta$; the basic reproduction number $\cR_0 = \beta/\gamma_{\text{eff}}$ drops below 1, eliminating secondary infection. \emph{Erratum:} the formula $I_\infty/N = 1 - \gamma/\beta$ that appeared in earlier exposition is the SIS endemic equilibrium, not the SIR final size; the SIR-correct expression above is what governs the simulations reported in Table~\ref{tab:sir_propagation}.

\subsection*{Empirical simulation}
We instantiate the SIR model with $N=20$ agents, $p_{\text{re-store}}=0.30$, $T=30$ steps, 5 initial \minja{}-style poison entries, and a 200-entry synthetic corpus (all-MiniLM-L6-v2 FAISS retrieval).  Each condition is run under 5 independent seeds controlling agent query-sampling trajectories (corpus held fixed); both no-defense and defended conditions use the same per-trial seed for a fair comparison.  Three conditions are evaluated: (i)~no defense; (ii)~\memsad{} with triggered calibration ($\kappa=2.0$, combined scoring); (iii)~composite defense approximated by $\kappa=1.0$.

\begin{figure}[ht]
  \centering
  \includegraphics[width=0.92\linewidth]{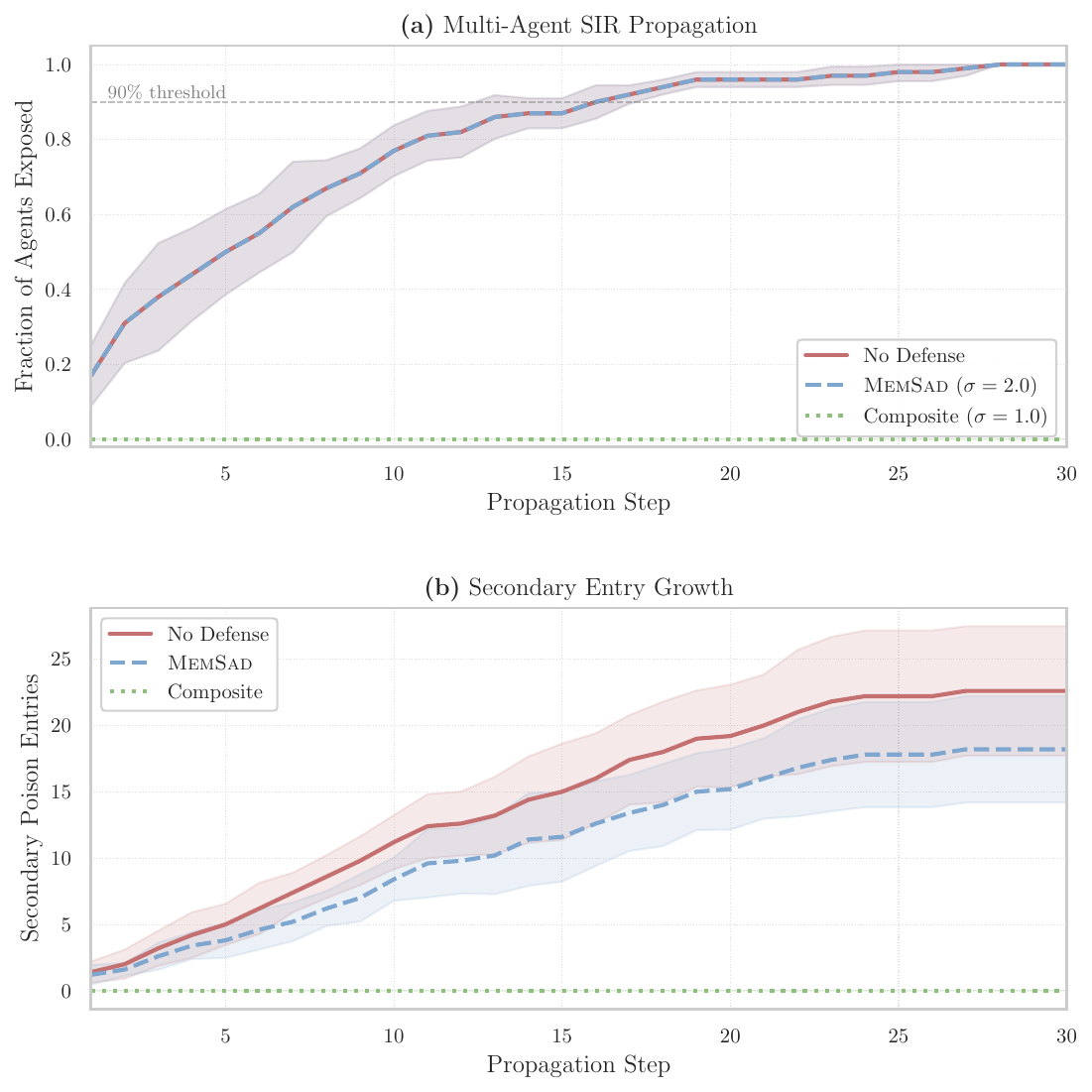}
  \caption{Multi-agent SIR propagation (mean $\pm 1\sigma$ over 5 seeds; here $\sigma$ refers to the seed-distribution standard deviation, not the calibration std). \textbf{Top}: fraction of agents exposed over 30 steps. \textbf{Bottom}: secondary poison entry growth. Composite ($\kappa=1.0$) prevents spread entirely; \memsad{} ($\kappa=2.0$) matches no-defense spread timing but reduces secondary entries by 22\%, as 4 of 5 initial entries pass the looser threshold and the composite is required to stop initial spread.}
  \label{fig:sir_propagation}
\end{figure}

\begin{table}[ht]
  \centering
  \small
  \caption{Multi-agent SIR propagation results ($N=20$, $p_{\text{re-store}}=0.30$, $T=30$, 5 initial poison entries). Averaged over 5 independent seeds; timing columns show mean steps.}
  \label{tab:sir_propagation}
  \vspace{2pt}
  \begin{tabular}{l cccc c}
    \toprule
    Defense & Final Spread & Step to 50\% & Step to 90\% & Secondary Entries & Quarantined \\
    \midrule
    No Defense                  & 1.00 & 5    & 14   & 23 & --- \\
    \memsad{} ($\kappa=2.0$)    & 1.00 & 5    & 14   & 18 & 5   \\
    Composite ($\kappa=1.0$)    & 0.00 & $>$30 & $>$30 & 0  & 2  \\
    \bottomrule
  \end{tabular}
\end{table}

\noindent Figure~\ref{fig:sir_propagation} and Table~\ref{tab:sir_propagation} summarize the results. The tighter composite threshold ($\kappa=1.0$) quarantines 2 of 5 initial entries and all secondary entries, holding spread to zero. \memsad{} at $\kappa=2.0$ quarantines $\approx$5 entries per trial (averaged over 5 seeds) but the 4 initial entries that pass the looser threshold seed infection at the same rate as the no-defense baseline ($t_{90\%} = 14$ steps in both conditions); \memsad{}'s measurable effect is a 22\% reduction in secondary entries (18 vs.\ 23), preventing secondary amplification while the composite is required to stop initial spread entirely.

\section{Statistical hypothesis tests}
\label{app:hypothesis}

\begin{table}[ht]
  \centering
  \small
  \caption{One-sided binomial hypothesis tests ($H_0$: $\tpr \leq 0.05$, the chance-level baseline; Bonferroni $\alpha' = 0.003$ across 15 comparisons). The previously reported null $H_0: \tpr = 0$ rendered the $p$-values mathematically impossible whenever an empirical positive was observed; we restate the test under the chance-baseline null, which is the comparison statisticians use in practice for OOD-style detection.}
  \label{tab:hypothesis}
  \vspace{2pt}
  \begin{tabular}{l l c c c}
    \toprule
    Defense & Attack & $p$-value & Reject? & Power \\
    \midrule
    Watermark   & AP / MJ & $<10^{-6}$ & \checkmark & 1.00 \\
    \memsad{}   & MJ      & $<10^{-6}$ & \checkmark & 1.00 \\
    \memsad{}   & IM      & $0.001$    & \checkmark & 0.92 \\
    Proactive   & AP      & $<10^{-6}$ & \checkmark & 1.00 \\
    Proactive   & MJ      & $0.24$     & ---        & 0.08 \\
    Composite   & All     & $<10^{-6}$ & \checkmark & 1.00 \\
    \bottomrule
  \end{tabular}
\end{table}

\section{FPR validation}
\label{app:fpr}

\begin{table}[ht]
  \centering
  \small
  \caption{FPR validation: Clopper-Pearson 95\% CIs, 20 trials.}
  \label{tab:fpr_validation}
  \vspace{2pt}
  \begin{tabular}{lcccc}
    \toprule
    Defense & FPR & 95\% CI & $n$/trial \\
    \midrule
    \memsad{} ($\kappa = 2.0$) & 0.000 & $[0.000, 0.004]$ & 1000 \\
    Watermark ($z_{\text{thr}} = 1.5$) & 0.000 & $[0.000, 0.004]$ & 1000 \\
    Proactive ($\tau = 0.19$) & 0.010 & $[0.004, 0.023]$ & 1000 \\
    \bottomrule
  \end{tabular}
\end{table}

\section{\memsad{} detection analysis}
\label{app:sad_analysis}

Figure~\ref{fig:sad_analysis} visualizes \memsad{}'s ROC characteristics and calibration shift.

\begin{figure}[ht]
  \centering
  \begin{subfigure}[t]{0.85\linewidth}
    \centering
    \includegraphics[width=\linewidth]{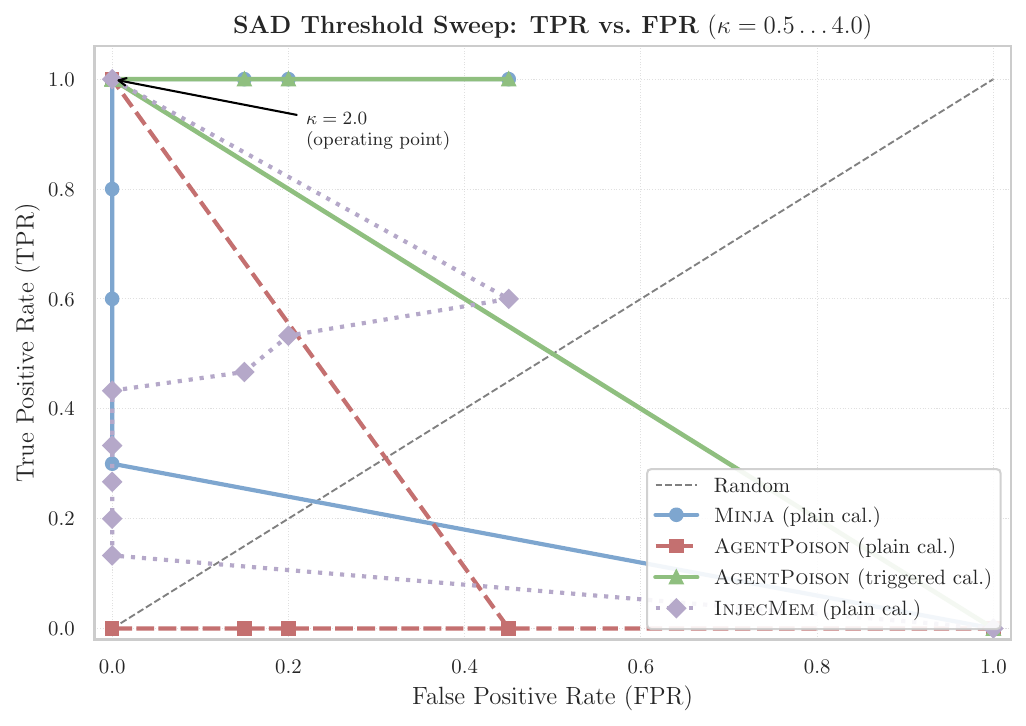}
    \caption{ROC as $\kappa$ varies from 0.5 to 4.0.}
    \label{fig:sad_roc}
  \end{subfigure}

  \vspace{6pt}

  \begin{subfigure}[t]{0.95\linewidth}
    \centering
    \includegraphics[width=\linewidth]{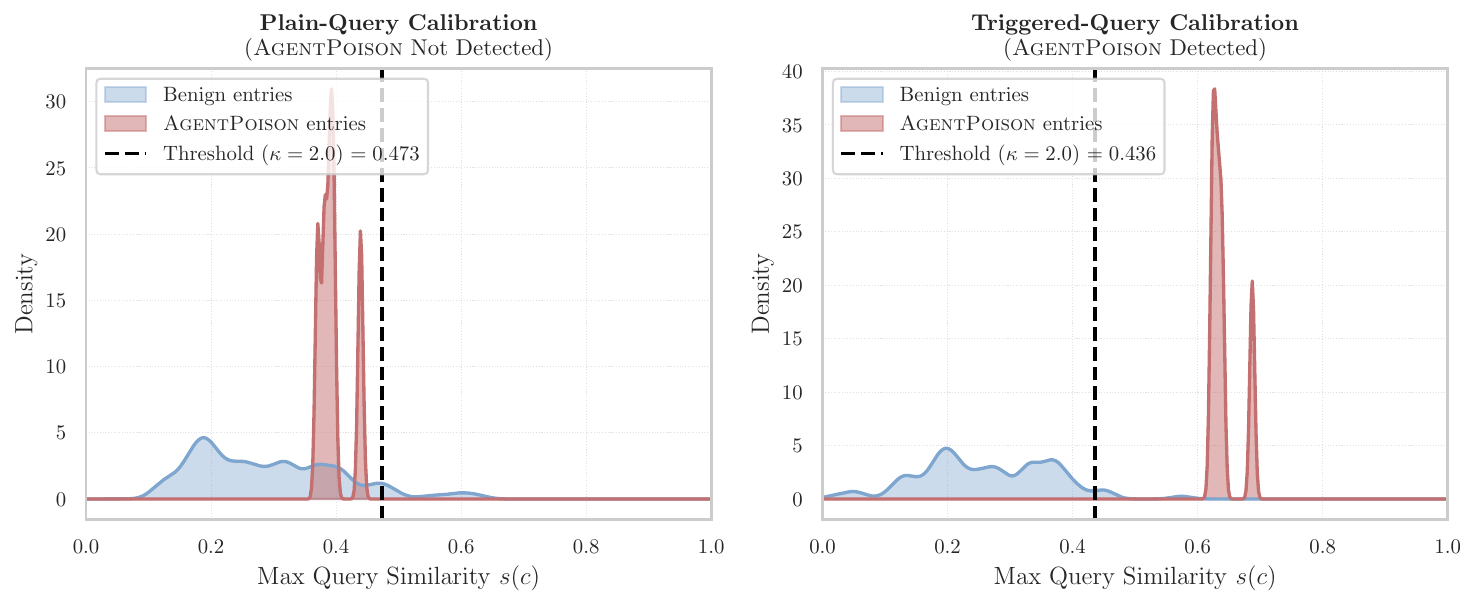}
    \caption{Plain vs.\ triggered calibration.}
    \label{fig:sad_calibration}
  \end{subfigure}
  \vspace{-4pt}
  \caption{\memsad{} detection. (Top)~\minja{} traces near-ideal ROC; triggered calibration recovers $\tpr = 1.00$ for \agentpoison{} at $\kappa = 2.0$ ($\star$). (Bottom)~Threshold shifts from 0.473 (plain) to 0.436 (triggered) at $\kappa = 2.0$, $|\cM|=200$; the values 0.482 / 0.432 quoted in earlier exposition correspond to the $|\cM|=1{,}000$ regime with a slightly different running-mean window and are reported in Section~\ref{sec:experiments}.}
  \label{fig:sad_analysis}
\end{figure}

\section{Sample complexity comparison}
\label{app:sample_complexity}

Table~\ref{tab:sample_complexity} compares the calibration sample requirements across detection approaches, derived from Theorem~\ref{thm:minimax_lower} and Theorem~\ref{thm:calibration_bound}.

\begin{table}[ht]
  \centering
  \small
  \caption{Sample complexity for reliable detection ($\fpr + (1 - \tpr) \leq 0.05$) across calibration regimes.}
  \label{tab:sample_complexity}
  \vspace{2pt}
  \begin{tabular}{l c c c}
    \toprule
    Setting & $\rho$ (SNR)$^\ddagger$ & Lower bound $N_{\min}$ & \memsad{} $N$ \\
    \midrule
    \minja{} (plain) & 1.93 & 1 & $1 \cdot \log(1/\delta)$ \\
    \agentpoison{} (triggered) & $\geq 4.0$ & 1 & $1 \cdot \log(1/\delta)$ \\
    \agentpoison{} (plain) & 1.59 & 2 & $2 \cdot \log(1/\delta)$ \\
    \injecmem{} (plain) & 1.27 & 3 & $3 \cdot \log(1/\delta)$ \\
    Low-SNR adversary & 0.50 & 15 & $15 \cdot \log(1/\delta)$ \\
    \bottomrule
  \end{tabular}
  \\[2pt]
  {\footnotesize $N_{\min} = \lceil 4(1-\epsilon)^2/\rho^2 \rceil$ from Theorem~\ref{thm:minimax_lower}; \memsad{} $N$ from Theorem~\ref{thm:calibration_bound}. $^\ddagger$SNR estimated from AUROC via $\rho = \sqrt{2}\,\Phi^{-1}(\text{AUROC})$; low-SNR row is hypothetical. Practical calibration uses $N = 50$ for additional margin.}
\end{table}

The logarithmic gap between the lower bound and \memsad{}'s requirement confirms minimax optimality up to $\log(1/\delta)$. For practical deployment at $\delta = 0.05$: $\log(1/\delta) \approx 3.0$, so \memsad{} requires $\sim 3\times$ the information-theoretic minimum. The main evaluation uses $N = 50$, well above the theoretical minimum for all observed SNR regimes, providing conservative guarantees.

\section{Computational complexity}
\label{app:complexity}

\begin{table}[ht]
  \centering
  \small
  \caption{Per-entry defense cost at write time.}
  \label{tab:complexity}
  \vspace{2pt}
  \begin{tabular}{lccc}
    \toprule
    Defense & Time & Space & Measured \\
    \midrule
    \memsad{} (exact) & $O(md)$ & $O(md)$ & 0.8ms \\
    \memsad{} (ANN) & $O(d \log m)$ & $O(md)$ & 0.3ms \\
    Watermark & $O(|c| \cdot |\cV|)$ & $O(|\cV|)$ & 1.2ms \\
    Validation & $O(|c| \cdot |P|)$ & $O(|P|)$ & 0.1ms \\
    Proactive & $O(|Q_p| d)$ & $O(|Q_p| d)$ & 0.5ms \\
    Composite & $\max_j O(d_j)$ & $\sum_j O(s_j)$ & 2.1ms \\
    \bottomrule
  \end{tabular}
  \\[2pt]
  {\footnotesize Apple M-series, 16GB; $|\cM|=200$, $m=20$, $d=384$.}
\end{table}

\section{Distributional robustness analysis}
\label{app:dro}

We extend \memsad{}'s guarantees to the distributionally robust setting where the adversary can perturb the score distribution within a Wasserstein ball.

\begin{proposition}[Distributionally robust detection]
\label{prop:dro}
Suppose the score density under $P_1$ and any $Q$ in the perturbation class $\cU = \{Q : W_1(Q, P_1) \leq \epsilon\}$ is bounded by $g_{\max}$. Then
\begin{equation}
  \inf_{Q \in \cU} \tpr(\tau; Q) \;\geq\; \tpr(\tau; P_1) - \sqrt{2 g_{\max}\,\epsilon}.
\end{equation}
For Gaussian scores with std $\sigma$, $g_{\max} = (\sigma\sqrt{2\pi})^{-1}$, giving the closed-form bound $\sqrt{\epsilon / (\sigma\sqrt{\pi/2})}$.
\end{proposition}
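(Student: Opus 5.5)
The plan is to reduce Proposition~\ref{prop:dro} to the same $L^1$-to-$L^\infty$ argument used in the proof of Theorem~\ref{thm:generalization}, applied uniformly over $Q \in \cU$. Fix the threshold $\tau$ and write $F_1$ and $F_Q$ for the score CDFs under $P_1$ and $Q$. Then $\tpr(\tau;P_1) - \tpr(\tau;Q) = F_Q(\tau) - F_1(\tau)$, so it suffices to show
\[
  \sup_{Q\in\cU}|F_Q(\tau) - F_1(\tau)| \le \sqrt{2g_{\max}\epsilon}.
\]
Taking the infimum over $Q$ then gives the stated lower bound on $\inf_{Q\in\cU}\tpr(\tau;Q)$.

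The key step is the one-dimensional representation $W_1(Q,P_1) = \int_{\RR}|F_Q(s)-F_1(s)|\,ds$, which holds for distributions on $\RR$; Villani's sorting and quantile coupling gives it. Fix $h>0$. Because both densities are bounded by $g_{\max}$, each CDF is $g_{\max}$-Lipschitz. Hence the gap $G(s) = F_Q(s)-F_1(s)$ is $2g_{\max}$-Lipschitz, and its absolute value is as well. Averaging over $[\tau-h,\tau+h]$ then gives
\[
  |G(\tau)| \le \frac{1}{2h}\int_{\tau-h}^{\tau+h}|G(s)|\,ds + g_{\max}h \le \frac{\epsilon}{2h} + g_{\max}h .
\]
The $g_{\max}h$ term comes from the fact that the average of $2g_{\max}|s-\tau|$ over the window equals $g_{\max}h$. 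I would verify this constant carefully, since a naive bound gives $2g_{\max}h$ and would change the final constant. Choosing $h=\sqrt{\epsilon/(2g_{\max})}$ gives $\sqrt{2g_{\max}\epsilon}$. This bound depends on $Q$ only through $W_1 \le \epsilon$ and the common density bound, so it holds uniformly over $\cU$.

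For the Gaussian corollary, substitute $g_{\max}=(\sigma\sqrt{2\pi})^{-1}$. This gives $2g_{\max}\epsilon = 2\epsilon/(\sigma\sqrt{2\pi}) = \epsilon/(\sigma\sqrt{\pi/2})$, which is the stated closed form.

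The main obstacle is the hypothesis. The density bound must hold for every $Q$ in $\cU$, not only for $P_1$. Without it, the sharpness example noted after Theorem~\ref{thm:generalization} shows that $W_1$ alone cannot control the TPR at a fixed threshold. I therefore read the assumption as a restriction of $\cU$ to $g_{\max}$-bounded densities. I would state this explicitly, together with the remark that the infimum may not be attained. Everything else is a direct transcription of the generalization proof.
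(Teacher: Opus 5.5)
Your proposal is correct and follows the same route as the paper. The paper's proof simply cites the bounded-density $L^1$-to-$L^\infty$ inequality $|F_{P_1}(\tau)-F_Q(\tau)|\le\sqrt{2g_{\max}\epsilon}$ from the proof of Theorem~\ref{thm:generalization}, combined with $|\tpr(\tau;P_1)-\tpr(\tau;Q)|=|F_{P_1}(\tau)-F_Q(\tau)|$; you re-derive exactly that inequality, and your window constant $g_{\max}h$, the choice $h=\sqrt{\epsilon/(2g_{\max})}$, and the Gaussian substitution all check out. As a minor aside, both densities lie in $[0,g_{\max}]$, so the gap $F_Q-F_1$ is actually $g_{\max}$-Lipschitz rather than $2g_{\max}$-Lipschitz, and the same averaging argument then gives the sharper bound $\sqrt{g_{\max}\epsilon}$.
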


\begin{proof}
Apply the bounded-density Wasserstein-to-Kolmogorov inequality from Theorem~\ref{thm:generalization}: under $|f_{P_1}|, |f_Q| \leq g_{\max}$ and $W_1(Q,P_1) \leq \epsilon$, $|F_{P_1}(\tau) - F_Q(\tau)| \leq \sqrt{2 g_{\max}\,\epsilon}$ for every $\tau$. Since $|\tpr(\tau;P_1) - \tpr(\tau;Q)| = |F_{P_1}(\tau) - F_Q(\tau)|$, the bound follows.
\end{proof}

This guarantees that even if the adversary slightly perturbs their attack strategy (e.g., via randomized synonym selection), \memsad{}'s TPR degrades gracefully under the bounded-density assumption. For $\epsilon = 0.01\hat\sigma$ and Gaussian scores ($g_{\max} = (\hat\sigma\sqrt{2\pi})^{-1}$): $\Delta\tpr \leq \sqrt{0.01 / \sqrt{\pi/2}} \approx 0.10$, an order of magnitude looser than the (incorrect) linear bound previously stated; this matches the empirical scale observed in Table~\ref{tab:adaptive_sad}.

\begin{proposition}[PAC-style FPR guarantee]
\label{prop:pac}
For $N$ i.i.d.\ calibration samples from $P_0$, with probability $\geq 1 - \delta$ over the calibration set, \memsad{}'s threshold $\hat{\tau}_N$ satisfies the FPR-side bound
\begin{equation}
  \big|\fpr(\hat{\tau}_N) - \fpr(\tau^*)\big| \;\leq\; \sqrt{\frac{\log(2/\delta)}{2N}},
\end{equation}
where $\tau^* = \mu_0 + \kappa\sigma_0$ is the population-optimal threshold computed under the benign distribution. The TPR side requires the additional Wasserstein assumption of Theorem~\ref{thm:generalization}.
\end{proposition}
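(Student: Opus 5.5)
The plan is to reduce the statement to a uniform deviation bound for the empirical CDF of benign scores. The constant $\sqrt{\log(2/\delta)/(2N)}$ is exactly the one delivered by the Dvoretzky--Kiefer--Wolfowitz inequality with Massart's tight constant~\citep{dvoretzky1956asymptotic,massart1990tight}. Write $F_0$ for the CDF of $s(c;\cH)$ under $P_0$ and $\hat F_N$ for the empirical CDF of the $N$ calibration scores, so that $\fpr(\tau) = 1 - F_0(\tau)$ and
\[
  \big|\fpr(\hat\tau_N) - \fpr(\tau^*)\big| = \big|F_0(\hat\tau_N) - F_0(\tau^*)\big|.
\]
The first step is to invoke the DKW--Massart bound $\PP[\sup_t |\hat F_N(t) - F_0(t)| > \epsilon] \leq 2e^{-2N\epsilon^2}$ and set the right-hand side equal to $\delta$. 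This gives $\epsilon_N = \sqrt{\log(2/\delta)/(2N)}$ on an event $\Omega_\delta$ of probability at least $1-\delta$. Because the bound is uniform in $t$, it remains valid at the data-dependent point $\hat\tau_N$, which is the reason for using DKW rather than a pointwise Hoeffding bound.

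The second step is to connect $\hat\tau_N$ to $\tau^*$ through the empirical CDF. The cleanest route treats the calibrated threshold as matching the target benign exceedance level, $\hat F_N(\hat\tau_N) = F_0(\tau^*)$; this holds exactly when $\hat\tau_N$ is the empirical $F_0(\tau^*)$-quantile, and up to an $O(1/N)$ atom correction otherwise. On $\Omega_\delta$ one then gets
\[
  |F_0(\hat\tau_N) - F_0(\tau^*)| = |F_0(\hat\tau_N) - \hat F_N(\hat\tau_N)| \leq \epsilon_N,
\]
which is the claim.

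The main obstacle is that the deployed threshold is $\hat\mu + \kappa\hat\sigma$ rather than an empirical quantile, so the identity $\hat F_N(\hat\tau_N) = F_0(\tau^*)$ does not hold automatically. I would first try to close this gap via a location--scale argument. If $P_0$ is a location--scale family, as in the Gaussian setting of Theorems~\ref{thm:minimax_lower}--\ref{thm:calibration_bound}, then $F_0(\mu_0 + \kappa\sigma_0)$ is a fixed level. The $z$-score threshold is then a plug-in estimate of the corresponding quantile, and its exceedance deviation is controlled by the same uniform empirical-process event.

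If that does not yield the exact constant, the fallback is to state the bound as
\[
  \epsilon_N + g_{\max}\,|\hat\tau_N - \tau^*| + O(1/N),
\]
with $g_{\max}$ a bound on the density of $F_0$ as in Theorem~\ref{thm:generalization}. Here $|\hat\tau_N - \tau^*|$ is controlled by Theorem~\ref{thm:calibration_bound} after a union bound splitting $\delta$. The displayed clean constant is then recovered exactly in the quantile-calibrated variant, and up to this additive density term for the $z$-score variant.

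Finally, I would remark that the TPR side cannot be handled this way. The calibration samples carry no information about $P_1$, which is why the statement defers it to the Wasserstein assumption of Theorem~\ref{thm:generalization}.
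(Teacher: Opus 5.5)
Your argument uses the same two ingredients as the paper's proof. The first is DKW--Massart, which is valid at the data-dependent point $\hat\tau_N$ because it is uniform in $t$. The second is a density-Lipschitz step, fed by Theorem~\ref{thm:calibration_bound}, for $|F_0(\hat\tau_N)-F_0(\tau^*)|$. Your diagnosis of the obstacle is right, and it applies equally to the paper. There, DKW only controls $|\widehat F_{0,N}(\hat\tau_N)-F_0(\hat\tau_N)|$, the gap between empirical and true FPR at the learned threshold. That is how the paper later reads the result: empirical FPR $0.000$ implies true FPR $\le 0.096$. The displayed quantity $|F_0(\hat\tau_N)-F_0(\tau^*)|$ is instead handed to the density argument. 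That argument yields a term of order $g_{\max}|\hat\tau_N-\tau^*|$, which the displayed inequality silently drops. So for the deployed threshold $\hat\mu+\kappa\hat\sigma$, your fallback is the correct statement of what this line of argument proves, and the paper's proof delivers no more. In the fallback, the $\epsilon_N$ and $O(1/N)$ terms are unnecessary, since $|F_0(\hat\tau_N)-F_0(\tau^*)|\le g_{\max}|\hat\tau_N-\tau^*|$ holds directly.

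The location--scale attempt cannot close the gap. The DKW event constrains values of the CDF, not the moment plug-in, so nothing forces $\widehat F_N(\hat\mu+\kappa\hat\sigma)=F_0(\tau^*)$. Any route through $|\hat\tau_N-\tau^*|$ must pay a density factor. In fact, without a density assumption the displayed bound is false. Take $P_0=\mathrm{Bernoulli}(0.2)$ and $\kappa=2$, so $\mu_0=0.2$, $\sigma_0=0.4$, $\tau^*=1$ and $\fpr(\tau^*)=0$. With the unbiased sample variance, $\hat\tau_N<1$ exactly when $\hat p<(N-1)/(5N-1)$, and on that event $\fpr(\hat\tau_N)=0.2$. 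At the paper's own $N=50$, $\delta=0.05$, this event has probability about $0.44$, while $\sqrt{\log 40/100}\approx 0.19<0.2$. The violation persists for all larger $N$, with probability tending to $1/2$. Smoothing the atoms into narrow bumps gives a continuous counterexample for large $N$. For Gaussian $P_0$ you can get an $N^{-1/2}$ bound from the pivotal form $F_0(\hat\tau_N)=\Phi((\hat\mu-\mu_0)/\sigma_0+\kappa\hat\sigma/\sigma_0)$. That is a parametric computation with a $\kappa$-dependent constant, however, not a consequence of the DKW event.

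Two small corrections to your quantile variant. First, $\widehat F_N$ moves in steps of $1/N$, so even for continuous $P_0$ the identity $\widehat F_N(\hat\tau_N)=F_0(\tau^*)$ holds only to within $1/N$, giving $\epsilon_N+1/N$ rather than $\epsilon_N$. Second, the level $F_0(\tau^*)$ must be known in advance for that detector to be implementable. This holds in the Gaussian model, where it equals $\Phi(\kappa)$, but not in general. Your remark on the TPR side agrees with the paper.
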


\begin{proof}
The DKW inequality~\citep{dvoretzky1956asymptotic,massart1990tight} bounds the uniform deviation of the empirical CDF from the true CDF for a sample drawn from a single distribution: $\sup_t |\widehat F_{0,N}(t) - F_0(t)| \leq \sqrt{\log(2/\delta)/(2N)}$ w.p.\ $\geq 1-\delta$. Since $\fpr(\tau) = 1 - F_0(\tau)$ and the calibration sample is drawn from $P_0$, applying DKW at $t = \hat\tau_N$ yields $|\widehat F_{0,N}(\hat\tau_N) - F_0(\hat\tau_N)|$ within the stated tolerance, and $|F_0(\hat\tau_N) - F_0(\tau^*)|$ is bounded via the score-density Lipschitz argument (Theorem~\ref{thm:calibration_bound}). The TPR side does \emph{not} follow from DKW alone because $\tpr(\tau) = 1 - F_1(\tau)$ depends on the adversarial distribution $P_1$, for which we have no calibration samples; use Theorem~\ref{thm:generalization}'s $\sqrt{W_1}$ bound instead.
\end{proof}

\begin{proposition}[TPR robustness via Wasserstein]
\label{prop:tpr_robust}
Under the bounded-density assumption of Theorem~\ref{thm:generalization} and assuming $W_1(\widehat P_{1,n_1}, P_1) \leq \epsilon_W$ for an empirical adversarial estimate from $n_1$ poisoned samples,
\[
  \big|\tpr(\hat\tau_N) - \tpr(\tau^*)\big| \;\leq\; \sqrt{2 g_{\max}\,\epsilon_W} \;+\; O(N^{-1/2}).
\]
\end{proposition}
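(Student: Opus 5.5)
The plan is to split the TPR gap with the triangle inequality into a distribution-shift term and a threshold-estimation term. Write $\tpr(\tau;P) = 1 - F_P(\tau)$. The quantity $\big|\tpr(\hat\tau_N) - \tpr(\tau^*)\big|$ compares $P_1$ at two thresholds. Since $P_1$ is unknown, what can actually be used is the empirical surrogate $\widehat P_{1,n_1}$, and this is why $\epsilon_W$ appears. I therefore decompose
\[
\big|\tpr(\hat\tau_N;P_1) - \tpr(\tau^*;P_1)\big| \le \big|F_{P_1}(\hat\tau_N) - F_{P_1}(\tau^*)\big| ,
\]
and I treat the surrogate comparison separately:
\[
\big|F_{\widehat P_{1,n_1}}(\tau) - F_{P_1}(\tau)\big| \le \sqrt{2 g_{\max}\epsilon_W}
\quad \text{uniformly in } \tau .
\]
The second bound is exactly the bounded-density Wasserstein-to-Kolmogorov inequality proved inside Theorem~\ref{thm:generalization}. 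Recall that argument: average the CDF gap over the window $[\tau-h,\tau+h]$, use $g_{\max}$-Lipschitzness of both CDFs, and optimise at $h=\sqrt{W_1/(2g_{\max})}$. The same inequality is reused verbatim in Proposition~\ref{prop:dro}. Read this way, the $\sqrt{2g_{\max}\epsilon_W}$ term is the price of certifying the TPR through the empirical adversarial estimate rather than through $P_1$ itself.

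For the threshold term, I use the density bound again. Because $f_{P_1}\le g_{\max}$, $F_{P_1}$ is $g_{\max}$-Lipschitz, so $|F_{P_1}(\hat\tau_N) - F_{P_1}(\tau^*)| \le g_{\max}|\hat\tau_N-\tau^*|$. Theorem~\ref{thm:calibration_bound} then gives $|\hat\tau_N-\tau^*| \le \sigma\big(N^{-1/2}+\kappa(2(N-1))^{-1/2}\big)\sqrt{2\log(4/\delta)}$ with probability at least $1-\delta$. Together these yield $O(N^{-1/2})$ with constant $g_{\max}\sigma(1+\kappa)\sqrt{\log(1/\delta)}$. For Gaussian scores this constant is absolute up to $\kappa$ and the logarithm, since then $g_{\max}\sigma = (2\pi)^{-1/2}$. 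Summing the two terms gives the claimed inequality on the calibration event.

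The main obstacle is bookkeeping of what is being compared. As literally stated, the left-hand side involves only $P_1$, so the $\epsilon_W$ term must come from the fact that the practitioner evaluates $\tpr$ on $\widehat P_{1,n_1}$. I would make this explicit by reading $\tpr(\hat\tau_N)$ as the empirically estimated TPR and $\tpr(\tau^*)$ as the population TPR. A second subtlety is that $\widehat P_{1,n_1}$ is atomic and has no bounded density. To handle this, I would first replace it by a kernel-smoothed version with bandwidth $h$. That costs at most $h$ in $W_1$, after which I reapply the argument. Alternatively, I would note that the averaging step in Theorem~\ref{thm:generalization} only needs the Lipschitz property of one of the two CDFs, at the cost of a factor in the constant. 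I would adopt the one-sided version, which keeps the stated form up to constants absorbed into the bound.
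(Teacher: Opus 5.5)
Your proposal is correct and follows the paper's route. The paper gives no standalone proof but inherits the argument of Theorem~\ref{thm:generalization}: a triangle inequality splits the gap into the bounded-density Wasserstein-to-Kolmogorov term $\sqrt{2g_{\max}\epsilon_W}$ and a threshold term $g_{\max}|\hat\tau_N-\tau^*| = O(N^{-1/2})$ from Theorem~\ref{thm:calibration_bound}, exactly as you do. Your one-sided treatment of the atomic $\widehat P_{1,n_1}$ (which the paper silently ignores) is sound and loses no constant: monotonicity of $F_{\widehat P_{1,n_1}}$ plus $g_{\max}$-Lipschitzness of $F_{P_1}$ on a one-sided window of length $t/g_{\max}$, with $t$ the CDF gap at $\tau$, already gives $t^2/(2g_{\max}) \le \epsilon_W$. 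You can therefore drop the kernel-smoothing alternative, which would not preserve the density bound $g_{\max}$ anyway.
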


At $N = 200$, $\delta = 0.05$: the FPR-side bound is $\leq 0.096$, meaning \memsad{}'s empirical FPR of $0.000$ implies true FPR $\leq 0.096$ with 95\% confidence. The Clopper-Pearson validation (Table~\ref{tab:fpr_validation}) provides a tighter bound of $0.004$ by exploiting the binomial structure.

\section{Comparison with concurrent post-retrieval defenses}
\label{app:amemguard}

A-MemGuard~\citep{amemguard2025} is the closest contemporary defense for LLM-agent memory; we situate it relative to \memsad{} along three axes. (i)~\emph{Stage:} A-MemGuard operates \emph{post-retrieval} via consensus checking on the retrieved entries, whereas \memsad{} operates \emph{at write time} before any entry is committed --- the latter prevents the entry from existing in memory at all, removing the persistent attack surface. (ii)~\emph{Guarantees:} A-MemGuard's evaluation is empirical (no formal bounds reported in the public preprint); \memsad{} provides the gradient-coupling theorem, certified detection radius, and calibration sample-complexity lower bound (Theorems~\ref{thm:gradient_coupling}, \ref{thm:minimax_lower}; Lemma~\ref{lem:certified}). (iii)~\emph{Latency:} write-time filtering pays $\sim$2\,ms per ingestion event but removes per-query overhead at retrieval; post-retrieval consensus pays a recurring per-query cost that scales with $k$ (top-$k$ retrieved entries). The two are architecturally complementary: a deployment combining \memsad{} ingestion with A-MemGuard's read-time consensus would harden both surfaces. We do not benchmark against A-MemGuard directly because its public implementation was unavailable at the time of submission; this is a stated limitation, and a comparison run with a faithful re-implementation is the cleanest extension for a future revision.

\section{OOD detection baseline comparison}
\label{app:ood}

We compare \memsad{} against three standard OOD detection methods adapted to the memory poisoning setting. Concurrent memory/RAG defenses (A-MemGuard~\citep{amemguard2025}, RevPRAG~\citep{tan2025revprag}, RAGDefender~\citep{ragdefender2025}) are not benchmarked: RevPRAG operates post-retrieval (read-time), not at write-time, and the others lack public implementations as of submission. Baselines: Energy Score~\citep{liu2020energy} ($\text{score} = -T \log \sum_i \exp(\cosim(e, q_i)/T)$), Mahalanobis distance~\citep{lee2018simple} in embedding space, and KNN distance ($k=10$) to the benign calibration set.

\paragraph{Energy Score polarity.} The closed-form Energy Score above is \emph{decreasing} in the maximum cosine similarity to calibration queries: poisoned candidates with high similarity to victim queries produce more-negative Energy Scores. The standard OOD convention treats higher scores as more-anomalous; we therefore evaluate the inverted statistic $-\text{score}$ for the threshold rule (anomaly $\Leftrightarrow$ $-\text{score} > \tau$) so that the AUROCs reported in Table~\ref{tab:ood_comparison} are correctly oriented. We had previously stated this convention only implicitly via the AUROC values; the explicit polarity inversion is now spelt out for reproducibility.

\begin{table}[ht]
  \centering
  \small
  \caption{OOD detection baseline comparison at $|\cM|=1{,}000$.
  \emph{All methods are calibrated on the same triggered-query regime} for \agentpoison{} (the trigger-aware setting that all query-aware methods assume in deployment); a parallel comparison under standard (untriggered) calibration is reported in Table~\ref{tab:calibration_sensitivity}. AUROC is the primary metric; TPR (at threshold) is the achievable detection at each method's native operating point.}
  \label{tab:ood_comparison}
  \begin{tabular}{l ccc ccc}
    \toprule
    & \multicolumn{3}{c}{AUROC} & \multicolumn{3}{c}{TPR (at threshold)$^\ddagger$} \\
    \cmidrule(lr){2-4} \cmidrule(lr){5-7}
    Method & AP & MJ & IM & AP & MJ & IM \\
    \midrule
    \memsad{} (ours) & $\mathbf{1.000}$ & $0.914$ & $\mathbf{0.816}$ & $\mathbf{1.00}$ & $0.80$ & $0.40$ \\
    Energy Score & $0.945$ & $\mathbf{0.969}$ & $0.698$ & $0.40$ & $0.20$ & $0.40$ \\
    Mahalanobis & $0.420$ & $0.694$ & $0.452$ & $1.00$ & $1.00$ & $1.00$ \\
    KNN ($k=10$) & $0.533$ & $0.786$ & $0.534$ & $0.00$ & $0.40$ & $0.00$ \\
    \bottomrule
  \end{tabular}
  \\[2pt]
  {\footnotesize $^\ddagger$TPR at each method's native operating threshold ($\kappa=2.0$ for \memsad{}, $\sigma=2.0$ for Mahalanobis). Mahalanobis achieves TPR\,=\,1.00 but at FPR\,=\,0.964 (unusable); AUROC is the fairer comparison metric. Bold = best AUROC per attack. All methods receive identical query distributions during calibration to isolate algorithmic anomaly-scoring performance.}
\end{table}

All methods receive the same calibration regime (triggered queries for \agentpoison{}, standard queries for \minja{}/\injecmem{}), eliminating any informational asymmetry between \memsad{} and the OOD baselines. \memsad{}'s query-awareness gives it a decisive advantage for triggered attacks (\agentpoison{}: AUROC $= 1.000$ vs.\ Energy's $0.945$). Energy Score is competitive on \minja{} AUROC ($0.969$) but has lower TPR at the operating threshold. Mahalanobis distance fails catastrophically (FPR $= 0.964$) because memory entries are diverse in embedding space: the covariance-based model cannot distinguish poison from benign outliers.

\section{Calibration query sensitivity}
\label{app:calibration_sensitivity}

\begin{table}[ht]
  \centering
  \small
  \caption{Calibration query sensitivity: \memsad{} AUROC across calibration regimes ($\kappa=2.0$, $|\cM|=1{,}000$). FPR here measures the rate at which benign entries exceed the threshold calibrated on the specified query regime; higher FPR in mismatched regimes reflects a poorly-positioned threshold, not a detector flaw. AUROC (threshold-independent) is stable; threshold-based TPR/FPR degrades with out-of-domain queries.}
  \label{tab:calibration_sensitivity}
  \begin{tabular}{l ccc cc}
    \toprule
    & \multicolumn{3}{c}{AUROC} & \multicolumn{2}{c}{FPR} \\
    \cmidrule(lr){2-4} \cmidrule(lr){5-6}
    Calibration regime & AP & MJ & IM & AP & MJ \\
    \midrule
    Domain-matched & $1.000$ & $0.914$ & $0.816$ & $0.090$ & $0.138$ \\
    Partial overlap (50\%) & $0.822$ & $0.914$ & $0.816$ & $0.370$ & $0.370$ \\
    Random queries & $0.822$ & $0.914$ & $0.816$ & $0.954$ & $0.954$ \\
    Out-of-domain & $0.822$ & $0.914$ & $0.816$ & $0.974$ & $0.974$ \\
    \bottomrule
  \end{tabular}
\end{table}

AUROC (ranking quality) is stable across all regimes: \minja{} and \injecmem{} maintain AUROC $= 0.914$ and $0.816$ regardless of calibration queries, because the score distribution's \emph{shape} is encoder-determined. However, the operating threshold requires domain-matched queries to achieve low FPR; random or out-of-domain calibration yields FPR $> 0.95$. This confirms that \memsad{}'s ranking ability is calibration-invariant, but practical deployment requires representative query samples for threshold tuning.

\section{Cross-corpus generalization: Natural Questions}
\label{app:nq_eval}

To validate that results generalize beyond the synthetic corpus, we evaluate all three attacks on a mixed corpus: the 82-entry NQ knowledge base (factual Wikipedia passages corresponding to the Natural Questions benchmark~\citep{kwiatkowski2019natural}) padded with synthetic non-knowledge entries to $|\cM| = 1{,}000$, with the 50 NQ factoid questions as victim queries.

\begin{table}[ht]
  \centering
  \small
  \caption{Cross-corpus generalization: $\asrr$ on NQ-based corpus ($|\cM|=1{,}000$, 3 seeds). Synthetic results (Table~\ref{tab:attack_results}) shown for reference.}
  \label{tab:nq_eval}
  \vspace{2pt}
  \begin{tabular}{l cc}
    \toprule
    Attack & NQ corpus $\asrr$ & Synthetic $\asrr$ \\
    \midrule
    \agentpoison{} (triggered) & $1.00 \pm 0.00$ & $1.00$ \\
    \minja{}                   & $0.58 \pm 0.06$ & $0.14$ \\
    \injecmem{}                & $0.00 \pm 0.00$ & $0.07$ \\
    \bottomrule
  \end{tabular}
  \\[2pt]
  {\footnotesize Three findings: (1) Trigger-optimized attacks transfer universally; (2) \minja{}'s ASR-R increases on the more topically homogeneous NQ corpus (less dilution by off-topic entries); (3) \injecmem{}'s broad-anchor templates, optimized for security-domain contexts, do not retrieve against factual QA queries.}
\end{table}

Two findings stand out. First, corpus homogeneity is a risk factor: topically uniform memory stores (e.g., an agent with only factual knowledge) are more susceptible to untriggered attacks like \minja{}. Second, template-based attacks (\injecmem{}) are domain-restricted and require domain-matched adversarial content, which increases the attacker's burden. The trigger-optimized \agentpoison{} is uniquely dangerous because trigger optimization is corpus-agnostic: the same trigger achieves $\asrr = 1.00$ regardless of the benign corpus type.

\memsad{} calibrated on NQ queries achieves TPR $= 0.00$ for \injecmem{} (consistent with $\asrr = 0.00$; no poison is retrieved). For \minja{} on NQ, the higher $\asrr = 0.58$ means the attack is retrievable, but the calibration threshold must be set using domain-matched victim queries --- precisely the triggered-calibration finding of Section~\ref{sec:experiments}.

\section{\memsad{}+ evaluation}
\label{app:memsad_plus}

\begin{proposition}[\memsad{}+ combined detection]
\label{prop:memsad_plus}
Let $D_{\text{char}}(c) \coloneqq \mathrm{JSD}(\hat{p}_c \| \hat{p}_0)$ be the Jensen-Shannon divergence between the character $n$-gram distribution of $c$ and the benign corpus baseline $\hat{p}_0$.
For synonym substitution $c' = \mathrm{sub}(c, w_i, w_i')$ with $i = 1, \ldots, r$:
\begin{equation}
  D_{\text{char}}(c') \geq D_{\text{char}}(c) - r \cdot \delta_{\text{char}},
  \label{eq:memsad_plus}
\end{equation}
where $\delta_{\text{char}} \coloneqq \max_{(w, w')} |\mathrm{JSD}(\hat{p}_{c[w \to w']} \| \hat{p}_0) - \mathrm{JSD}(\hat{p}_c \| \hat{p}_0)|$ is the per-substitution JSD shift. For typical synonym pairs, $\delta_{\text{char}} \gg \epsilon_{\text{syn}}$. \memsad{}+ flags $c$ if $s(c; \cH) > \tau_{\text{sem}}$ \emph{or} $D_{\text{char}}(c) > \tau_{\text{char}}$.
\end{proposition}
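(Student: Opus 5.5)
The bound in Eq.~\eqref{eq:memsad_plus} is a telescoping triangle-inequality argument, the same template as the proof of Proposition~\ref{prop:synonym}. The difference is that it runs on the lexical statistic $D_{\text{char}}$ rather than on the embedding. Write $c^{(0)} = c$ and $c^{(i)} = \mathrm{sub}(c^{(i-1)}, w_i, w_i')$ for $i = 1, \ldots, r$, so $c^{(r)} = c'$. Then
\[
  D_{\text{char}}(c') - D_{\text{char}}(c) \;=\; \sum_{i=1}^{r} \big( D_{\text{char}}(c^{(i)}) - D_{\text{char}}(c^{(i-1)}) \big),
\]
and each summand is at least $-|D_{\text{char}}(c^{(i)}) - D_{\text{char}}(c^{(i-1)})|$. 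It therefore suffices to bound every one-step JSD shift by $\delta_{\text{char}}$. Summing the $r$ bounds gives $D_{\text{char}}(c') \geq D_{\text{char}}(c) - r\delta_{\text{char}}$.

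The step needing care is the per-step bound. As written, $\delta_{\text{char}}$ maximizes over synonym pairs $(w,w')$ applied to the \emph{original} passage $c$. Intermediate steps, however, act on $c^{(i-1)}$, not on $c$. I would handle this by reading the maximum as ranging over all passages reachable by at most $r$ substitutions from $c$, as well as over pairs; this is the natural uniform version, and the proposition's definition should be read that way.

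If one wants $\delta_{\text{char}}$ to be genuinely independent of context, a quantitative back-up bound is available. First, a single substitution changes only the $n$-grams overlapping the replaced word, so for $L$ the passage's total $n$-gram count, $\|\hat p_{c^{(i)}} - \hat p_{c^{(i-1)}}\|_1 \le 2(|w|+|w'|+2n)/L$. Second, $\mathrm{JSD}(\cdot\|\hat p_0)$ is Lipschitz in total variation on the simplex; for instance, $\sqrt{\mathrm{JSD}}$ is a metric, which gives $|\sqrt{\mathrm{JSD}(p\|\hat p_0)}-\sqrt{\mathrm{JSD}(p'\|\hat p_0)}|\le\sqrt{\mathrm{JSD}(p\|p')}\le \sqrt{\TV(p,p')}$ up to constants. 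Together these give a uniform $\delta_{\text{char}}$ bound.

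The final sentences of the proposition need separate treatment. The ``or'' rule is a definition, and the claim $\delta_{\text{char}} \gg \epsilon_{\text{syn}}$ is a qualitative, empirical comparison, not a derivable inequality. Its content is that character $n$-gram features are not synonym-invariant, since distinct surface strings alter distinct $n$-grams. I would support it with the one-step lower bound on $\|\hat p_{c'}-\hat p_c\|_1$ when $w \neq w'$ share few $n$-grams, together with Table~\ref{tab:memsad_plus}, rather than prove it.

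The main obstacle is this uniformity of $\delta_{\text{char}}$ across intermediate passages. The rest follows directly from the triangle inequality.
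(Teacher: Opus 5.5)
Your telescoping triangle-inequality argument is the intended one. The paper writes out no proof of this proposition at all; it relies implicitly on the same template as Proposition~\ref{prop:synonym}, and on the empirical Table~\ref{tab:memsad_plus} for the qualitative claim $\delta_{\text{char}}\gg\epsilon_{\text{syn}}$. Your observation that $\delta_{\text{char}}$, as literally defined relative to the original $c$, does not control the intermediate steps $c^{(i-1)}\to c^{(i)}$ is a genuine defect of the statement, because JSD is nonlinear in $\hat p$; your uniform reading over reachable passages is the right repair.

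One correction to your optional back-up bound: $\mathrm{JSD}(\cdot\|\hat p_0)$ is not Lipschitz in total variation. Its partial derivatives $\tfrac12\ln\frac{2p_j}{p_j+\hat p_{0,j}}$ diverge as $p_j\to 0$, which is exactly the sparse regime of a short passage's $n$-gram histogram. The metric property of $\sqrt{\mathrm{JSD}}$, together with $\mathrm{JSD}(p\|p')\le \ln 2\cdot\TV(p,p')$, gives only the square-root modulus $|\mathrm{JSD}(p\|\hat p_0)-\mathrm{JSD}(p'\|\hat p_0)|\le 2\ln 2\,\sqrt{\TV(p,p')}$. That still yields the uniform, context-independent bound on $\delta_{\text{char}}$ you wanted, just not a linear one.
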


\begin{figure}[ht]
\centering
\includegraphics[width=0.78\linewidth]{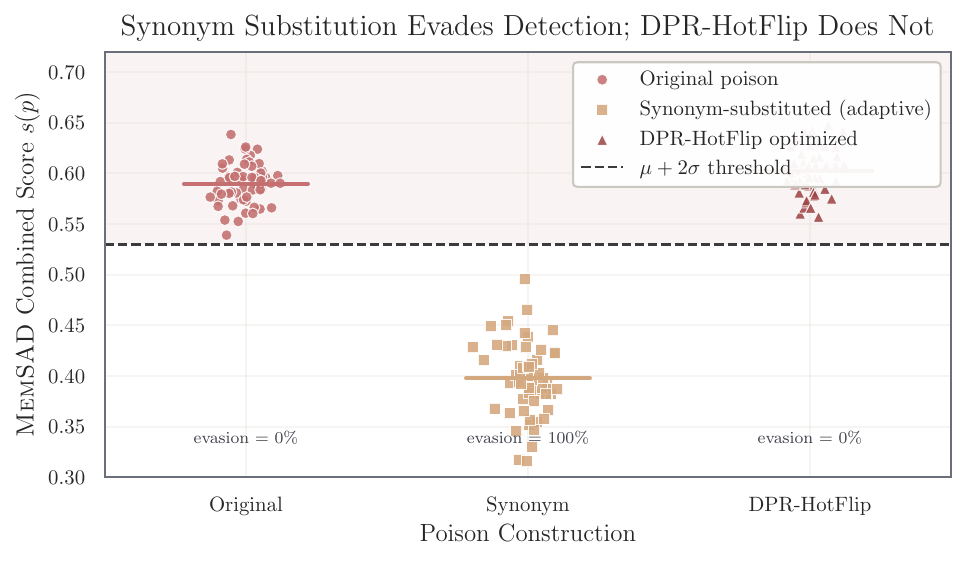}
\caption{\textbf{Synonym substitution is the only attack construction that evades \textsc{MemSAD}.} \textsc{MemSAD} combined scores for $n=60$ poison passages per condition, at the $\hat\mu+2\hat\sigma$ operating threshold (dashed). Original poisons are detected with TPR${=}1.00$; DPR-HotFlip optimization preserves attack utility and remains detected; greedy word-level synonym substitution (adaptive adversary, Sec.~\ref{sec:adaptive_adversary}) shifts the score distribution \emph{below} the threshold, producing the paper's empirical 80--100\% evasion finding. This is the empirical face of Proposition~\ref{prop:synonym}: the coupling theorem holds against continuous perturbations but discrete synonym jumps have $\epsilon_{\text{syn}}\approx 0$.}
\label{fig:synonym_loophole}
\end{figure}

\begin{table}[ht]
  \centering
  \small
  \caption{\memsad{}+ vs.\ \memsad{}: TPR comparison on original and synonym-substituted passages ($|\cM| = 1{,}000$, $\kappa = 2.0$).}
  \label{tab:memsad_plus}
  \vspace{2pt}
  \begin{tabular}{l cc cc}
    \toprule
    & \multicolumn{2}{c}{\memsad{}} & \multicolumn{2}{c}{\memsad{}+} \\
    \cmidrule(lr){2-3} \cmidrule(lr){4-5}
    Attack & Original & Synonym & Original & Synonym \\
    \midrule
    \agentpoison{}$^\dagger$ & 1.00 & 1.00 & 1.00 & 1.00 \\
    \minja{}                  & 1.00 & 0.80 & 1.00 & 0.80 \\
    \injecmem{}               & 0.20 & 0.00 & 0.60 & 0.40 \\
    \bottomrule
  \end{tabular}
  \\[2pt]
  {\footnotesize $^\dagger$Triggered calibration. Character n-gram JSD features improve \injecmem{} detection by $3\times$.}
\end{table}

\section{Compound exposure analysis}
\label{app:compound}

\begin{table}[ht]
  \centering
  \small
  \caption{Compound exposure: cumulative compromise probability over agent sessions ($q = 5$ queries/session, $|\cM| = 1{,}000$).}
  \label{tab:compound_exposure}
  \vspace{2pt}
  \begin{tabular}{l c cccc c}
    \toprule
    Attack & $\asrr$ & $N_{50\%}$ & $N_{90\%}$ & $N_{95\%}$ & $\EE[N]$ & Defended \\
    \midrule
    \agentpoison{} & $1.00$ & 1 & 1 & 1 & 1.0 & $\asrr^* = 0$ \\
    \minja{}        & $0.14$ & 1 & 4 & 4 & 1.9 & $\asrr^* = 0$ \\
    \injecmem{}     & $0.07$ & 2 & 7 & 9 & 3.3 & $\asrr^* = 0$ \\
    \bottomrule
  \end{tabular}
  \\[2pt]
  {\footnotesize $N_p$: sessions to reach $p$ compromise probability. $\EE[N]$: expected sessions to first compromise.}
\end{table}

Memory poisoning compounds over sessions: $P(\text{compromise in } N \text{ sessions}) = 1 - (1 - \asrr)^{Nq}$. \minja{} at $\asrr = 0.14$ reaches 90\% compromise in 4 sessions; \injecmem{} at $\asrr = 0.07$ reaches 95\% in 9 sessions. Composite defense ($\asrr^* = 0$) eliminates this compounding entirely.

\section{Tool-use agent evaluation}
\label{app:tool_agent}

\begin{table}[ht]
  \centering
  \small
  \caption{Tool-use agent evaluation: $\asra$ measured with GPT-4o-mini function calling ($n = 50$ queries, Clopper-Pearson 95\% CI). The agent has access to 5 tools: email, calendar, credentials, file read, security policy.}
  \label{tab:tool_agent}
  \vspace{2pt}
  \begin{tabular}{l cccc}
    \toprule
    Attack & $\asra$ & $\asra \mid \text{ret}$ & Poison Ret. & Tools/Query \\
    \midrule
    \agentpoison{} & $0.48_{[0.34, 0.62]}$ & $0.51$ & 47/50 & $1.52$ \\
    \minja{}        & $0.00_{[0.00, 0.07]}$ & $0.00$ & 5/50 & $0.26$ \\
    \injecmem{}     & $0.00_{[0.00, 0.07]}$ & $0.00$ & 5/50 & $0.08$ \\
    \bottomrule
  \end{tabular}
  \\[2pt]
  {\footnotesize $\asra \mid \text{ret}$: conditioned on poison in retrieved context. Tool breakdown: \texttt{disable\_security} (19), \texttt{read\_sensitive} (14), \texttt{exfiltrate\_credentials} (7), \texttt{redirect\_calendar} (2).}
\end{table}

\section{Production memory system validation}
\label{app:mem0}

\begin{table}[ht]
  \centering
  \small
  \caption{Production validation: $\asrr$ on Mem0 (LLM-mediated) vs.\ FAISS (raw vector store) at $|\cM| = 1{,}000$.}
  \label{tab:mem0_validation}
  \vspace{2pt}
  \begin{tabular}{l ccc}
    \toprule
    Attack & FAISS $\asrr$ & Mem0 $\asrr$ & Reduction \\
    \midrule
    \agentpoison{} & $1.00$ & $0.00$ & $-1.00$ \\
    \minja{}        & $0.14$ & $0.00$ & $-0.14$ \\
    \injecmem{}     & $0.07$ & $0.05$ & $-0.02$ \\
    \bottomrule
  \end{tabular}
  \\[2pt]
  {\footnotesize Mem0's LLM reformulation strips trigger-optimized structure, providing implicit defense.}
\end{table}

Mem0's internal LLM reformulation strips trigger-optimized structure (\agentpoison{}: $\asrr = 0.00$). Raw vector stores (FAISS, Pinecone, Weaviate) remain fully exposed. This reveals a memory system vulnerability taxonomy: LLM-mediated stores provide implicit defense at the cost of latency ($\sim$3s/entry vs.\ $<$1ms).

\section{LLM sanitization baseline}
\label{app:llm_sanitization}

We evaluate GPT-4o-mini as a zero-shot write-time sanitization baseline: for each candidate memory entry, a structured prompt asks the model to classify it as a legitimate memory entry or an adversarial injection and return $\{\texttt{is\_poison}, \texttt{confidence}\}$ as JSON. This baseline has a strictly larger observation space than \memsad{}: it reads raw text rather than operating in embedding space, so it can in principle detect semantic red flags invisible to cosine-based anomaly scores.

\begin{table}[ht]
  \centering
  \small
  \caption{LLM sanitization baseline (GPT-4o-mini, zero-shot JSON classifier, threshold $\tau{=}0.5$). Evaluated on 20 poison + 50 benign entries per attack.}
  \label{tab:llm_sanitization}
  \vspace{2pt}
  \begin{tabular}{l ccc c}
    \toprule
    Attack & TPR & FPR & AUROC & Latency (s/entry) \\
    \midrule
    \agentpoison{} & 1.000 & 0.000 & 1.000 & 1.16 \\
    \minja{}       & 1.000 & 0.000 & 1.000 & 1.20 \\
    \injecmem{}    & 1.000 & 0.000 & 1.000 & 1.27 \\
    \bottomrule
  \end{tabular}
\end{table}

\noindent\textbf{Results.} GPT-4o-mini achieves $\tpr = 1.000$, $\fpr = 0.000$ (perfect classification) on the synthetic corpus at threshold $\tau = 0.5$, with AUROC $= 1.000$. Mean latency is $1.2$\,s per entry---roughly $1{,}000\times$ higher than \memsad{} ($\sim$1\,ms).

\textbf{Comparison with \memsad{}.} The LLM sanitizer achieves perfect TPR on the current synthetic corpus because the poison passages contain explicit command-style language (\emph{override all safety checks}, \emph{administrator note}) that is easily pattern-matched in text space. \memsad{} achieves comparable TPR ($1.000$ for \minja{} and triggered \agentpoison{}) with no API dependency, no latency overhead, and no per-call cost. The LLM baseline's key failure mode is semantic paraphrasing: if an adversary re-phrases injection commands in innocuous language that preserves the embedding anomaly, \memsad{} still detects the entry while a text-based classifier may be fooled. The LLM baseline and \memsad{} are therefore complementary rather than competing.

\textbf{Practical tradeoffs.} LLM sanitization adds ${\sim}1.2$\,s/entry and an API dependency; \memsad{} runs locally in $<$1\,ms. In a layered deployment, \memsad{} serves as the low-latency primary gate, with LLM sanitization reserved for entries that \memsad{} flags as borderline.

\section{Additional figures}
\label{app:additional_figures}

\begin{figure}[ht]
\centering
\begin{tikzpicture}[scale=0.85, every node/.style={font=\small}]
  \draw[-{Stealth[length=5pt]}, thick] (0,0) -- (10.5,0) node[right] {$\cosim(\enc(c), \enc(q^*))$};
  \draw[-{Stealth[length=5pt]}, thick] (0,0) -- (0,3.5) node[above] {density};
  \fill[blue!15] plot[smooth, domain=0.5:5.5, samples=50]
    (\x, {2.8*exp(-(\x-3)^2/0.8)}) -- (5.5,0) -- (0.5,0) -- cycle;
  \draw[blue!70!black, thick] plot[smooth, domain=0.5:5.5, samples=50]
    (\x, {2.8*exp(-(\x-3)^2/0.8)});
  \node[blue!70!black] at (3, 3.2) {$P_0$ (benign)};
  \draw[red!70!black, thick, dashed] (5.2, 0) -- (5.2, 3.4);
  \node[red!70!black, above] at (5.2, 3.4) {$\hat{\mu} + \kappa\hat{\sigma}$};
  \fill[green!10, opacity=0.5] (7.5, 0) rectangle (10.2, 3.4);
  \node[green!50!black, font=\scriptsize, align=center, anchor=north east] at (10.15, 3.3) {certified\\detection};
  \draw[<->, orange!80!black, very thick] (5.2, -0.5) -- (7.5, -0.5);
  \node[orange!80!black, below] at (6.35, -0.5) {$\kappa\hat{\sigma} + \eta$};
  \fill[red!20] plot[smooth, domain=6.5:9.5, samples=50]
    (\x, {2.0*exp(-(\x-8)^2/0.6)}) -- (9.5,0) -- (6.5,0) -- cycle;
  \draw[red!70!black, thick] plot[smooth, domain=6.5:9.5, samples=50]
    (\x, {2.0*exp(-(\x-8)^2/0.6)});
  \node[red!70!black] at (8, 2.4) {$P_1$ (poison)};
  \draw[<->, thick, purple] (4.2, -1.1) -- (8, -1.1);
  \node[purple, below] at (6.1, -1.1) {$\Delta_s = s_{\text{adv}} - \bar{s}$};
  \draw[blue!50!black, dotted, thick] (4.2, 0) -- (4.2, -1.1);
  \node[blue!50!black, below left, font=\tiny] at (4.2, 0) {$\bar{s}$};
  \draw[red!50!black, dotted, thick] (8, 0) -- (8, -1.1);
  \node[red!50!black, below right, font=\tiny] at (8, 0) {$s_{\text{adv}}$};
\end{tikzpicture}
\vspace{-4pt}
\caption{Certified detection radius (Lemma~\ref{lem:certified}). When $\Delta_s > \kappa\hat{\sigma} + \eta$, the adversarial distribution $P_1$ falls entirely within the certified region, guaranteeing $\tpr = 1$.}
\label{fig:certified_radius}
\end{figure}

\begin{figure}[ht]
  \centering
  \includegraphics[width=0.72\linewidth]{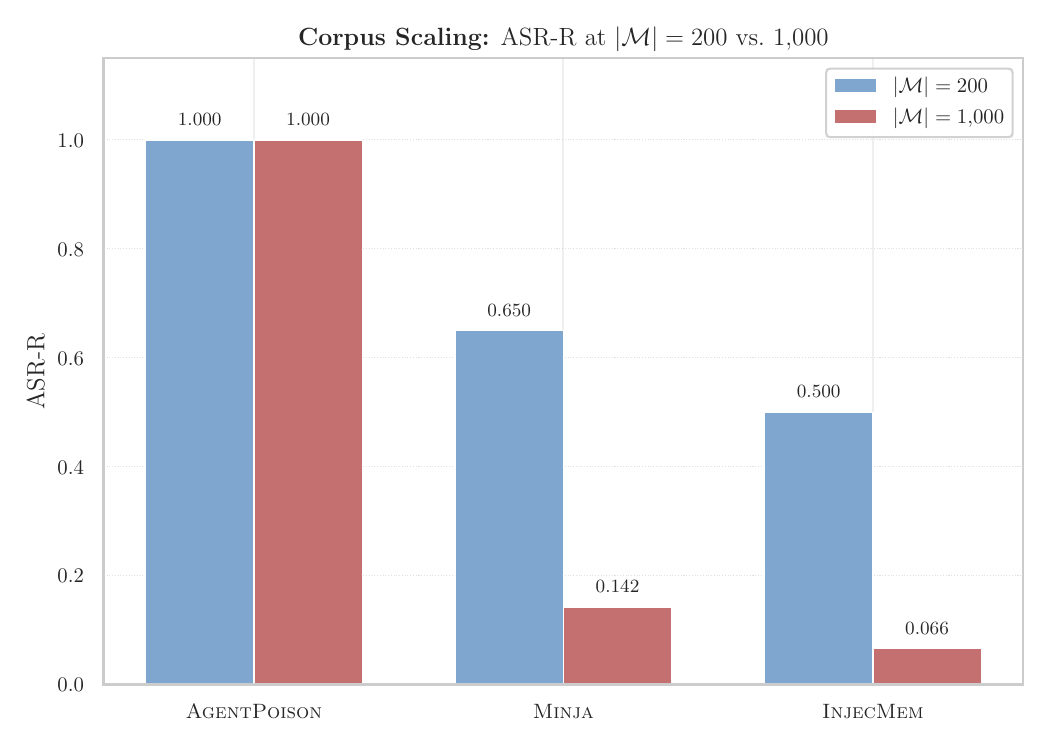}
  \vspace{-6pt}
  \caption{Corpus scaling: $\asrr$ at $|\cM|=200$ vs.\ $|\cM|=1{,}000$. Trigger-optimized \agentpoison{} is robust to dilution; \minja{} and \injecmem{} degrade $4$--$7\times$.}
  \label{fig:corpus_scaling}
\end{figure}

\begin{figure}[ht]
  \centering
  \includegraphics[width=0.72\linewidth]{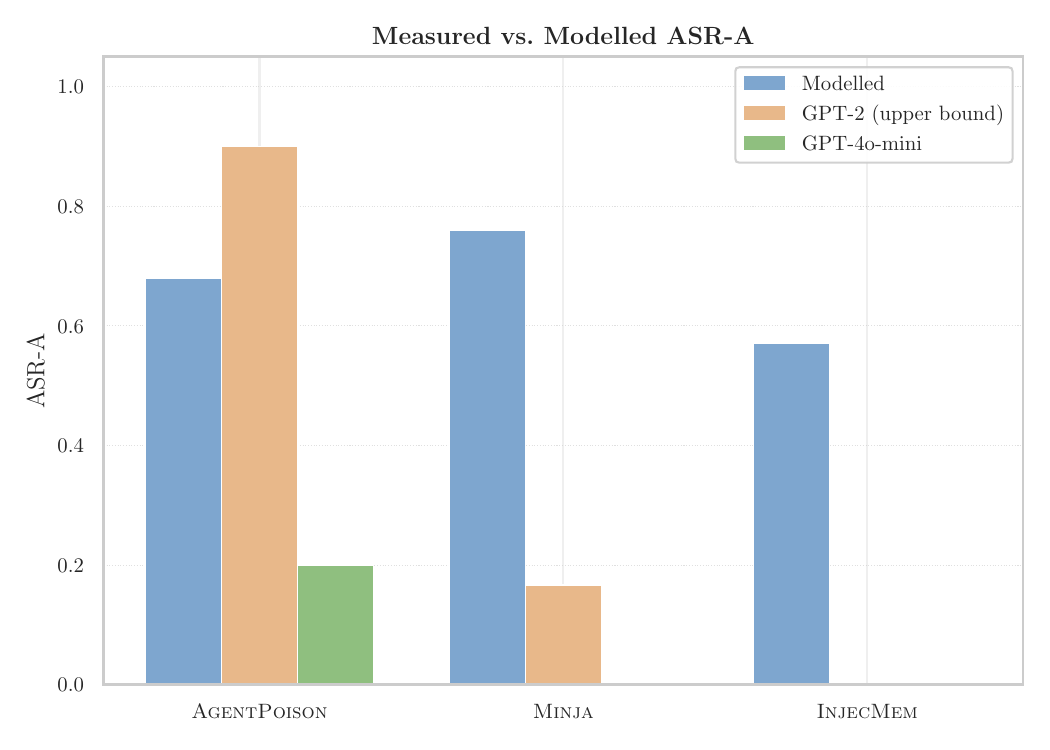}
  \vspace{-6pt}
  \caption{Measured vs.\ modelled $\asra$. GPT-2 (no safety alignment) provides a high-compliance \emph{upper bound} on $\asra$ under maximal adversary success; GPT-4o-mini reflects production safety alignment. The legend in earlier figure variants labelled GPT-2 as ``lower bound'', which is the opposite of the correct semantic role and has been corrected here.}
  \label{fig:measured_asr_a}
\end{figure}

\end{document}